\numberwithin{equation}{section}
\newtheorem{theorem}{Theorem}[section]
\newtheorem{lemma}[theorem]{Lemma}
\newtheorem{proposition}[theorem]{Proposition}
\newtheorem{remark}[theorem]{Remark}
\newtheorem{conjecture}[theorem]{Conjecture}
\newtheorem{example}[theorem]{Example}
\numberwithin{equation}{section}
\newcommand{\gb}{\beta}
\newcommand{\gd}{\delta}
\newcommand{\gep}{\varepsilon}
\newcommand{\gk}{\kappa}
\newcommand{\go}{\omega}
\newcommand{\gO}{\Omega}
\newcommand{\gl}{\lambda}
\newcommand{\bbN}{\mathbb{N}}
\newcommand{\bbZ}{\mathbb{Z}}
\newcommand{\bbZd}{\mathbb{Z}^{d}}
\newcommand{\bbP}{\mathbb{P}}
\newcommand{\bbE}{\mathbb{E}}
\newcommand{\cB}{\mathcal{B}}
\newcommand{\cC}{\mathcal{C}}
\newcommand{\cD}{\mathcal{D}}
\newcommand{\cE}{\mathcal{E}}
\newcommand{\cN}{\mathcal{N}}
\newcommand{\cQ}{\mathcal{Q}}
\newcommand{\bE}{{\ensuremath{\mathbf E}} }
\newcommand{\ind}{\mathbf{1}}
\newcommand{\dd}{\mathrm{d}}
\newcommand{\wsaw}{\mathrm{wsaw}}
\newcommand{\R}{\mathbb{R}}
\newcommand{\Z}{\mathbb{Z}}
\newcommand{\N}{\mathbb{N}}
\newcommand{\bbVar}{{\ensuremath{\mathbb{V}\mathrm{ar}}} }
\newcommand{\be}{\begin{equation}}
\newcommand{\ee}{\end{equation}}
\begin{document}

\title{Annealed scaling for a charged polymer 
in dimensions two and higher}

\author{Q.\ Berger}
\address{LPMA,  Universit\'e Pierre et Marie Curie, case 188, 4 Place Jussieu, 
75005 Paris Cedex, France.}
\email{quentin.berger@upmc.fr}

\author{F.\ den Hollander}
\address{Mathematical Institute, Leiden University, P.O.\ Box 9512,
2300 RA Leiden, The Netherlands.}
\email{denholla@math.leidenuniv.nl}
\author{J.\ Poisat}
\address{Universit\'e Paris-Dauphine, PSL Research University, CNRS, 
UMR [7534], CEREMADE, Place du Mar\'echal de Lattre de Tassigny, 
75016 Paris, France.}
\email{poisat@ceremade.dauphine.fr}

\begin{abstract}
This paper considers an undirected polymer chain on $\bbZd$, $d \geq 2$, with i.i.d.\ 
random charges attached to its constituent monomers. Each self-intersection of the 
polymer chain contributes an energy to the interaction Hamiltonian that is equal to the 
product of the charges of the two monomers that meet. The joint probability distribution 
for the polymer chain and the charges is given by the Gibbs distribution associated 
with the interaction Hamiltonian. The object of interest is the \emph{annealed free 
energy} per monomer in the limit as the length $n$ of the polymer chain tends to infinity. 

We show that there is a critical curve in the parameter plane spanned by the charge 
bias and the inverse temperature separating an \emph{extended phase} from a 
\emph{collapsed phase}. We derive the scaling of the critical curve for small and 
for large charge bias and the scaling of the annealed free energy for small inverse 
temperature. We show that in a subset of the collapsed phase the polymer chain is 
\emph{subdiffusive}, namely, on scale $(n/\log n)^{1/(d+2)}$ it moves like a Brownian 
motion conditioned to stay inside a ball with a deterministic radius and a randomly 
shifted center. We expect this scaling to hold throughout the collapsed phase. We 
further expect that in the extended phase the polymer chain scales like a weakly 
self-avoiding walk.

The scaling of the critical curve for small charge bias and the scaling of the annealed 
free energy for small inverse temperature are both anomalous. Proofs are based on 
a detailed analysis for simple random walk of the downward large deviations of the 
self-intersection local time and the upward large deviations of the range. Part of our 
scaling results are rough. We formulate conjectures under which they can be sharpened.
The existence of the free energy remains an open problem, which we are able to settle 
in a subset of the collapsed phase for a subclass of charge distributions.     
\end{abstract}

\keywords{Charged polymer, annealed free energy, phase transition, collapsed 
phase, extended phase, scaling, large deviations, weakly self-avoiding walk, 
self-intersection local time.}
\subjclass[2010]{60K37; 82B41; 82B44}
\thanks{The research in this paper was supported through ERC Advanced Grant 
267356-VARIS}


\date{\today}

\maketitle

\newpage

\tableofcontents

\newpage

\section{Introduction and main results}
\label{s:intro}

In Caravenna, den Hollander, P\'etr\'elis and Poisat~\cite{CdHPP16}, a detailed 
study was carried out of the annealed scaling properties of an undirected 
polymer chain on $\bbZ$ whose monomers carry i.i.d.\ random charges, in 
the limit as the length $n$ of the polymer chain tends to infinity. With the 
help of the \emph{Ray-Knight representation} for the local times of simple 
random walk on $\bbZ$, a \emph{spectral representation} for the annealed 
free energy per monomer was derived. This was used to prove that there 
is a critical curve in the parameter plane spanned by the charge bias and 
the inverse temperature, separating a \emph{ballistic phase} from a 
\emph{subballistic phase}. Various properties of the phase diagram were derived, 
including scaling properties of the critical curve for small and for large charge bias, 
and of the annealed free energy for small inverse temperature and near the 
critical curve. In addition, laws of large numbers, central limit theorems and 
large deviation principles were derived for the empirical speed and the empirical 
charge of the polymer chain in the limit as $n\to\infty$. The phase transition was 
found to be of \emph{first order}, with the limiting speed and charge making a 
jump at the critical curve. The large deviation rate functions were found to have 
\emph{linear pieces}, indicating the occurrence of mixed optimal strategies where 
part of the polymer is subballistic and the remaining part is ballistic.

The Ray-Knight representation is no longer available for $\bbZd$, $d \geq 2$. The 
goal of the present paper is to investigate what can be said with the help of other 
tools. In Section~\ref{ss:model} we define the model, which was originally introduced 
in Kantor and Kardar~\cite{KK91}. In Section~\ref{ss:mainthm} we state our main 
theorems (Theorems~\ref{thm:fe}, \ref{thm:criticalpoint} and \ref{thm:hightemp} below). 
In Section~\ref{ss:disc} we place these theorems in their proper context. In 
Section~\ref{ss:outline} we outline the remainder of the paper and list some open 
questions.

What makes the charged polymer model challenging is that the \emph{interaction is 
both attractive and repulsive}. This places it outside the range of models that have 
been studied with the help of subadditivity techniques (see Ioffe~\cite{I15} for an 
overview), and makes it into a testbed for the development of new approaches. The
\emph{collapse transition} of a charged polymer can be seen as a simplified version 
of the \emph{folding transition} of a protein. Interactions between different parts of 
the protein cause it to fold into different configurations depending on the temperature.  

Throughout the paper we use the notation $\N=\{1,2,\dots\}$ and $\N_0=\N\cup\{0\}$.

\subsection{Model and assumptions}
\label{ss:model}
  
Let $S=(S_i)_{i\in\N_0}$ be simple random walk on $\Z^d$, $d \geq 1$, starting at $S_0=0$.
The path $S$ models the configuration of the polymer chain, i.e., $S_i$ is the location of 
monomer $i$. We use the letters $P$ and $E$ for probability and expectation with respect 
to $S$. 

Let $\omega=(\omega_i)_{i\in\N}$ be i.i.d.\ random variables taking values in $\R$. The 
sequence $\omega$ models the charges along the polymer chain, i.e., $\omega_i$ 
is the charge of monomer $i$ (see Fig.~\ref{fig-charpol}). We use the letters $\bbP$ and 
$\bbE$ for probability and expectation with respect to $\omega$, and assume that
\begin{equation}
\label{eq:omegacond1} 
M(\delta) = \bbE[e^{\delta\omega_1}]<\infty \quad \forall\,\delta \in \R.
\end{equation}
Without loss of generality (see \eqref{gstardef} below) we further assume that 
\begin{equation}
\label{eq:omegacond2} 
\bbE[\omega_1]=0, \qquad \bbE[\omega_1^2]=1.
\end{equation}
To allow for biased charges, we use the parameter $\delta$ to tilt $\bbP$, namely, we write 
$\bbP^\delta$ for the i.i.d.\ law of $\omega$ with marginal
\begin{equation}
\label{eq:Pdeltadef}
\bbP^\delta(\dd \omega_1) = \frac{e^{\delta \omega_1}\,\bbP(\dd \omega_1)}{M(\delta)}.
\end{equation} 
Without loss of generality we may take $\delta \in [0,\infty)$. Note that $\bbE^\delta[\omega_1]
=M'(\delta)/M(\delta)$. 

\begin{example} 
{\rm If the charges are $+1$ with probability $p$ and $-1$ with probability $1-p$ 
for some $p\in (0,1)$, then $\bbP=[\tfrac12(\delta_{-1}+\delta_{+1})]^{\otimes\N}$ and 
$\delta=\tfrac12\log(\frac{p}{1-p})$.} \hfill \qed
\end{example}

Let $\Pi$ denote the set of nearest-neighbour paths on $\Z^d$ starting at $0$. Given $n\in\N$, 
we associate with each $(\omega,S) \in \R^\N \times \Pi$ an energy given by the Hamiltonian 
(see Fig.~\ref{fig-charpol})
\begin{equation}
\label{eq:def.ham}
H_n^\omega(S) = \sum_{1 \leq i < j \leq n} \omega_i \omega_j\, \ind_{\{S_i=S_j\}}.
\end{equation} 
Let $\beta\in (0,\infty)$ denote the inverse temperature. Throughout the sequel the relevant 
space for the pair of parameters $(\delta,\beta)$ is the quadrant
\begin{equation}
\cQ = [0,\infty) \times (0,\infty).
\end{equation}
Given $(\delta,\beta) \in \cQ$, the \emph{annealed polymer measure of length} $n$ is the
Gibbs measure $\bbP_n^{\delta,\beta}$ defined as
\begin{equation}
\label{eq:def.polmeasure}
\frac{\dd\bbP_n^{\delta,\beta}}{\dd(\bbP^\delta \times P)}(\omega,S) 
= \frac{1}{\bbZ_n^{\delta,\beta}}\,e^{-\beta H_n^\omega(S)},
\qquad (\omega,S) \in \R^\N \times \Pi,
\end{equation} 
where
\begin{equation}
\label{eq:def.annpartfunc}
\Z_n^{\delta,\beta} = (\bbE^\delta \times E)\left[e^{-\beta H_n^\omega(S)}\right]
\end{equation}
is the \emph{annealed partition function of length} $n$. The measure $\bbP_n^{\delta,\beta}$ 
is the joint probability distribution for the polymer chain and the charges at charge bias $\delta$
and inverse temperature $\beta$, when the polymer chain has length $n$.

\begin{figure}[htbp]
\vspace{0.5cm}
\begin{center}
\includegraphics[scale = 0.4]{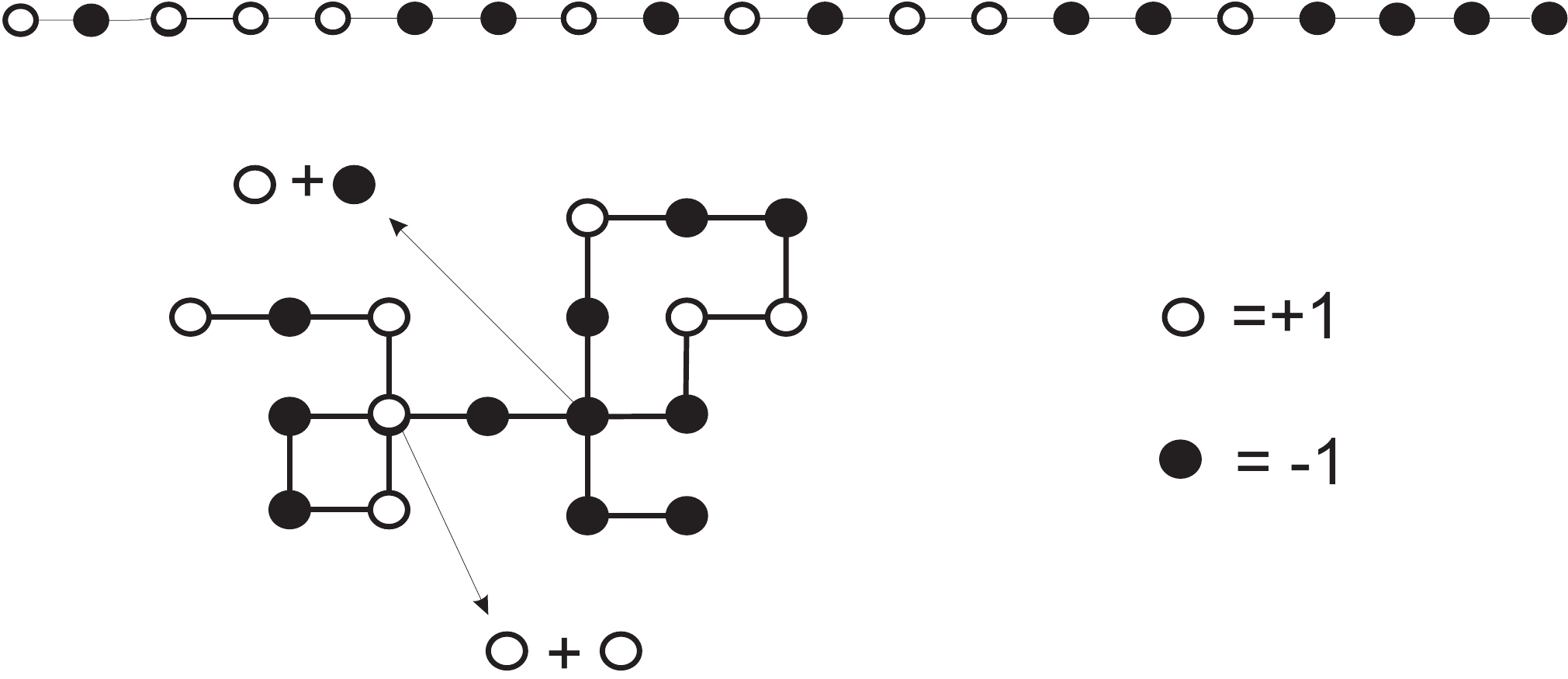}
\end{center}
\caption{\emph{Top:} A polymer chain of length $n=20$ carrying $(\pm 1)$-valued random 
charges. \emph{Bottom:} The charges only interact at self-intersections: in the picture 
monomers $i=4,\,j=8$ meet and repel each other, while monomers $i=10,\,j=18$ meet
and attract each other.}
\label{fig-charpol}
\end{figure}

In what follows, instead of \eqref{eq:def.ham} we will work with the Hamiltonian 
\begin{equation}
\label{eq:def.hamiltonian}
H_n^\omega(S) 
= \sum_{1 \leq i,j \leq n} \omega_i\omega_j\,\ind_{\{S_i=S_j\}}
= \sum_{x\in \Z^d} \left(\sum_{i=1}^n \omega_i\, \ind_{\{S_i=x\}}\right)^2.
\end{equation} 
The sum under the square is the local time of $S$ at site $x$ weighted by the charges that are 
encountered in $\omega$. The change from \eqref{eq:def.ham} to \eqref{eq:def.hamiltonian} 
amounts to replacing $\beta$ by $2\beta$ (to add the terms with $i>j$) and changing the 
charge bias (to add the terms with $i=j$). The latter corresponds to tilting by $\delta\omega_1
+\beta\omega_1^2$ instead of $\delta\omega_1$ in \eqref{eq:Pdeltadef}, which is the same
as shifting $\delta$ by a value that depends on $\delta$ and $\beta$. 

The expression in \eqref{eq:def.annpartfunc} can be rewritten as 
\begin{equation}
\label{Zndbdef}
\bbZ_n^{\gd,\gb} = E \bigg[ \prod_{x\in\bbZd} g_{\gd,\gb}\big(\ell_n(x)\big) \bigg],
\end{equation}
where $\ell_n(x)=\sum_{i=1}^n \ind_{\{S_i=x\}}$ is the local time at site $x$ up to time $n$, and 
\begin{equation}
\label{gdbdef}
g_{\gd,\gb}(\ell) = \bbE^{\gd}\big[\exp(-\gb\gO_{\ell}^2)\big], 
\qquad \gO_{\ell} = \sum_{i=1}^{\ell} \go_i, \qquad \ell \in \N_0\, .
\end{equation}
The \emph{annealed free energy} per monomer is defined by
\begin{equation}
\label{felim}
F(\gd,\gb) = \limsup_{n\to\infty} \frac1n \log \bbZ_n^{\gd,\gb}.
\end{equation}

\begin{remark}
{\rm We expect, but are unable to prove, that the limes superior in \eqref{felim} is a limit. 
A better name for $F$ would therefore be the \emph{pseudo annealed free energy} per 
monomer, but we will not insist on terminology. Convergence appears to be hard to settle, 
due to the competition between attractive and repulsive interactions. Nonetheless, we are 
able to prove convergence for large enough $\beta$ and for charge distributions that are
non-lattice with a bounded density (see Theorem~\ref{thm:conv_free_energy} below).}  
\hfill \qed
\end{remark}

\subsection{Main theorems}
\label{ss:mainthm}

Our first theorem provides relevant upper and lower bounds on $F$. Abbreviate $f(\gd) 
= - \log M(\gd) \in (-\infty,0]$.

\begin{theorem}
\label{thm:fe}
The limes superior in \eqref{felim} takes values in $(-\infty,0]$ and satisfies the inequality 
$F(\gd,\gb) \geq f(\gd)$. \hfill \qed
\end{theorem}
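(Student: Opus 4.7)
The upper bound $F(\gd,\gb)\le 0$ is immediate: the second form of the Hamiltonian in \eqref{eq:def.hamiltonian} shows $H_n^\omega(S)\ge 0$, hence $\bbZ_n^{\gd,\gb}\le 1$.

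For the lower bound $F(\gd,\gb)\ge f(\gd)$, the plan is to combine the representation \eqref{Zndbdef}--\eqref{gdbdef} with a confinement strategy for $S$. Using $\bbP^\gd(\dd\omega)=(e^{\gd\omega}/M(\gd))\,\bbP(\dd\omega)$ one rewrites
\[
g_{\gd,\gb}(\ell)=M(\gd)^{-\ell}\,h(\ell), \qquad h(\ell):=\bbE\bigl[e^{\gd\gO_\ell-\gb\gO_\ell^2}\bigr],
\]
and since $\sum_x \ell_n(x)=n$,
\[
\bbZ_n^{\gd,\gb}=M(\gd)^{-n}\,E\Bigl[\prod_x h(\ell_n(x))\Bigr].
\]
It therefore suffices to show that the $S$-expectation on the right is at least $e^{-o(n)}$.

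The first step is the pointwise estimate $h(\ell)\ge c_1(1+\ell)^{-1/2}$ for all $\ell\in\N_0$. Fixing $K>0$ large enough that $\bbP(|\omega_1|\le K)>0$ (which holds by Chebyshev using \eqref{eq:omegacond2}), one writes
\[
h(\ell)\ge e^{-|\gd|K-\gb K^2}\,\bbP(|\gO_\ell|\le K),
\]
and invokes a local central limit theorem for the partial sums $\gO_\ell$, valid under \eqref{eq:omegacond1}--\eqref{eq:omegacond2}. The second step confines the walk: for $L_n:=(n/\log n)^{1/d}$ let
\[
\cE_n:=\bigl\{S_i\in[-L_n/2,L_n/2]^d \text{ for all } 1\le i\le n\bigr\}.
\]
The standard Dirichlet-eigenvalue lower bound for simple random walk in a box gives
\[
P(\cE_n)\ge \exp\!\bigl(-Cn/L_n^2\bigr)=\exp\!\bigl(-Cn^{1-2/d}(\log n)^{2/d}\bigr)=e^{-o(n)} \quad\text{for every } d\ge 2.
\]
On $\cE_n$, the range $R_n:=|\{x:\ell_n(x)\ge 1\}|$ is bounded by $L_n^d=n/\log n$, and AM-GM applied to $\sum_{x:\ell_n(x)\ge 1}(1+\ell_n(x))=R_n+n$ yields
\[
\prod_x h(\ell_n(x))\ge c_1^{R_n}\,(1+n/R_n)^{-R_n/2}.
\]
Since $R_n\le n/\log n$, both $R_n\log c_1$ and $(R_n/2)\log(1+n/R_n)\le (n/(2\log n))\log(1+\log n)$ are $o(n)$, so $\prod_x h(\ell_n(x))\ge e^{-o(n)}$ uniformly on $\cE_n$. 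Putting the pieces together,
\[
\bbZ_n^{\gd,\gb}\ge M(\gd)^{-n}\,P(\cE_n)\,e^{-o(n)}\ge M(\gd)^{-n}\,e^{-o(n)},
\]
and taking $\tfrac1n\log$ followed by $\limsup_{n\to\infty}$ yields $F(\gd,\gb)\ge -\log M(\gd)=f(\gd)$, which also implies $F>-\infty$.

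The main obstacle is the local central limit bound $\bbP(|\gO_\ell|\le K)\ge c/\sqrt{\ell+1}$: while classical (Gnedenko-type), it requires handling the lattice and non-lattice cases separately and checking non-degeneracy at small $\ell$ (where a direct argument from $\bbP(|\omega_1|\le K)>0$ suffices). Note also that a naive Jensen bound on $h(\ell)$ loses the $1/\sqrt{\ell}$ factor and only yields $F\ge f(\gd)-\gb$, so the local limit input is genuinely needed to reach the sharp constant.
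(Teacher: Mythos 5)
Your proof is correct and follows essentially the same strategy as the paper's: confine the walk to a sublinear region, use a pointwise lower bound $g^*_{\gd,\gb}(\ell)\gtrsim\ell^{-1/2}$, and apply Jensen/AM--GM over the range to show the product contributes only $e^{-o(n)}$. Two minor differences are worth noting. First, the paper confines to a ball of radius $\alpha_n=(n/\log n)^{1/(d+2)}$ (the scale that actually balances the two $o(n)$ costs and later appears as the subdiffusive scale of the polymer), whereas you use a box of side $(n/\log n)^{1/d}$; for this lemma the exact scale is irrelevant, and your choice makes the $o(n)$ checks slightly more transparent. Second, the paper cites Lemma~\ref{lem:estimG}(2) for $g^*_{\gd,\gb}(\ell)\asymp\ell^{-1/2}$, which is strictly speaking proved there only for $\beta=\beta(\delta)$ near the critical curve; your direct derivation via $h(\ell)\ge e^{-|\gd|K-\gb K^2}\,\bbP(|\gO_\ell|\le K)$ is cleaner and valid for every $(\gd,\gb)\in\cQ$. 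One small point: you flag the local-limit/lattice distinction as the obstacle, but it is not needed; the Berry--Esseen bound \eqref{eq:berryesseen} already gives $\bbP(|\gO_\ell|\le K)\ge c/\sqrt{\ell}$ for all large $\ell$ once $K>2A\sqrt{2\pi}$, uniformly over lattice and non-lattice laws, and small $\ell$ are handled by taking $K$ large enough that $\bbP(|\omega_1|\le K)>0$.
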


The \emph{excess annealed free energy} per monomer is defined by
\begin{equation}
\label{excessfe}
F^*(\gd,\gb) = F(\gd,\gb) - f(\gd).
\end{equation}
It follows from \eqref{Zndbdef}--\eqref{felim} that
\begin{equation}
\label{Fstardef}
F^*(\gd,\gb) = \limsup_{n\to\infty} \frac1n \log \bbZ_n^{*,\gd,\gb}
\end{equation}
with
\begin{equation}
\label{Zndbstar}
\bbZ_n^{*,\gd,\gb} = E\bigg[ \prod_{x\in\bbZd} g_{\gd,\gb}^*\big(\ell_n(x)\big) \bigg],
\end{equation}
where
\begin{equation}
\label{gstardef}
g_{\gd,\gb}^*(\ell) = \bbE\big[ \exp\big(\gd\gO_{\ell}-\gb\gO_{\ell}^2\big)\big],
\qquad \ell \in \N_0.
\end{equation}
(This expression shows why the assumption in \eqref{eq:omegacond2} respresents no 
loss of generality.) We may think of $g_{\gd,\gb}^*(\ell)$ as a single-site partition function 
for a site that is visited $\ell$ times.

\begin{example} 
{\rm If the distribution of the charges is standard normal, then
\begin{equation}
\label{Gauss}
g_{\gd,\gb}^*(\ell) = \sqrt{\frac{1}{1+2\gb\ell}}\,\exp\left[\frac{\gd^2\ell}{2(1+2\gb\ell)}\right],
\qquad \ell \in \N_0.
\end{equation}
Note that $-\log g_{\gd,\gb}^*$ can be decomposed as $-\log g_{\gd,\gb}^* 
= -\log g^{*,\mathrm{att}}_{\gd,\gb} - \log g^{*,\mathrm{rep}}_{\gd,\gb}$ with 
\begin{equation}
- \log g^{*,\mathrm{att}}_{\gd,\gb}(\ell) = \frac{1}{2}\,\log(1+2\beta\ell), \qquad 
- \log g^{*,\mathrm{rep}}_{\gd,\gb}(\ell) = - \frac{\delta^2 \ell}{2(1+2\beta\ell)}.  
\end{equation}
The former is an attractive interaction (positive concave function), the latter is a repulsive 
interaction (negative convex function).} \hfill \qed
\end{example}

\medskip
Because $F^*(\gd,\gb) \geq 0$, it is natural to define two phases:
\begin{equation}
\label{phasesdef}
\begin{aligned}
\cC &= \{(\gd,\gb)\in\cQ\colon\,F^*(\gd,\gb)=0\},\\
\cE &= \{(\gd,\gb)\in\cQ\colon\,F^*(\gd,\gb)>0\}.
\end{aligned}
\end{equation}
For reasons that will become clear later, we refer to these as the \emph{collapsed phase}, 
respectively, the \emph{extended phase}. For every $\delta \in [0,\infty)$, $\gb \mapsto 
F^*(\gd,\gb)$ is finite, non-negative, non-increasing and convex. Hence there is a critical 
threshold $\beta_c(\delta) \in [0,\infty]$ such that $\cC$ is the region on and above the 
curve and $\cE$ is the region below the curve (see Fig.~\ref{fig-critcurve}).

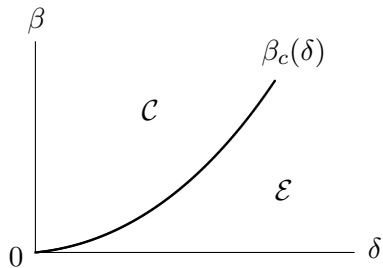
\begin{figure}[htbp]
\begin{center}
\setlength{\unitlength}{0.35cm}
\begin{picture}(12,12)(0,-2)
\put(0,0){\line(12,0){12}}
\put(0,0){\line(0,8){8}}
{\thicklines
\qbezier(0,0)(5,0.5)(9,6.5)
}
\put(-1,-.5){$0$}
\put(12.5,-0.2){$\delta$}
\put(-0.3,8.5){$\beta$}
\put(8.5,7.3){$\beta_c(\delta)$}
\put(9,2){$\cE$}
\put(4,5){$\cC$}
\end{picture}
\end{center}
\vspace{0cm}
\caption{\small Qualitative plot of the critical curve $\delta \mapsto \beta_c(\delta)$ 
where the excess free energy $F^*(\delta,\beta)$ changes from being zero
($\cC$) to being strictly positive ($\cE$). The critical curve is part of $\cC$.}
\label{fig-critcurve}
\end{figure}

Our second theorem describes the qualitative properties of the critical curve, provides scaling 
bounds for small charge bias, and identifies the asymptotics for large charge bias. Let 
\begin{equation}
\label{Qndef}
Q_n = \sum_{x\in\bbZd} \ell_n(x)^2
\end{equation} 
denote the \emph{self-intersection local time} at time $n$. A standard computation 
gives (see e.g.\ Spitzer~\cite[Section 7]{S76}), as $n\to\infty,$
\begin{equation}
\label{eq:asympQ}
E[Q_n] = \sum_{1 \leq i,j \leq n} P(S_i=S_j) \sim
\begin{cases}
\gl_2 n \log n, &d=2,\\
\gl_d n, &d \geq 3,
\end{cases}
\end{equation}
with
\begin{equation}
\label{gld}
\gl_2 = 2/\pi, \qquad \gl_d = 2G_d-1, \quad d \geq 3, 
\end{equation}
where $G_d=\sum_{n\in\N_0} P(S_n=0)$ is the Green function at the origin of simple random 
walk on $\Z^d$. A similar computation yields (see Chen~\cite[Sections 5.4--5.5]{C10}) 
\begin{equation}
\label{eq:asympQalt}
\mathrm{Var}(Q_n) = E[Q_n^2] - E[Q_n]^2 \sim 
\begin{cases}
C_2 n^2, &d=2,\\
C_3 n\log n, &d=3,\\
C_d n, &d \geq 4,
\end{cases} 
\end{equation}
with $C_d$, $d \geq 2$, computable constants. In particular, $Q_n$ satisfies the weak law 
of large numbers.

Abbreviate $m_k=\bbE[\go_1^k]$, $k\in\N$, and recall that $m_1=0$, $m_2=1$ by 
\eqref{eq:omegacond2}.

\begin{theorem}
\label{thm:criticalpoint}
{\rm (i)} $\gd \mapsto \beta_c(\gd)$ is continuous, strictly increasing and convex on 
$[0,\infty)$, with $\beta_c(0)=0$.\\
{\rm (ii)} As $\gd\downarrow 0$,
\begin{equation}
\label{betacexp}
\gb_c(\gd) = \tfrac12 \gd^2 - \tfrac13 m_3\gd^3 - \gep_\gd
\end{equation}
with
\begin{equation} 
[\underline{\kappa}+o(1)]\,\gd^4
\le
\gep_\gd \le [1+o(1)] \begin{cases}
\kappa_2 \gd^4 \log(1/\gd), &d=2,\\
\kappa_d \gd^4, &d\geq 3,
\end{cases}
\end{equation}
where
\begin{equation}
\label{kappadid}
\underline{\kappa} = \tfrac{1}{12} m_4 - \tfrac13 m_3^2,
\qquad
\gk_d =
\begin{cases}
\tfrac14 \gl_2 , &d=2\\
\tfrac14 (\gl_d-1) + \underline{\kappa}, &d \geq 3. 
\end{cases}
\end{equation}
{\rm (iii)} As $\delta\to\infty$, 
\begin{equation} 
\label{eq:betacasympinf}
\beta_c(\delta) \sim \frac{\delta}{T}
\end{equation}
with 
\begin{equation} 
\label{eq:Tlat}
T = \sup\big\{t > 0\colon\,\bbP(\omega_1 \in t\,\Z) = 1\big\}
\end{equation}
(with the convention $\sup\emptyset = 0$). Either $T>0$ (`lattice case') or $T=0$ 
(`non-lattice case'). If $T=0$ and $\omega_1$ has a bounded density (with respect 
to the Lebesgue measure), then
\begin{equation}
\beta_c(\delta) \sim \frac{\delta^2}{4 \log \delta}.
\end{equation}
\hfill \qed
\end{theorem}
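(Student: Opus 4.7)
\noindent
\emph{Part (i): qualitative properties of $\gb_c$.}
The function $g^*_{\gd,\gb}(\ell) = \bbE[\exp(\gd\gO_\ell - \gb\gO_\ell^2)]$ is a moment generating function of the pair $(\gO_\ell,-\gO_\ell^2)$, hence log-convex in $(\gd,\gb)$. Since positive log-convex functions are closed under products and under taking expectations, $\bbZ_n^{*,\gd,\gb}$ is log-convex on $\cQ$, and therefore $F^*$ is convex. Combining this with $F^* \geq 0$ (Theorem~\ref{thm:fe}) yields convexity of $\cC = \{F^* = 0\}$, and the monotonicity of $\gb \mapsto g^*_{\gd,\gb}(\ell)$ shows $\cC$ is closed under increasing $\gb$; hence $\cC$ is the epigraph of $\gb_c$, so $\gb_c$ is convex and continuous. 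The equality $\gb_c(0)=0$ follows from $g^*_{0,\gb}\leq 1$, which gives $F^*(0,\gb)=0$. Strict monotonicity then follows from convexity combined with the quantitative positivity $\gb_c(\gd) \geq \tfrac12\gd^2 - O(\gd^3) > 0$ for $\gd>0$ obtained in part~(ii).

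\medskip
\noindent
\emph{Part (ii): small-$\gd$ expansion.}
Set $h(\ell) = \log g^*_{\gd,\gb}(\ell)$ and expand in powers of $\gd,\gb$ via the cumulants of $Y = \gd\gO_\ell - \gb\gO_\ell^2$. The expansion takes the form $h(\ell) = A(\gd,\gb)\,\ell + B(\gd,\gb)\,\ell^2 + O(\ell^3 \gb^3)$ with
\begin{equation*}
A(\gd,\gb) = -\gb + \tfrac12\gd^2 - m_3\gd\gb + \tfrac{m_3}{6}\gd^3 + O(\gd^4),
\qquad B(\gd,\gb) = \gb^2 + O(\gd^5),
\end{equation*}
so that $\sum_{x\in\bbZd} h(\ell_n(x)) = An + B Q_n + O(n\gd^5)$. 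The lower bound on $\gb_c$ (i.e.\ the upper bound on $\gep_\gd$) is then obtained by applying Jensen, $\log\bbZ_n^* \geq E[\sum_x h(\ell_n(x))] = An + B\,E[Q_n] + o(n)$, inserting the asymptotics $E[Q_n]\sim\gl_d n$ for $d\geq 3$ and $E[Q_n]\sim \gl_2 n\log n$ for $d=2$ from~\eqref{eq:asympQ}, and solving perturbatively for $\gb$ as a function of $\gd$ at the critical level where this bound vanishes; the logarithmic factor in $d=2$ comes directly from $E[Q_n]/n\sim \gl_2\log n$. The matching upper bound on $\gb_c$ (lower bound on $\gep_\gd$) will come from a uniform pointwise envelope $g^*_{\gd,\gb}(\ell) \leq \exp(A_*(\gd,\gb)\,\ell + B_*(\gd,\gb)\,\ell^2)$ accurate at order $\gd^4$, combined with an upper-tail large-deviation estimate for $Q_n$ forcing $\bbZ_n^* = e^{o(n)}$ in the relevant parameter range.

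\medskip
\noindent
\emph{Part (iii): large-$\gd$ asymptotics.}
In the lattice case $T>0$, $\gO_\ell\in T\bbZ$ gives
\begin{equation*}
g^*_{\gd,\gb}(\ell) = \sum_{k\in\bbZ}\bbP(\gO_\ell=kT)\,e^{\gd kT-\gb k^2T^2},
\end{equation*}
and when $\gb\geq \gd/T$ every exponent is non-positive, so $g^*\leq 1$, $\bbZ_n^*\leq 1$ and $F^* = 0$, yielding $\gb_c(\gd)\leq\gd/T$. The matching lower bound will be obtained by restricting $S$ to $n$-step self-avoiding walks: the crude bound $\bbZ_n^* \geq P(S\text{ is SAW})\cdot g^*_{\gd,\gb}(1)^n \geq e^{-cn}(g^*(1))^n$ (for some $c=c(d)>0$) together with $g^*_{\gd,\gb}(1)\geq \bbP(\go_1=T)\,e^{\gd T-\gb T^2}$ implies $F^* > 0$ whenever $\gb \leq (1-\eta)\gd/T$ for a fixed $\eta>0$ and $\gd$ sufficiently large, so $\gb_c(\gd)\geq (1-o(1))\gd/T$. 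In the non-lattice case with bounded density $\|\rho\|_\infty \leq C$, the uniform bound $g^*_{\gd,\gb}(\ell) \leq C\sqrt{\pi/\gb}\,e^{\gd^2/(4\gb)}$ (obtained by dominating the density against $C\,dx$) combined with the condition $g^*\leq 1$ produces $\gb_c(\gd)\leq (1+o(1))\gd^2/(4\log\gd)$; the matching lower bound uses the same SAW strategy together with $g^*_{\gd,\gb}(1)\geq \gr_0\sqrt{\pi/\gb}\,e^{\gd^2/(4\gb)-O(1)}$, obtained by concentrating the integrand around its peak $x^* = \gd/(2\gb)$ and using that $\rho(x^*)\geq \gr_0>0$ for the relevant (small) values of $x^*$.

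\medskip
\noindent
\emph{Main obstacle.}
The sharp $\gd^4$ constants $\underline{\gk}$ and $\gk_d$ in part~(ii) are the most delicate point. A direct Jensen computation produces $\tfrac{m_4}{12} + \tfrac{1}{4}(\gl_d-1)$ in place of the theorem's $\gk_d$, and recovering the missing $-\tfrac13 m_3^2$ requires either a second-moment refinement capturing the variance of $\sum_x h(\ell_n(x))$ in the exponent, or a preliminary tilt of the reference measure that absorbs the third-cumulant cross-term between $\gd$ and $\gb$. In $d=2$ the logarithmic growth of $E[Q_n]/n$ further forces a truncation of the contributions from sites of atypically large local time, and the upper-tail LDP estimates for $Q_n$ needed for the upper bound on $\gb_c$ are also most delicate in this dimension.
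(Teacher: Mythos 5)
Your log-convexity route to the convexity of $F^*$ and hence of $\cC$ is fine and essentially what the paper does. However, you never establish that $\gb_c(\gd)<\infty$. Without finiteness, ``$\cC$ is the epigraph of $\gb_c$'' and ``$\gb_c$ is continuous'' are not yet justified. The paper devotes a real argument to this: it shows $g^*_{\gd,\gb}(\ell)\le 1$ for all $\ell\in\N$ once $\gb$ is large enough, by splitting the integral in \eqref{gstardef} at $|\gO_\ell|\le\gep$ and using that $\sup_{\ell}\bbP(|\gO_\ell|\le\gep)<1$ for $\gep$ small (this is even flagged as correcting a slip in \cite{CdHPP16}). You should supply this.

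\textbf{Part (ii).} This is where the proposal breaks. Two serious problems.

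First, the $\ell^2$-coefficient in your cumulant expansion is wrong. You state $B(\gd,\gb)=\gb^2+O(\gd^5)$, but the second and third cumulants of $X=-\gb\gO_\ell^2+\gd\gO_\ell$ give
$B(\gd,\gb)=\gb^2-\gb\gd^2+O(\gd^5)$. At $\gb\approx\tfrac12\gd^2$ this is $-\tfrac14\gd^4+o(\gd^4)$, i.e.\ \emph{negative}, matching the paper's expansion of $g^*$ (Lemma \ref{lem:estimG}), where the $\ell^2$-coefficient of $g^*-1$ is $\tfrac32\gb^2-\tfrac32\gb\gd^2+\tfrac18\gd^4+o(\gd^4)$. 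With your $B>0$, the Jensen bound for $d=2$ would read $\frac1n E[\sum h]\geq A+B\,E[Q_n]/n\to +\infty$, which is absurd; for $d\ge 3$ it produces the wrong sign for $\gk_d$. (Also, your ``Main obstacle'' misdiagnoses the $-\tfrac13m_3^2$: that term emerges automatically from the $-\gb\gd\,m_3$ cross-term in $A$ once $\gb=\tfrac12\gd^2-\tfrac13m_3\gd^3-\gep_\gd$ is substituted — this is exactly the paper's $k_1=\tfrac13m_3^2-\tfrac{1}{12}m_4+\tfrac14$. No second-moment refinement or tilt is needed for it.)

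Second, even with the correct $B$, the uniform expansion $h(\ell)=A\ell+B\ell^2+O(\gb^3\ell^3)$ only holds when $\gd^2\ell$ is small. For $d=2$ the typical local time under $P$ is of order $\log n$, so $\gd^2\ell_n(x)\to\infty$ as $n\to\infty$ for fixed $\gd$, and the expansion is used outside its range of validity. Quantitatively, $\gb^3\,\frac1n E[\sum_x\ell_n(x)^3]\asymp\gd^6(\log n)^2\to\infty$, so the remainder dominates the $\gd^4\log(1/\gd)$ term you are after. This is precisely why the paper truncates at $\gd^2\ell\leq a$ (Lemma~\ref{lem:estimG} has two regimes) and controls the complementary region by comparing to a constrained weakly self-avoiding walk (Lemma~\ref{lem:lbcritcurve}, Remark~\ref{rem:adaptdownard}). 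A plain Jensen step does not survive the $n\to\infty$ limit in $d=2$.

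Third, your upper bound on $\gb_c$ aims at the sharp constant $\gk_d$. But the theorem only asserts the weaker lower bound $\gep_\gd\ge[\underline\kappa+o(1)]\gd^4$, which the paper gets from the elementary inequality $\ell(1-\ell)\le 0$ applied to \eqref{eq:upboundG1}. Proving the $\gk_d$ constant on the upper side is exactly the paper's Conjecture~\ref{conj:sharp_beta_c} (equivalently Conjecture~\ref{conj:ubcritcurve}); your ``upper-tail LDP estimate forcing $\bbZ_n^*=e^{o(n)}$'' is gesturing at that open problem, not at a step the theorem requires.

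\textbf{Part (iii).} Your lattice-case upper bound ($g^*\le 1$ when $\gb\geq\gd/T$) and the SAW lower bound are the right mechanisms, and in the non-lattice case the rough form of the estimates is correct. The paper defers here to \cite{CdHPP16}; in the SAW lower bound you should be a little careful with $g^*_{\gd,\gb}(1)$, since the density $\rho$ need not be bounded below near $0$; in the paper's Lemma~\ref{lem:estimates_fl} the lower bound on $f_\ell$ is obtained via the local CLT and therefore concerns $\ell$ large, so the argument must be restricted to large enough local-time blocks.

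Summary: part (i) needs the finiteness argument; part (ii) as written cannot work because of the sign error in $B$ and the non-uniformity of the expansion, and its upper-bound half overshoots the theorem into conjectural territory; part (iii) is a reasonable sketch.
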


Our third theorem offers scaling bounds on the free energy for small inverse
temperature and fixed charge bias.

\begin{theorem}
\label{thm:hightemp}
For any $\gd \in (0,\infty)$, as $\gb \downarrow 0$,
\begin{equation}
\label{eq:Fscal}
-\big[m(\gd)^2 + v(\gd) + o(1)\big]\,\gb \ge F(\gd,\gb) \ge [1+o(1)] \left\{
\begin{array}{ll}
- \gl_2 m(\gd)^2\,\gb\log(1/\gb),  &d=2,\\[0.2cm]
- \big[\gl_d m(\gd)^2 + v(\gd)\big]\,\gb, &d\geq 3,
\end{array}
\right.
\end{equation}
where $m(\gd) = \bbE^\delta[\omega_1]$ and $v(\gd) = \bbVar^\gd[\omega_1]$. 
\hfill \qed
\end{theorem}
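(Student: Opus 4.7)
My plan is to prove the upper and lower bounds in \eqref{eq:Fscal} separately, the lower bound requiring different arguments in $d=2$ and in $d \ge 3$. The starting point for the lower bound in both dimensions is Jensen's inequality applied to the single-site partition function \eqref{gdbdef},
$$g_{\gd,\gb}(\ell) = \bbE^{\gd}\!\big[e^{-\gb \gO_\ell^2}\big] \;\ge\; \exp\!\big(-\gb\,\bbE^{\gd}[\gO_\ell^2]\big) = \exp\!\big(-\gb[m(\gd)^2\ell^2 + v(\gd)\ell]\big).$$
Plugging this into \eqref{Zndbdef} and using $\sum_x \ell_n(x) = n$ and $Q_n = \sum_x \ell_n(x)^2$ gives
$$\bbZ_n^{\gd,\gb} \;\ge\; e^{-\gb v(\gd) n}\, E\!\big[e^{-\gb m(\gd)^2 Q_n}\big].$$
In $d \ge 3$ a second application of Jensen bounds $E[e^{-\gb m^2 Q_n}] \ge \exp(-\gb m^2 E[Q_n])$ which, combined with $E[Q_n] \sim \gl_d n$ from \eqref{eq:asympQ}, proves the claimed lower bound $F \ge -[\gl_d m(\gd)^2 + v(\gd)]\gb(1+o(1))$. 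This second Jensen step is useless in $d=2$ since $E[Q_n]/n \to \infty$, so I would instead use a block-decomposition strategy: partition $\{1,\dots,n\}$ into $K = n/T$ blocks of length $T = T(\gb)$, restrict the walk to translate by a distance $R \gg \sqrt{T}$ between consecutive blocks so that the $K$ subwalks explore disjoint regions, and use additivity $Q_n = \sum_k Q_T^{(k)}$ together with the within-block estimate $E[Q_T^{(k)}] \sim \gl_2 T \log T$ from \eqref{eq:asympQ}. Applying Jensen block-by-block gives
$$\tfrac{1}{n}\log \bbZ_n^{\gd,\gb} \;\ge\; -\gb v(\gd) - \gb m(\gd)^2 \gl_2 \log T - \tfrac{R}{T}\log(2d) + o(1),$$
and optimizing with $T \sim 1/\gb$ (and $R$ such that $R/T \to 0$) yields $F(\gd,\gb) \ge -\gl_2 m(\gd)^2 \gb \log(1/\gb)(1+o(1))$.

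For the upper bound my plan is to establish the deterministic single-site inequality
\begin{equation}\label{plan:singleineq}
g_{\gd,\gb}(\ell) \;\le\; g_{\gd,\gb}(1)^{\ell}, \qquad \ell \ge 1,
\end{equation}
valid for $\gd > 0$ fixed and $\gb$ small enough (depending on $\gd$). Granted \eqref{plan:singleineq}, the identity $\sum_x \ell_n(x) = n$ immediately gives $\prod_x g_{\gd,\gb}(\ell_n(x)) \le g_{\gd,\gb}(1)^n$, so $\bbZ_n^{\gd,\gb} \le g_{\gd,\gb}(1)^n$ and
$$F(\gd,\gb) \;\le\; \log g_{\gd,\gb}(1) = -(m(\gd)^2 + v(\gd))\gb + O(\gb^2),$$
which is the upper bound in \eqref{eq:Fscal}. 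To prove \eqref{plan:singleineq} I would split at a threshold $\ell_* \sim 1/\gb$: for $\ell \le \ell_*$ a Taylor expansion in $\gb$ gives $\log g_{\gd,\gb}(\ell) - \ell \log g_{\gd,\gb}(1) = -\gb m(\gd)^2 \ell(\ell-1) + O(\gb^2 \ell^3)$, whose leading negative term dominates once $\gb \ll m(\gd)^2$; for $\ell \ge \ell_*$ a large-deviation estimate based on $\gO_\ell/\ell \to m(\gd) > 0$ under $\bbP^{\gd}$ yields $g_{\gd,\gb}(\ell) \le \exp(-c\gb m(\gd)^2 \ell^2)$, which beats $g_{\gd,\gb}(1)^\ell \approx \exp(-\ell \gb (m(\gd)^2+v(\gd)))$ in this range.

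The main obstacle is the lower bound in $d=2$, where the anomalous rate $\gb \log(1/\gb)$ demands a careful choice of block size together with a quantitative block estimate on $E[Q_T]$ uniform in $T$; alternatively, one can quote the Laplace asymptotics for $Q_n$ in the plane available in Chen~\cite{C10}. The upper bound is more elementary but relies crucially on $\gd > 0$: indeed \eqref{plan:singleineq} fails in general when $\gd = 0$ (already for $\ell = 2$ by a Taylor expansion at $\gb = 0$), which is consistent with the fact that the lower bound in \eqref{eq:Fscal} vanishes when $m(\gd) = 0$.
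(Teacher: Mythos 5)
The upper bound and the $d\ge 3$ lower bound are sound; the $d=2$ lower bound as proposed has a real gap.

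\textbf{Lower bound, $d\ge 3$.} Your two-step Jensen argument is correct and is in fact slightly more direct than the paper's: the paper bounds $F\ge -f^{\wsaw}(\gb m(\gd)^2)-\gb v(\gd)$ and then invokes Proposition~\ref{pr:wsaw}, whereas your second Jensen step $E[e^{-uQ_n}]\ge e^{-uE[Q_n]}$ delivers $f^{\wsaw}(u)\le\gl_d u$ on the spot. Both routes yield the same numerical constant.

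\textbf{Lower bound, $d=2$.} The forced-translation block strategy cannot produce the anomalous $\gb\log(1/\gb)$ rate. To make the $K=n/T$ sub-walks disjoint you insist on a translation $R\gg\sqrt{T}$ between consecutive blocks, but the probabilistic cost of $R$ forced steps per block contributes $-\frac{R}{T}\log(2d)$ per unit time. With $T\sim 1/\gb$ and $R\gg\sqrt{T}$ this cost is $\gg\sqrt{\gb}\,\log(2d)$, which overwhelms the target rate $\gb\log(1/\gb)$ as $\gb\downarrow 0$. There is no admissible choice: the disjointness requirement $R\gg\sqrt{T}$ and the budget requirement $R/T=o(\gb\log(1/\gb))$, i.e.\ $R=o(\log T)$, are incompatible. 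The way around this (used in Step~4 of the proof of Proposition~\ref{pr:largedevsuml2}) is to impose a bridge condition on each block rather than a forced displacement: the bridge probability decays only polynomially, $P(\cB_T)\asymp 1/T$ by \eqref{eq:asymp.bridge}, so the per-step cost is $\asymp\frac{\log T}{T}\asymp\gb\log(1/\gb)$, exactly the right order. Your fallback of quoting the planar Laplace asymptotics of $Q_n$ from Chen is legitimate and amounts to the same thing; the paper packages this as Proposition~\ref{pr:wsaw}.

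\textbf{Upper bound.} Your single-site inequality $g_{\gd,\gb}(\ell)\le g_{\gd,\gb}(1)^\ell$ for $\gd>0$ and $\gb$ small is a clean reformulation. The paper instead proves the two-regime estimate of Lemma~\ref{lem:gdb} and then observes $\ell(1-\ell)\le 0$ to downgrade the quadratic bound to a linear one; your inequality encapsulates the same content and leads more directly to $F\le\log g_{\gd,\gb}(1)$. One detail to watch: using $e^{-t}\le 1-t+t^2$ gives an error of order $\gb^2\bbE^\gd[\gO_\ell^4]=O(\gb^2\ell^4)$ (since $\gO_\ell$ has a drift $m(\gd)\ell$), so the Taylor regime is $\gb\ell^2\le a$, i.e.\ $\ell_*\sim 1/\sqrt{\gb}$ rather than your stated $\ell_*\sim 1/\gb$. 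The large-deviation estimate handles $\ell\ge 1/\sqrt{\gb}$ without difficulty, so the argument still closes, but the threshold should be adjusted.
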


Our fourth and last main theorem settles existence of the free energy for large enough
inverse temperature for a subclass of charge distributions.  

\begin{theorem}
\label{thm:conv_free_energy}
Suppose that the charge distribution is non-lattice $(T=0)$ and has a bounded density.
Then there exists a curve $\gd \mapsto \gb_0(\gd)$ such that, for all $\gb \ge \gb_0(\gd)$,
\begin{enumerate}
\item 
the sequence $\{\log g^*_{\gd,\gb}(\ell)\}_{\ell\in\N}$ is super-additive,
\item 
the limes superior in \eqref{felim} is a limit, and equals $-f(\gd)$,
\item 
the limes superior in \eqref{Fstardef} is a limit, and equals $0$.
\end{enumerate}
Moreover, $\gb_0(\gd) \ge \gb_c(\gd)$ and $\gb_0(\gd) \sim \gb_c(\gd)$ as $\gd\to\infty$.
\hfill\qed
\end{theorem}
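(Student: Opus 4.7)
The plan is to prove the four assertions in order, with the main work concentrated in (i); the rest follow by reasonably standard arguments once super-additivity is in hand. I present the plan in three paragraphs.

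\emph{Paragraph 1 (Super-additivity of $\log g^*$).} Splitting $\gO_{\ell_1+\ell_2}=\gO_{\ell_1}+\gO'_{\ell_2}$ into two independent parts and expanding the square yields
\begin{equation*}
g^*_{\gd,\gb}(\ell_1+\ell_2) = g^*_{\gd,\gb}(\ell_1)\,g^*_{\gd,\gb}(\ell_2)\,\widetilde{\bbE}\big[e^{-2\gb Y_1 Y_2}\big],
\end{equation*}
where $Y_i$ are independent with tilted densities $p_{\ell_i}(y)\,e^{\gd y-\gb y^2}/g^*_{\gd,\gb}(\ell_i)$ and $p_\ell$ is the density of $\gO_\ell$ (well-defined for $\ell\ge 1$ under the bounded-density hypothesis). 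Super-additivity thus reduces to $\widetilde{\bbE}[e^{-2\gb Y_1Y_2}]\ge 1$. Jensen only gives $\ge e^{-2\gb\tilde\mu_1\tilde\mu_2}<1$ because $\tilde\mu_i>0$ for $\gd>0$, so a finer argument is required. Invoking a local CLT made possible by the non-lattice plus bounded-density hypothesis, one has $p_\ell(x)=(2\pi\ell)^{-1/2}e^{-x^2/(2\ell)}(1+o(1))$ uniformly in $x$ in the relevant range; combined with the fact that $e^{-\gb x^2}$ concentrates the tilted laws on a window of width $O(\gb^{-1/2})$ around $\gd/(2\gb)$, this reduces matters to the Gaussian charge case, where $g^*_{\gd,\gb}(\ell)=(1+2\gb\ell)^{-1/2}\exp[\gd^2\ell/2(1+2\gb\ell)]$ and super-additivity can be verified by a direct explicit inequality valid for $\gb$ large enough depending on $\gd$. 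Small values of $\ell$, notably $\ell\in\{1,2\}$, lie outside the CLT regime and must be checked separately by a direct Laplace estimate using $\|p_1\|_\infty<\infty$. I would then define $\gb_0(\gd)$ as the infimum of those $\gb$ for which all the inequalities involved hold uniformly in $(\ell_1,\ell_2)\in\N^2$.

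\emph{Paragraph 2 (Transfer to the polymer and value of the free energy).} Once (i) is established, I obtain super-additivity of $\log\bbZ_n^{*,\gd,\gb}$ by concatenation: given independent simple random walks $S,S'$ of lengths $n,m$ starting at $0$, the path $\tilde S_i=S_i$ for $i\le n$ and $\tilde S_i=S_n+S'_{i-n}$ for $i>n$ is again a simple random walk of length $n+m$ with local time $\tilde\ell_{n+m}(x)=\ell_n(x)+\ell'_m(x-S_n)$. Applying (i) site by site and taking the product over $x\in\bbZ^d$ and the expectation over $(S,S')$ gives $\bbZ_{n+m}^{*,\gd,\gb}\ge\bbZ_n^{*,\gd,\gb}\bbZ_m^{*,\gd,\gb}$, and Fekete's lemma promotes the limsup in \eqref{Fstardef} to a limit equal to $\sup_n\tfrac1n\log\bbZ_n^{*,\gd,\gb}$. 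The same super-additivity gives $\gamma(\gd,\gb):=\lim_{\ell\to\infty}\tfrac1\ell\log g^*_{\gd,\gb}(\ell)=\sup_{\ell\ge 1}\tfrac1\ell\log g^*_{\gd,\gb}(\ell)$; the polynomial asymptotic $g^*_{\gd,\gb}(\ell)\sim e^{\gd^2/(4\gb)}/\sqrt{2\gb\ell}$, derived from the same local CLT plus Laplace analysis, then forces $\gamma(\gd,\gb)=0$, so $\log g^*_{\gd,\gb}(\ell)\le 0$ for every $\ell\ge 1$. Consequently $\bbZ_n^{*,\gd,\gb}\le 1$, giving $F^*(\gd,\gb)\le 0$; combined with $F^*\ge 0$ from Theorem~\ref{thm:fe}, this yields $F^*(\gd,\gb)=0$ and $F(\gd,\gb)=f(\gd)$, settling (ii) and (iii).

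\emph{Paragraph 3 (Comparison with $\gb_c$ and main obstacle).} For (iv), the bound $\gb_0(\gd)\ge\gb_c(\gd)$ is automatic: if $\gb_0(\gd)<\gb_c(\gd)$, then any $\gb\in[\gb_0(\gd),\gb_c(\gd))$ would yield $F^*(\gd,\gb)=0$ by (iii), contradicting $(\gd,\gb)\in\cE$. The asymptotic $\gb_0(\gd)\sim\gb_c(\gd)$ as $\gd\to\infty$ should come from tracking the tightest constraint in Paragraph~1: in the non-lattice bounded-density regime the binding inequality is the small-$\ell$ condition $g^*_{\gd,\gb}(1)\le 1$, whose critical line matches at leading order the asymptotic $\gb_c(\gd)\sim\gd^2/(4\log\gd)$ of Theorem~\ref{thm:criticalpoint}(iii), itself obtained from the same bounded-density Laplace computation. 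The main obstacle is Paragraph~1: proving $\widetilde{\bbE}[e^{-2\gb Y_1Y_2}]\ge 1$ despite the adverse drift $\tilde\mu_1\tilde\mu_2>0$ requires extracting a quantitative gain from the local CLT and the Laplace concentration, and doing so uniformly in $(\ell_1,\ell_2)$ with constants sharp enough to recover the correct $\gd\to\infty$ asymptotics of $\gb_c$.
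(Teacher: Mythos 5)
Your overall plan is sound, but your Paragraph~1 takes a genuinely different route from the paper's proof of (1). The paper does not factorize: instead it proves directly, via the local CLT for densities (the paper's Lemma~\ref{lem:estimates_fl}) and a Laplace analysis (Lemma~\ref{lem:estimates_gstar}), the two-sided bound $g^*_{\gd,\gb}(\ell)\asymp \frac{\gd}{\gb}e^{\gd^2/4\gb}\,\ell^{-1/2}$ with constants made explicit up to a factor $e^{\eta\gd^2/4\gb}$ slack between lower and upper bounds, and then bounds the super-additivity defect $\log g^*(m+n)-\log g^*(m)-\log g^*(n)$ from below by $\tfrac12\inf_{u,v\ge 1}\{\log u+\log v-\log(u+v)\}+\log(\gb/\gd)-(1+\eta)\gd^2/4\gb+O(1)$, which becomes positive once $\gb\ge (1+\sqrt\eta)\gd^2/(4\log\gd)$. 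Your factorization $g^*(\ell_1+\ell_2)=g^*(\ell_1)g^*(\ell_2)\widetilde{\bbE}[e^{-2\gb Y_1Y_2}]$ is algebraically correct and conceptually illuminating (it isolates exactly the quantity whose sign controls super-additivity), but it is harder to control rigorously than the paper's route: the cross term $\widetilde{\bbE}[e^{-2\gb Y_1Y_2}]$ has competing effects (adverse drift $\tilde\mu_1\tilde\mu_2>0$ vs.\ beneficial fluctuations), whereas the paper only needs one-sided estimates on each $g^*(\ell)$ and never touches the cross term. You would in the end still need the same local-CLT input as the paper, so your route buys conceptual clarity at the price of a more delicate estimate. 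Your Paragraph~2 is essentially identical to the paper's, and you usefully make explicit the step the paper glosses over: super-additivity together with $g^*(\ell)\asymp\ell^{-1/2}$ forces $\sup_\ell\frac1\ell\log g^*(\ell)=0$ and hence $g^*(\ell)\le 1$ for all $\ell$, whence $\bbZ_n^*\le 1$ and $F^*=0$. (Incidentally you write $F(\gd,\gb)=f(\gd)$, which is the correct value; the ``$-f(\gd)$'' in the theorem statement is a sign typo, since $F=F^*+f$.) Your Paragraph~3 is also in line with the paper: the binding constraint at $\ell_1=\ell_2=1$ (equivalently $g^*(1)\le 1$) gives precisely the threshold $\gd^2/4\gb\lesssim\tfrac12\log\gb$, matching $\gb_c(\gd)\sim\gd^2/(4\log\gd)$ at leading order.
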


\subsection{Discussion and two conjectures}
\label{ss:disc}

We discuss the theorems stated in Section~\ref{ss:mainthm} and place them in their 
proper context. 

\medskip\noindent
{\bf 1.}
Theorem~\ref{thm:fe} shows that the annealed excess free energy $(\gd,\gb) \mapsto 
F^*(\gd,\gb)$ is nonnegative on $\cQ$ and satisfies a lower bound that signals the 
presence of two phases. 

\medskip\noindent
{\bf 2.}
Theorem~\ref{thm:criticalpoint}(i) shows that there is a phase transition at a non-trivial 
critical curve $\gd \mapsto \gb_c(\gd)$ in $\cQ$, separating a collapsed phase $\cC$ 
(on and above the curve) from an extended phase $\cE$ (below the curve). If the charge 
distribution is \emph{symmetric}, then 
\begin{equation}
\label{curveub}
\gb_c(\gd) \leq \tfrac12\gd^2 \qquad \forall\,\gd \in [0,\infty).
\end{equation} 
Indeed, using \eqref{gstardef} we may estimate
\begin{equation}
\begin{aligned}
g_{\gd,\tfrac12\gd^2}^*(\ell) 
&= \bbE\left[ \exp\big(\gd\gO_{\ell}-\tfrac12\gd^2\gO_{\ell}^2\big)\right]
= \bbE\left[ \sum_{k\in\N_0} \frac{1}{k!} (\gd\gO_{\ell})^k\,
\exp\big(-\tfrac12\gd^2\gO_{\ell}^2\big)\right]\\
&= \bbE\left[ \sum_{k\in\N_0} \frac{1}{(2k)!} (\gd\gO_{\ell})^{2k}\,
\exp\big(-\tfrac12\gd^2\gO_{\ell}^2\big)\right]\\
&\leq \bbE\left[ \sum_{k\in\N_0} \frac{1}{k!} (\tfrac12\gd^2\gO_{\ell}^2)^k\,
\exp\big(-\tfrac12\gd^2\gO_{\ell}^2\big)\right]
= \bbE[1] = 1 \qquad \forall\,\ell\in\N_0,
\end{aligned}
\end{equation}
where we use that $(2k)! \geq 2^k\,k!$, $k \in \N_0$. Via \eqref{Fstardef}--\eqref{Zndbstar}
this implies that $\Z_n^{*,\gd,\tfrac12\gd^2} \leq 1$ for all $n\in\N$ and hence $F^*(\delta,
\tfrac12\gd^2) = 0$, which via \eqref{phasesdef} yields \eqref{curveub} (see Fig.~\ref{fig-critcurve}).

\medskip\noindent
{\bf 3.} 
The lower and upper bounds in Theorem~\ref{thm:criticalpoint}(ii) differ by a multiplicative 
factor when $d \ge 3$ and by a logarithmic factor when $d=2$. We expect that the upper 
bound gives the right asymptotic behaviour:

\begin{conjecture} 
\label{conj:sharp_beta_c}
As $\gd \downarrow 0$,
\begin{equation}
\gep_\gd \sim \begin{cases}
\kappa_2 \gd^4 \log(1/\gd), &d=2,\\
\kappa_d \gd^4, &d\geq 3.
\end{cases}
\end{equation}
\hfill\qed
\end{conjecture}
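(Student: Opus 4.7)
To establish Conjecture~\ref{conj:sharp_beta_c}, in view of Theorem~\ref{thm:criticalpoint}(ii) it suffices to prove the matching lower bound $\gep_\gd \geq [\gk_d - o(1)]\,R_d(\gd)$, where $R_d(\gd) = \gd^4$ for $d \geq 3$ and $R_2(\gd) = \gd^4 \log(1/\gd)$. Equivalently, for any $c < \gk_d$ and $\gb = \tfrac12 \gd^2 - \tfrac13 m_3 \gd^3 - c\,R_d(\gd)$ we must show $F^*(\gd,\gb) = 0$; since $F^* \ge 0$, this reduces to proving that $\bbZ_n^{*,\gd,\gb}$ grows subexponentially. The plan is to combine (a) a refined Taylor expansion of the single-site weight $g^*_{\gd,\gb}(\ell)$ near the critical curve with (b) a sharp large-deviation estimate for $E[\exp(-\alpha J_n)]$, where $J_n = Q_n - n$ is the off-diagonal self-intersection count. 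Concretely, I would expand $g^*_{\gd,\gb}(\ell) = \bbE[\exp(\gd \gO_\ell - \gb\gO_\ell^2)]$ in $\gd$ with $\gb$ on the putative critical curve; after the $\gd^2$ and $\gd^3$ cancellations inherited from \eqref{betacexp}, a careful bookkeeping of the $\gd^4$ contributions (from $-2\gb$, $\tfrac12\bbE[Y^2]$, $\tfrac16\bbE[Y^3]$ and $\tfrac{1}{24}\bbE[Y^4]$ with $Y = \gd\gO_\ell - \gb\gO_\ell^2$) yields the uniform estimate
\begin{equation*}
\log g^*_{\gd,\gb}(\ell) = \bigl(c - \underline{\kappa} + \tfrac14\bigr)\gd^4\,\ell - \tfrac14 \gd^4\,\ell^2 + O\bigl(\gd^5\,\mathrm{poly}(\ell)\bigr),
\end{equation*}
valid in the moderate range $\ell \le L(\gd)$ with $L(\gd) \ll 1/\gd^2$ (with $c$ replaced by $c\log(1/\gd)$ when $d=2$). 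For $\ell > L(\gd)$ one controls $\log g^*$ via the dominating Gaussian expression $-\tfrac12 \log(1+2\gb\ell) + \gd^2\ell/(2(1+2\gb\ell))$, whose contribution is negligible through Berry--Esseen corrections to the CLT for $\gO_\ell$.

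Summing over sites and using $\sum_x \ell_n(x) = n$ together with $\sum_x \ell_n(x)^2 = n + J_n$, the $\ell$- and $\ell^2$-contributions recombine, the diagonal $\tfrac14 \gd^4 n$ cancelling against the $\tfrac14$ in the linear coefficient, and one obtains
\begin{equation*}
\bbZ_n^{*,\gd,\gb} \leq e^{(c - \underline{\kappa})\gd^4 n}\,E\bigl[\exp(-\tfrac14\gd^4\,J_n)\bigr]\,\bigl(1+o_\gd(1)\bigr)^n.
\end{equation*}
For $d \ge 3$, the concentration $\mathrm{Var}(J_n) = o(n^2)$ (a consequence of \eqref{eq:asympQalt}), combined with a Chebyshev lower-tail bound on $\{J_n < (1-\gep)\bbE[J_n]\}$, yields, for every fixed $\alpha > 0$,
$\lim_{n\to\infty} \tfrac1n \log E[\exp(-\alpha J_n)] = -\alpha(\gl_d - 1)$, the matching Jensen lower bound being automatic. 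Specialising to $\alpha = \gd^4/4$ gives $\limsup_n \tfrac1n \log \bbZ_n^{*,\gd,\gb} \le (c - \underline{\kappa} - \tfrac14(\gl_d - 1))\gd^4 = (c - \gk_d)\gd^4 < 0$, forcing $F^*(\gd,\gb) = 0$. Letting $c \uparrow \gk_d$ then delivers $\gep_\gd \ge [\gk_d - o(1)]\gd^4$, matching the upper bound of Theorem~\ref{thm:criticalpoint}(ii).

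The hard part is the $d = 2$ case. There $\bbE[J_n] \sim \gl_2 n\log n$ and $\mathrm{Var}(J_n) \sim C_2 n^2$, so $J_n$ is only marginally concentrated, and the dominant local times on the range of the walk (of size $\sim n/\log n$) are of order $\log n$, which for $n$ large at fixed $\gd$ far exceeds $1/\gd^2$. The polynomial expansion of $\log g^*$ is therefore invalid on the dominant sites, and one must work throughout in the Gaussian regime $\log g^*_{\gd,\gb}(\ell) \approx -\tfrac12 \log(1+\gd^2\ell) + \gd^2\ell/(2(1+\gd^2\ell))$. The conjectured $\log(1/\gd)$ factor should emerge as the ratio of the two logarithmic scales $\log n$ and $\log(1/\gd)$, but extracting the sharp constant $\gk_2 = \gl_2/4$ requires adapting the Brownian representation of the planar self-intersection local time (in the spirit of Chen~\cite{C10}) to capture the leading coefficient in the exponential down-tail of $J_n$ under the multiplicative tilt $\exp(-\tfrac14 \gd^4 J_n)$. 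This last step, matching the quantitative LDP for planar $J_n$ against the Gaussian asymptotics of $g^*$ across the crossover scale, is where I expect the main technical difficulty of the conjecture to lie.
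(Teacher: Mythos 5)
The statement you are attempting to prove is left as a \emph{conjecture} in the paper: Conjecture~\ref{conj:sharp_beta_c} is explicitly open, and Section~\ref{s:toconj_sharpbetac} only reduces it to Conjecture~\ref{conj:ubcritcurve}, which in turn rests on the unproved Conjecture~\ref{conj.LDrange} about upward large deviations of the \emph{trimmed} range. Your argument reaches, and then silently bypasses, precisely the obstruction that forced the authors to stop at a conjecture.

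The decisive gap is the ``recombination'' step. Your Taylor expansion of $\log g^*_{\gd,\gb}(\ell)$ is valid only for $\gd^2\ell$ small --- this is exactly the regime of Lemma~\ref{lem:estimG}(1) --- yet you then sum over \emph{all} sites and invoke $\sum_x \ell_n(x)=n$, $\sum_x\ell_n(x)^2=n+J_n$, which tacitly uses the polynomial expression at sites with large local time as well. For such sites the polynomial $-\tfrac14\gd^4\ell^2 + (\cdots)\ell$ is vastly more negative than the true $\log g^*_{\gd,\gb}(\ell)\approx -\tfrac12\log(\gd^2\ell)$ from Lemma~\ref{lem:estimG}(2); substituting the Taylor polynomial for $g^*$ at large $\ell$ therefore \emph{under}estimates $g^*$ and produces a lower bound on $\bbZ_n^{*,\gd,\gb}$, not an upper bound. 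A clean sanity check makes the failure visible: your claimed inequality $\bbZ_n^{*,\gd,\gb}\le e^{(c-\underline\kappa)\gd^4 n}\,E[\exp(-\tfrac14\gd^4 J_n)]\,(1+o_\gd(1))^n$ would yield $\limsup_n \tfrac1n\log\bbZ_n^{*,\gd,\gb}\le (c-\gk_d)\gd^4 + o(\gd^4)<0$ for $c<\gk_d$ and $\gd$ small, which contradicts Theorem~\ref{thm:fe} ($F^*\ge0$). The confinement strategy behind Theorem~\ref{thm:fe} uses local times of order $\alpha_n^2\to\infty$, i.e., precisely the sites outside your expansion regime; that is why the bound cannot hold. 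Once one restricts the expansion to the small-local-time sites, as in \eqref{defZnugep}, the $\ell$-sum becomes $\gamma_n^-<n$ and the $\ell^2$-sum becomes $Q_n^-<Q_n$, the cancellation you rely on disappears, and one is left having to show that ``rough local-time profiles'' (mixtures of small and large local times) are costly --- which is Lemma~\ref{lem:rough}, itself conditional on Conjecture~\ref{conj.LDrange}.

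A secondary but real flaw is your route to $\lim_n \tfrac1n\log E[e^{-\alpha J_n}]=-\alpha(\gl_d-1)$. Since $J_n\ge0$ and $e^{-\alpha J_n}\le1$, splitting at the event $\{J_n<(1-\gep)E[J_n]\}$ gives
\begin{equation*}
E[e^{-\alpha J_n}]\;\le\;e^{-\alpha(1-\gep)(\gl_d-1)n[1+o(1)]} \;+\; P\big(J_n<(1-\gep)E[J_n]\big),
\end{equation*}
and Chebyshev only makes the second term polynomially small; it then \emph{dominates} the exponentially small first term, so the method yields no exponential rate. Obtaining the genuine exponential downward rate requires the subadditivity and cutting arguments of Proposition~\ref{pr:largedevsuml2}, not a second-moment estimate. (Your stated claim ``for every fixed $\alpha>0$'' is also false as written, since $f^{\wsaw}$ is not globally linear; only the $\alpha\downarrow0$ asymptotic is correct.) Your honest acknowledgement of the $d=2$ difficulty is to your credit, but the $d\ge3$ case is not, as you suggest, essentially complete: it has exactly the same rough-profile gap.
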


\noindent
In Appendix \ref{app:range} we state a conjecture about trimmed local times that 
would imply Conjecture \ref{conj:sharp_beta_c}. Theorem~\ref{thm:criticalpoint}(ii) 
identifies three terms in the upper bound of $\gb_c(\delta)$ for small $\gd$, of which 
the last is \emph{anomalous} for $d=2$. The proof is based on an analysis of the 
\emph{downward large deviations} of the self-intersection local time $Q_n$ in 
\eqref{Qndef} under the law $P$ of simple random walk in the limit as $n\to\infty$. 
A sharp result was found in Caravenna, den Hollander, P\'etr\'elis and Poisat~\cite{CdHPP16} 
for $d=1$, with two terms in the expansion of which the last is anomalous (namely, 
order $\delta^{8/3}$). For the standard normal distribution $m_3=0$ and $m_4=3$, 
and so $\kappa_d = \frac14\lambda_d$ for $d \geq 2$ in \eqref{kappadid}. 

\medskip\noindent
{\bf 4.} 
Note that $\kappa_d \ge \underline{\gk}>0$ for $d \geq 3$ when $m_3=0$, but not 
necessarily when $m_3 \neq 0$. Indeed, if the distribution of the charges puts weight 
$\frac{1}{3N^2}$, $1-\tfrac{1}{2N^2}$, $\tfrac{1}{6N^2}$ on the values $-N$, $0$, $2N$, 
respectively, for some $N\in\N$, then $m_1=0$, $m_2=1$, $m_3=N$, $m_4=3N^2$, 
in which case $-\tfrac{1}{3}m_3^2+\tfrac{1}{12}m_4 = -\tfrac{1}{12}N^2$. This gives 
$\kappa_d<0$ for $N$ large enough and $\underline{\gk} < 0  \le \gk_d$ for $N$ 
small enough.

\medskip\noindent
{\bf 5.}
Theorem~\ref{thm:criticalpoint}(iii) identifies the asymptotics of $\gb_c(\delta)$ for 
large $\gd$, which is the same as for $d=1$. The scaling depends on whether the 
charge distribution is lattice or non-lattice.

\medskip\noindent
{\bf 6.} 
In analogy with what we saw in Theorem~\ref{thm:criticalpoint}(ii), the bounds in 
Theorem~\ref{thm:hightemp} do not match, but we expect the following:

\begin{conjecture}
\label{conj:asym.smallbeta}
For any $\gd \in (0,\infty)$, as $\gb \downarrow 0$,
\begin{equation}
F(\gd,\gb) \sim \left\{
\begin{array}{ll}
- \gl_2 m(\gd)^2\,\gb\log(1/\gb),  &d=2,\\[0.2cm]
- \big[\gl_d m(\gd)^2 + v(\gd)\big]\,\gb, &d\geq 3,
\end{array}
\right.
\end{equation}
\hfill \qed
\end{conjecture}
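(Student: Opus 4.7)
The plan is to prove the upper and lower bounds separately, both starting from the representation $\bbZ_n^{\gd,\gb} = E\bigl[\prod_x g_{\gd,\gb}(\ell_n(x))\bigr]$ in \eqref{Zndbdef}--\eqref{gdbdef} combined with Jensen's inequality and the asymptotics \eqref{eq:asympQ} of the self-intersection local time.

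For the \emph{upper bound}, the key observation is the Taylor expansion
\[
\log g_{\gd,\gb}(\ell) = -\gb\bigl(\ell^2 m(\gd)^2 + \ell\,v(\gd)\bigr)
+ O\bigl(\gb^2\,\bbE^\gd[\gO_\ell^4]\bigr),
\]
whose leading term is a concave quadratic in $\ell$. This gives the submultiplicativity $g_{\gd,\gb}(\ell) \le g_{\gd,\gb}(1)^\ell$ for $\ell \ge 1$ when $\gb$ is small enough (with a separate truncation argument controlling atypically large local times, where the $O(\gb^2 \ell^4)$ correction is no longer dominated by the leading term). Using $\sum_x \ell_n(x) = n$ then yields $\bbZ_n^{\gd,\gb} \le g_{\gd,\gb}(1)^n$, and since $\log g_{\gd,\gb}(1) = -\gb[m(\gd)^2 + v(\gd)] + O(\gb^2)$, this delivers the upper bound $F(\gd,\gb) \le -[m(\gd)^2 + v(\gd) + o(1)]\,\gb$ as $\gb \downarrow 0$.

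For the \emph{lower bound}, a first application of Jensen's inequality to the charge average (convexity of $x \mapsto e^{-x}$), together with the identity $\bbE^\gd[H_n^\go(S)] = m(\gd)^2\,Q_n + v(\gd)\,n$ (a direct computation from \eqref{eq:def.hamiltonian}), gives
\[
\bbZ_n^{\gd,\gb} \ge e^{-\gb v(\gd)\,n}\,E\bigl[e^{-\gb m(\gd)^2 Q_n}\bigr].
\]
In dimension $d \ge 3$, a second application of Jensen to the walk expectation yields $E[e^{-\gb m^2 Q_n}] \ge e^{-\gb m^2 E[Q_n]}$, and the asymptotic $E[Q_n] \sim \gl_d\,n$ from \eqref{eq:asympQ} closes the argument. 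In dimension $d=2$, this second step is useless since $E[Q_n]/n \sim \gl_2\log n$ diverges with $n$. The remedy would be a block decomposition at the scale $L = \lfloor 1/\gb\rfloor$: on a window of length $L$ the typical local times are of order $\log L = \log(1/\gb)$, Jensen at this scale gives $E[e^{-\gb m(\gd)^2 Q_L}] \ge \exp\bigl(-\gb m(\gd)^2\gl_2\,L\log(1/\gb)\,(1+o(1))\bigr)$, and the walk on $[0,n]$ is reconstructed from $\approx n/L$ such blocks glued via the Markov property, with the claim following after sending $n\to\infty$.

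The main obstacle lies in the $d=2$ lower bound with the correct prefactor $\gl_2$. Any naive inter-block decoupling — for instance conditioning on $S_{kL}=0$ — costs $P(S_L=0)^{n/L} \sim \exp(-n\gb\log(1/\gb))$, which sits at the \emph{same} exponential order as the main term and would therefore shift the constant away from $\gl_2$. Extracting the sharp prefactor requires a decoupling mechanism whose contribution is truly $o(\gb\log(1/\gb))$ per unit $n$, plausibly via an averaging over possible block endpoints that exploits translation invariance of SRW rather than pointwise conditioning; producing and justifying such a mechanism is the delicate part of the argument.
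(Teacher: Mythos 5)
What you have proposed is, in effect, a re-derivation of the bounds in Theorem~\ref{thm:hightemp}, not a proof of Conjecture~\ref{conj:asym.smallbeta}. The conjecture asserts a matching asymptotic, and the gap in your argument is precisely the gap the paper leaves open. Concretely, your upper-bound argument (via $g_{\gd,\gb}(\ell)\le g_{\gd,\gb}(1)^{\ell}$ from concavity of the quadratic leading term) yields $F(\gd,\gb)\le -[m(\gd)^2+v(\gd)+o(1)]\gb$. This is the paper's upper bound in Theorem~\ref{thm:hightemp}, but it is strictly weaker than what the conjecture requires: for $d\ge3$ you would need $F(\gd,\gb)\le -[\gl_d m(\gd)^2+v(\gd)-o(1)]\gb$, and since $\gl_d=2G_d-1>1$ we have $\gl_d m(\gd)^2+v(\gd)>m(\gd)^2+v(\gd)$; for $d=2$ the conjectured upper bound has the extra $\log(1/\gb)$ factor, which your argument cannot produce at all. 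The submultiplicative bound $\bbZ_n^{\gd,\gb}\le g_{\gd,\gb}(1)^n$ is simply not tight enough to see the self-intersection structure — it treats each visit to a site as independent, discarding the quadratic-in-$\ell_n(x)$ penalty that makes the constant $\gl_d$ (or the $\log(1/\gb)$) appear.

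The paper's route to the conjecture is conditional: after establishing $\bbZ_n^{\gd,\gb}\le e^{-\gb v(\gd)n-(1-\gep)f^{\wsaw}(u)n}\,\bar Z_{n,u}^{\gep}$ with $u=(1-\eta)\gb m(\gd)^2$, one needs $\limsup_n \tfrac1n\log\bar Z_{n,u}^{\gep}=0$, which is Conjecture~\ref{conj:ubcritcurve}, itself reduced to Conjecture~\ref{conj.LDrange} on upward large deviations of the trimmed range. This is the missing ingredient your proposal does not supply: one has to show that ``rough local-time profiles'' (some sites visited many times, others few) are exponentially costly, which requires controlling $R^-_{n,A}$ jointly with the time spent there, not merely the single-site factors $g_{\gd,\gb}(\ell)$.

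On the lower bound your $d\ge3$ argument (two applications of Jensen plus $E[Q_n]\sim\gl_d n$) is exactly the paper's. For $d=2$ you correctly observe that direct Jensen in the walk diverges, but you then worry about a decoupling issue that the paper sidesteps entirely: the paper writes $F(\gd,\gb)\ge -f^{\wsaw}(\gb m(\gd)^2)-\gb v(\gd)$ and invokes the sharp asymptotic $f^{\wsaw}(u)\sim\gl_2 u\log(1/u)$ from Proposition~\ref{pr:wsaw}. The block decomposition and the delicate decoupling live inside the proof of Proposition~\ref{pr:largedevsuml2}, where they are handled by restricting to bridges (the bridge cost is $\log m/m$ per block, which after optimizing $m\approx e^{t/\gl_2}$ contributes only a $1/\gl_2$ in the exponent, consistent with the sharp constant). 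If you want to carry out the block argument directly rather than through $f^{\wsaw}$, you would need to mimic that bridge construction; ``averaging over endpoints'' by itself does not obviously give a subleading correction.
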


\noindent
This identifies the scaling behaviour of the free energy for small inverse temperature (i.e., in the limit 
of weak interaction). The scaling is anomalous for $d=2$, as it was in \cite{CdHPP16} for $d=1$ 
(namely, order $\beta^{2/3}$).

\medskip\noindent
{\bf 7.}
Theorem~\ref{thm:conv_free_energy} settles the existence of the free energy in a subset of the 
collapsed phase for a subclass of charge distributions. The limit is expected to exist always.

\medskip\noindent
{\bf 8.}
As shown in den Hollander~\cite[Chapter 8]{dH09}, for every $d \geq 1$ and every 
$(\gd,\gb) \in \mathrm{int}(\cC)$,
\begin{equation}
\label{fannlimit}
\lim_{n\to\infty} \frac{(\alpha_n)^2}{n} \, \log \bbZ_n^{*,\gd,\gb}  = -\chi_d,
 \end{equation} 
with $\alpha_n = (n/\log n)^{1/(d+2)}$ and with $\chi_d \in (0,\infty)$ a constant that is explicitly 
computable. The idea behind \eqref{fannlimit} is that the empirical charge makes a large 
deviation under the law $\bbP^\delta$ so that it becomes zero. The price for this large deviation 
is 
\begin{equation}
e^{-nH(\bbP^0 \,|\, \bbP^\delta) + o(n)}, \qquad n\to\infty,
\end{equation}
where $H(\bbP^0 \,|\, \bbP^\delta)$ denotes the specific relative entropy of $\bbP^0=\bbP$ 
with respect to $\bbP^\delta$. Since the latter equals $\log M(\delta)=-f(\delta)$, this accounts 
for the term that is subtracted in the excess free energy. Conditional on the empirical charge 
being zero, the attraction between charged monomers with the same sign \emph{wins} from 
the repulsion between charged monomers with opposite sign, making the polymer chain 
contract to a \emph{subdiffusive} scale $\alpha_n$. This accounts for the correction term in 
the free energy. It is shown in \cite{dH09} that, under the law $\bbP^\gd$,
\begin{equation}
\left(\frac{1}{\alpha_n}\,S_{\lfloor nt\rfloor}\right)_{0 \leq t \leq 1} 
\Longrightarrow (U_t)_{0 \leq t \leq 1}, \qquad n\to\infty,
\end{equation}
where $\Longrightarrow$ denotes convergence in distribution and $(U_t)_{t \geq 0}$ is 
a Brownian motion on $\R^d$ conditioned not to leave a ball with a deterministic radius 
and a randomly shifted center (see Fig.~\ref{fig-Browconf}). Compactification is a key 
step in the sketch of the proof provided in den Hollander~\cite[Chapter 8]{dH09}, 
which requires super-additivity of $\{\log g^*_{\gd,\gb}(\ell)\}_{\ell\in\N}$. From 
Theorem~\ref{thm:conv_free_energy}(1) we know that this property holds at least for 
$\beta$ large enough.

\begin{figure}[htbp]
\vspace{-1cm}
\begin{center}
\includegraphics[scale = 0.25]{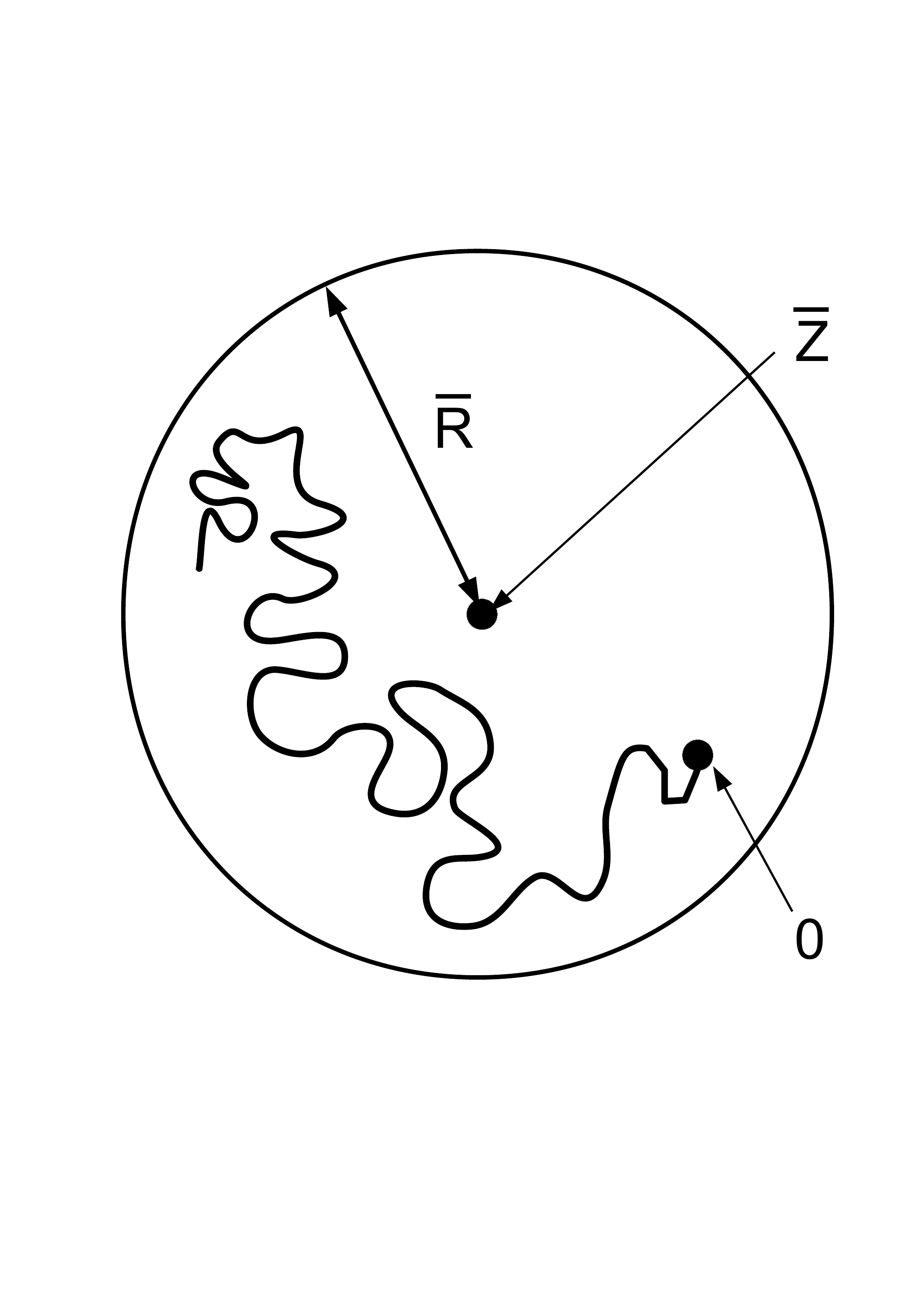}
\end{center}
\vspace{-1.5cm}
\caption{A Brownian motion starting at $0$ conditioned to stay inside the ball with radius $\bar{R}$ 
and center $\bar{Z}$. Formulas for $\bar{R}$ and the distribution of $\bar{Z}$, concentrated on the 
ball of radius $\bar{R}$ centered at $0$, are given in \cite[Chapter~8]{dH09}.}
\label{fig-Browconf}
\vspace{-.3cm}
\end{figure}

\medskip\noindent
{\bf 9.}
It is natural to expect that for every $(\gd,\gb) \in \cE$ the polymer behaves like \emph{weakly 
self-avoiding walk}. Once the empirical charge is strictly positive, the repulsion should
win from the attraction, and the polymer should scale as if all the charges were strictly 
positive, with a change of time scale only.  

\medskip\noindent
{\bf 10.}
Brydges, van der Hofstad and K\"onig~\cite{BvdHK07} derive a formula for the 
joint density of the local times of a continuous-time Markov chain on a finite graph, using 
tools from finite-dimensional complex calculus. This representation, which is the analogue 
of the Ray-Knight representation for the local times of one-dimensional simple random walk, 
involves a large determinant and therefore appears to be intractable for the analysis of 
the annealed charged polymer.

\subsection{Outline and open questions} \label{ss:outline}

The remainder of this paper is organised as follows. In Section~\ref{s:WSAW} we study 
the downward large deviations of the self-intersection local time $Q_n$ defined in \eqref{Qndef} 
under the law $P$ of simple random walk. We derive the qualitative properties of the rate 
function, which amounts to controlling the partition function (and free energy) of weakly self-avoiding walk 
with the help of cutting arguments. In Section~\ref{s:feexist} we prove Theorem~\ref{thm:fe}. 
In Section~\ref{s:Gstar} we prove Theorem~\ref{thm:criticalpoint}. The proof of part (i) 
requires a detailed analysis of the function $\ell \mapsto g_{\gd,\gb}^*(\ell)$ defined in 
\eqref{gstardef}. The proof of part (ii) is based on estimates of the function $\ell \mapsto 
g_{\gd,\gb}^*(\ell)$ for small values of $\gd$. The proof of part (iii) carries over from 
\cite{CdHPP16}. In Section~\ref{s:scalingfreeenergy} we use the results in Section~\ref{s:WSAW} 
to prove Theorem~\ref{thm:hightemp}, and in Section~\ref{s:conv_free_energy} we prove 
Theorem~\ref{thm:conv_free_energy}. In Appendix~\ref{app:bridge} we collect some 
estimates on simple random walk constrained to be a bridge, which are needed along 
the way. In Appendix~\ref{app:wsaw} we state a conjecture on weakly self-avoiding walk 
that complement the results in Section~\ref{s:WSAW}. In Appendix~\ref{app:range} we 
discuss a rough estimate on the probability of an upward large deviation for the range of 
simple random walk, trimmed when the local times exceed a given threshold. This estimate 
appears to be the key to Conjectures~\ref{conj:sharp_beta_c} and \ref{conj:asym.smallbeta}.

\medskip\noindent
Here are some \emph{open questions}:
\begin{itemize}
\item[(1)]
Is the limes superior in \eqref{felim} always a limit? For $d=1$ the answer was found to be yes.
\item[(2)]
Is $(\gd,\gb) \mapsto F^*(\gd,\gb)$ analytic throughout the extended phase 
$\cE$? For $d=1$ the answer was found to be yes.
\item[(3)]
How does $F^*(\gd,\gb)$ behave as $\beta\uparrow\gb_c(\gd)$? Is the phase 
transition first order, as for $d=1$, or higher order?
\item[(4)]
Is the excess free energy monotone in the dimension, i.e., $F^{*\,(d+1)}(\gd,\gb) 
\geq F^{*\,(d)}(\gd,\gb)$ for all $(\gd,\gb) \in \cQ$ and $d \geq 1$?
\item[(5)]
What is the nature of the expansion of $\beta_c(\gd)$ for $\gd \downarrow 0$,
of which \eqref{betacexp} gives the first three terms? Is it anomalous with a logarithmic 
correction to the term of order $\delta^{2d}$ for any $d \geq 3$?
\end{itemize}

\section{Weakly self-avoiding walk}
\label{s:WSAW}

In Section~\ref{ss:SILT} we look at the free energy $f^{\wsaw}$ of the weakly self-avoiding walk, 
identify its scaling in the limit of weak interaction (Proposition~\ref{pr:wsaw} below). 
In Section~\ref{ss:lem1} we look at the rate function for the 
downward large deviations of the self-intersection local time $Q_n$ as $n\to\infty$ 
(Proposition~\ref{pr:largedevsuml2} below). In Section~\ref{ss:lem2} we use this rate function 
to prove the scaling of $f^{\wsaw}$.

\begin{remark}
\label{rem:bridge}
{\rm Let $\cB_n$ be the set of $n$-step bridges 
\begin{equation}
\label{bridgedef}
\cB_n = \left\{S \in \Pi\colon\,0 = S^{(1)}_0 < S^{(1)}_i < S^{(1)}_n 
\,\,\forall\, 0 < i < n\right\},
\end{equation}
where $S^{(1)}$ stands for the first coordinate of simple random walk $S$. At several points 
in the paper we will use that there exists a $C \in (0,\infty)$ such that
\begin{equation}
\label{eq:asymp.bridge}
\lim_{n\to\infty} n\,P(S \in \cB_n) = C,
\end{equation}
a property we will prove in Appendix~\ref{app:bridge0}.} \hfill \qed
\end{remark}

\subsection{Self-intersection local time}
\label{ss:SILT}

Recall the definition of the self-intersection local time $Q_n=\sum_{x\in\Z^d} \ell_n(x)^2$ 
in \eqref{Qndef}. For $u\geq 0$, let
\begin{equation}
Z_n^{\wsaw}(u) = E\big[e^{-u Q_n}\big], \qquad u \in [0,\infty),
\end{equation}
be the partition function of weakly self-avoiding walk. This quantity is submultiplicative
because $Q_{n+m}\geq Q_{n}+Q_m$, $m,n\in\N$. Hence (minus) the free energy of 
the weakly self-avoiding walk
\begin{equation}
\label{eq:fwsawdef}
f^{\wsaw}(u) = - \lim_{n\to\infty} \frac{1}{n} \log Z_n^{\wsaw}(u), \qquad u \in [0,\infty),
\end{equation} 
exists. The following lemma identifies the scaling behaviour of $f^{\wsaw}(u)$ for 
$u \downarrow 0$.

\begin{proposition}
\label{pr:wsaw}
As $u\downarrow0$
\begin{equation}
\label{eq:fwsaw}
f^{\wsaw}(u) \sim 
\begin{cases}
\gl_1 u^{1/3}, &d=1,\\
\gl_2 u \log(1/u), &d=2,\\
\gl_d u, &d\geq 3,
\end{cases} 
\end{equation}
where $\gl_d$ is given in \eqref{gld}. \hfill \qed
\end{proposition}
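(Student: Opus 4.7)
The plan is to extract Proposition~\ref{pr:wsaw} from Proposition~\ref{pr:largedevsuml2} by deriving matching upper and lower bounds on $Z_n^{\wsaw}(u)$. Heuristically, $-\log Z_n^{\wsaw}(u)/n$ is governed by a variational principle balancing the interaction penalty $uQ_n$ against the cost of deviating from the typical value of $Q_n$. As $u\downarrow 0$ the optimum approaches the typical behaviour of $Q_n$, whose scaling is dimension-dependent (linear in $n$ for $d\ge 3$, $n\log n$ for $d=2$, $n^{3/2}$ for $d=1$), which is what produces the three regimes in \eqref{eq:fwsaw}.

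\textbf{Lower bound on $f^{\wsaw}$.} For any threshold $q>0$,
\[
Z_n^{\wsaw}(u) \;\le\; e^{-uq} + P(Q_n\le q),
\]
and Proposition~\ref{pr:largedevsuml2} supplies a bound on $P(Q_n\le q)$ at the scale appropriate to $d$. Optimizing over $q$ and then sending $n\to\infty$ gives a variational lower bound on $f^{\wsaw}(u)$. For $d\ge 3$ the rate function of $Q_n/n$ vanishes at $\lambda_d$ and is strictly positive below, so the optimum sits near $q\sim n\lambda_d$ and yields $f^{\wsaw}(u)\ge(1-o(1))\lambda_d\,u$. For $d=2$ the rescaling of the LDP dictated by the $n\log n$ growth of $E[Q_n]$ produces the extra $\log(1/u)$ factor, and the $d=1$ case is analogous with its anomalous exponent.

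\textbf{Upper bound on $f^{\wsaw}$ and main obstacle.} For $d\ge 3$ the direct Jensen inequality $Z_n^{\wsaw}(u)\ge e^{-u E[Q_n]}$ combined with \eqref{eq:asympQ} immediately yields $f^{\wsaw}(u)\le \lambda_d\,u$ on sending $n\to\infty$. The harder case is $d\in\{1,2\}$, where $E[Q_n]/n$ diverges and Jensen applied to the whole walk gives no useful information. The plan is to restrict to walks that decompose as concatenations of $n/L$ independent bridges of length $L$ in the sense of Remark~\ref{rem:bridge}: because successive bridges occupy disjoint slabs in the first coordinate, $Q_n$ splits as a sum of $n/L$ i.i.d.\ contributions, Jensen inside a single bridge gives an interaction cost $u\,E[Q_L\mid S\in\cB_L]/L$, and the bridge probability contributes an entropic cost of order $\log L/L$ by Remark~\ref{rem:bridge}. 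Optimizing $L=L(u)$ to balance these two costs (for $d=2$, $L_u\sim 1/u$ up to a logarithmic factor) then reproduces the claimed scaling. The main obstacle is pinning down the constants $\lambda_d$, which requires the sharp asymptotic \eqref{eq:asympQ} specialized to bridges (from Appendix~\ref{app:bridge}) together with a sufficiently quantitative version of Proposition~\ref{pr:largedevsuml2}; without these sharpenings the trial construction and the LDP bound match only up to a multiplicative constant.
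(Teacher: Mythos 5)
Your proposal is correct in outline but takes a genuinely different route in both halves. The paper invokes Varadhan's lemma to obtain the exact formula $-f^{\wsaw}(u)=\sup_{t\ge 1}[-tu-I(t)]$ (see \eqref{eq:Varadhan}) and reads off both bounds directly from it: the upper bound on $f^{\wsaw}$ comes from substituting $t=\lambda_d$ (for $d\ge 3$) or $t\approx\lambda_2\log(1/u)$ together with \eqref{It2scal} (for $d=2$), while the lower bound uses convexity of $I$ with $I(\lambda_d)=0$ for $d\ge 3$, and \eqref{It2scal} again for $d=2$. Your lower bound instead uses the elementary split $Z_n^{\wsaw}(u)\le P(Q_n\le tn)+e^{-utn}$, giving $f^{\wsaw}(u)\ge\sup_{t}\min(I(t),ut)$; this is a priori weaker than Varadhan's $\inf_t[tu+I(t)]$, but asymptotically equivalent as $u\downarrow 0$, and it has the small advantage of not using convexity of $I$, only its positivity below $\lambda_d$. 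For the upper bound at $d=2$ your bridge-concatenation with Jensen is genuinely different from the paper (which works only through $I$), but it amounts to replaying, at the level of $Z_n^{\wsaw}$, the bridge construction of Step~4 of the proof of Proposition~\ref{pr:largedevsuml2}. Two technical points there: (i) Jensen inside a bridge requires $E[Q_L\mid\cB_L]\le(1+o(1))\lambda_2 L\log L$, which for $d=2$ the paper supplies only in conditional-probability form (\eqref{liminfPQnalt}); the cleaner route is to skip Jensen and restrict to $\{Q_L\le(1+\varepsilon)\lambda_2 L\log L\}\cap\cB_L$, which costs a negligible factor $1-o(1)$. (ii) To pin down the constant $\lambda_2$, the block length must satisfy $L=u^{-1}\phi(u)$ with $\phi(u)\to\infty$ but $\log\phi(u)=o(\log(1/u))$; exactly balancing the entropic cost $\log L/L$ against the interaction cost $u\lambda_2\log L$ at $L\asymp 1/u$ overshoots by a constant factor, so the entropic term should be made negligible rather than equated to the interaction term.
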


\noindent
Proposition~\ref{pr:wsaw} extends the downward moderate deviation result for $Q_n$ 
derived by Chen~\cite[Theorem 8.3.2]{C10}. For more background on large 
deviation theory, see den Hollander~\cite{dH00}. We comment further on this result in 
Appendix~\ref{app:wsaw}, where we discuss the rate of convergence to $f^{\wsaw}(u)$ 
and the higher order terms in the asymptotic expansion of $f^{\wsaw}(u)$ as $u\downarrow 0$.

\subsection{Downward large deviations of the self-intersection local time}
\label{ss:lem1}

In Section~\ref{ss:lem2} we will show that Proposition~\ref{pr:wsaw} is a consequence of 
the following lemma describing the downward large deviation behaviour of $Q_n$ (see 
Fig.~\ref{fig-ratefunc}).

\begin{proposition}
\label{pr:largedevsuml2}
The limit
\begin{equation}
\label{Itdef}
I(t) = \lim_{n\to\infty} \left[-\frac{1}{n}\log P(Q_n \leq tn)\right], \qquad t \in [1,\infty),
\end{equation}
exists. Moreover, $t \mapsto I(t)$ is finite, non-negative, non-increasing and convex
on $[1,\infty)$, and satisfies
\begin{equation}
\label{Iasymp}
\begin{aligned}
d=2\colon \quad I(t) > 0, \quad t \geq 1,
\qquad
d \geq 3\colon \quad I(t)
\left\{\begin{array}{ll}
> 0, &1 \leq t \leq \gl_d,\\
= 0, &t \geq \gl_d.
\end{array}
\right. 
\end{aligned}
\end{equation}
Furthermore,
\begin{equation}
\label{It2scal}
d=2\colon \quad \lim_{t\to\infty} \frac{-\log I(t)}{t} = \frac{1}{\gl_2}.
\end{equation}
\hfill \qed
\end{proposition}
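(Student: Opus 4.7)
The plan is a bridge-based superadditivity argument for existence and the basic qualitative properties, a Chernoff-type bound for the positivity range, and a matching pair of bounds (Chernoff vs.\ confinement) for the sharp large-$t$ asymptotic. For \emph{existence and qualitative properties}, set $P_n^{\cB}(t) := P(Q_n \leq tn,\,S \in \cB_n)$. Concatenation of two bridges yields a bridge whose two halves visit disjoint sets of sites (by the strict monotonicity of the first coordinate in~\eqref{bridgedef}), so $Q_{n+m} = Q_n + \tilde Q_m$ on this event and $P_{n+m}^{\cB}(t) \geq P_n^{\cB}(t)\,P_m^{\cB}(t)$. Fekete's lemma then produces $\tilde I(t) := \lim -\tfrac{1}{n}\log P_n^{\cB}(t)$, and by~\eqref{eq:asymp.bridge} the bridge cost $P(S\in\cB_n) \sim C/n$ is a subexponential prefactor, so $I(t) = \tilde I(t)$ exists. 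The same concatenation with two different thresholds ($P_{n+m}^{\cB}(\alpha t_1 + (1-\alpha)t_2) \geq P_n^{\cB}(t_1)\,P_m^{\cB}(t_2)$ with $n = \alpha(n+m)$) gives convexity of $I$. Monotonicity and non-negativity are immediate, and the straight-line strategy $S_i = i\,e_1$ has $Q_n = n$ and probability $(2d)^{-n}$, so $I(1) \leq \log(2d) < \infty$.

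For \emph{positivity and vanishing}, the $L^2$-bound \eqref{eq:asympQ}--\eqref{eq:asympQalt} gives $P(Q_n \leq tn) \to 1$ for $d \geq 3$ and $t > \lambda_d$ (Chebyshev), so $I(t) = 0$; the endpoint $t = \lambda_d$ follows by continuity of the convex function $I$. For positivity I would use the Chernoff inequality
\[
P(Q_n \leq tn) \leq e^{utn}\,Z_n^{\wsaw}(u), \qquad u \geq 0,
\]
yielding $I(t) \geq \sup_{u \geq 0}[f^{\wsaw}(u) - ut]$. It then suffices to show $(f^{\wsaw})'(0^+) \geq \lambda_d$ in $d \geq 3$ and $(f^{\wsaw})'(0^+) = +\infty$ in $d = 2$. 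In $d \geq 3$ the upper bound is Jensen ($f^{\wsaw}(u) \leq u\,E[Q_n]/n$), and the matching lower bound can be extracted without circularity from sharp $L^2$-concentration of $Q_n$ around $\lambda_d n$. In $d=2$ a cleaner route is Cauchy--Schwarz: $Q_n R_n \geq n^2$ turns $\{Q_n \leq tn\}$ into $\{R_n \geq n/t\}$, and since the typical range is only $\sim n/\log n$, an upward large-deviation bound $P(R_n \geq n/t) \leq e^{-c(t)n}$ (obtainable by bridge superadditivity for $R_n$, which is additive on bridge concatenations) gives $I(t) \geq c(t) > 0$ for every $t \geq 1$.

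For the \emph{sharp asymptotic} in $d=2$, I would prove the two matching bounds independently. The lower bound $\liminf_{t\to\infty} -\log I(t)/t \geq 1/\lambda_2$ comes from the Chernoff bound combined with a direct sharp estimate $f^{\wsaw}(u) \geq (1-o(1))\,\lambda_2 u \log(1/u)$ as $u\downarrow 0$, obtained by a range-truncation argument at scale $R_n \asymp n/\log(1/u)$ and a first/second-moment analysis of $Q_n$ restricted to that event; optimizing $\lambda_2 u \log(1/u) - ut$ in $u$ then produces a supremum of order $\lambda_2 e^{-1-t/\lambda_2}$, hence the constant $1/\lambda_2$. The matching upper bound on $I(t)$ comes from a confinement strategy: force the walk into a Euclidean ball of suitably tuned radius so that the typical in-ball self-intersection local time is just $\lesssim tn$, with the spectral cost of confinement balancing to the same constant $1/\lambda_2$.

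The \emph{main obstacle} will be producing this sharp constant without invoking Proposition~\ref{pr:wsaw}, which is itself \emph{derived from} Proposition~\ref{pr:largedevsuml2} in Section~\ref{ss:lem2}. Crude Chernoff / Cauchy--Schwarz arguments capture the correct exponential scale $e^{-t/\lambda_2}$ but miss the constant. The sharp direct lower bound on $f^{\wsaw}$ in $d=2$ (requiring a careful range truncation at precisely the scale $R_n \asymp n/\log(1/u)$ dictated by the Euler--Lagrange optimum) and the matching spectral analysis of the confined walk are therefore the most delicate technical steps of the proof.
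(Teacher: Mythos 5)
Your skeleton (bridge superadditivity for existence, block concatenation for convexity, Chernoff for positivity, a direct strategy for the large-$t$ asymptotic) is the same as the paper's, and the positivity argument for $d=2$ via Cauchy--Schwarz $Q_n R_n\geq n^2$ combined with Hamana--Kesten is a legitimate alternative route to the crude bound $I(t)>0$. However, there are three places where the proposal has genuine gaps.

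First, going from the bridge-restricted limit $\tilde I(t)$ to the unrestricted $I(t)$ is \emph{not} a consequence of $P(S\in\cB_n)\sim C/n$ being a subexponential prefactor. The inclusion $\{Q_n\leq tn\}\cap\cB_n\subset\{Q_n\leq tn\}$ gives $\tilde I(t)\geq \limsup_n[-\tfrac1n\log P(Q_n\leq tn)]$, but for the reverse direction one must bound $P(Q_n\leq tn)$ from above in terms of $P(Q_n\leq tn,\cB_n)$, i.e.\ show that conditioning on the bridge event does not \emph{help} the walk keep $Q_n$ small on the exponential scale. That is exactly what the Hammersley--Welsh unfolding argument accomplishes (see Step~1 of the paper's proof): unfolding a walk with $Q_n\leq tn$ into a bridge can only decrease $Q_n$, at the price of an $e^{O(\sqrt n)}$ combinatorial factor. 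You need this step; the prefactor estimate by itself is insufficient.

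Second and more seriously, the ball-confinement strategy you propose for the upper bound on $I(t)$ in $d=2$ does not work. If you force the walk into a Euclidean ball of radius $r$, the short-range self-intersections are unaffected by the confinement: windows of duration $k\ll r^2$ behave like free $2$-dimensional random walk, so they contribute about $\lambda_2\, n\log(r^2)$ to $E[Q_n]$, while Cauchy--Schwarz already forces $Q_n\geq n^2/R_n\gtrsim n^2/r^2$. To make the long-range piece $\lesssim tn$ you need $r^2\gtrsim n/t$, but then the short-range piece is $\gtrsim\lambda_2\,n\log(n/t)$, which diverges relative to $tn$ for fixed $t$ as $n\to\infty$. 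There is no choice of radius for which the confined walk's typical $Q_n$ is $\lesssim tn$. This is exactly why the paper confines \emph{in time} rather than in space: the walk is cut into $n/m$ consecutive blocks of length $m\approx e^{t/\lambda_2}$, each required to be a bridge. Within each block the walk is \emph{unconstrained} apart from the bridge condition, so its self-intersection local time is the typical $\lambda_2 m\log m\approx mt$, the bridge condition makes the blocks visit disjoint sites so the contributions add up to $\approx tn$, and the cost of the bridge condition is only polynomial, $\approx 1/m$ per block. Replacing this with a spatial confinement misses the mechanism entirely.

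Third, the ``independent'' route to the lower bound $f^{\wsaw}(u)\geq(1-o(1))\lambda_2 u\log(1/u)$ via range truncation is not actually independent. It reduces to bounding $P(R_n\geq sn)$ from above with the precise exponential rate $J(s)\approx e^{-1/(\lambda_2 s)}$ as $s\downarrow 0$, and the proof of that bound (see \eqref{Js2scal} and the comment following it in Appendix~\ref{app:range}) is itself an adaptation of the strip-bridge argument in Step~4 and the block Chernoff argument in Step~5 that you are trying to avoid. In other words, you end up needing the same two cutting arguments either way; detouring through $f^{\wsaw}$ does not save you from them, and it introduces the circularity risk you yourself flag (Proposition~\ref{pr:wsaw} is \emph{derived from} Proposition~\ref{pr:largedevsuml2} in the paper). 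The clean way out is to prove the two bounds on $P(Q_n\leq tn)$ directly, as the paper does: strips of bridges for the strategy lower bound (Step~4), and block-wise Chernoff for the complementary upper bound (Step~5).
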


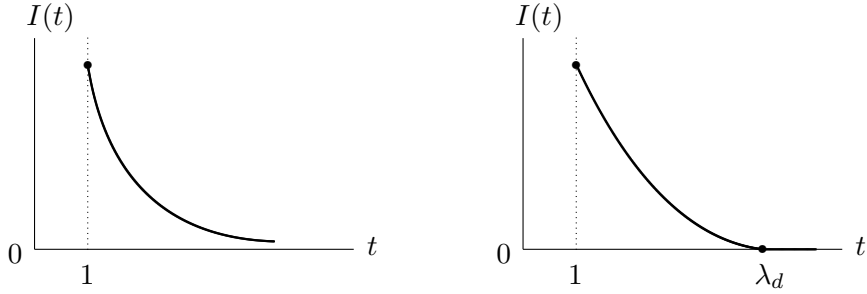
\begin{figure}[htbp]
\begin{center}
\setlength{\unitlength}{0.35cm}
\begin{picture}(12,12)(4,-2)
\put(0,0){\line(12,0){12}}
\put(0,0){\line(0,8){8}}
{\thicklines
\qbezier(2,7)(3,0.5)(9,.3)
}
\qbezier[40](2,0)(2,4)(2,8)
\put(-1,-.5){$0$}
\put(12.5,-0.2){$t$}
\put(-0.3,8.5){$I(t)$}
\put(1.7,-1.3){$1$}
\put(2,7){\circle*{.35}}
\end{picture}
\begin{picture}(12,12)(-2,-2)
\put(0,0){\line(12,0){12}}
\put(0,0){\line(0,8){8}}
{\thicklines
\qbezier(2,7)(5,0.5)(9,0)
\qbezier(9,0)(10,0)(11,0)
}
\qbezier[40](2,0)(2,4)(2,8)
\put(-1,-.5){$0$}
\put(12.5,-0.2){$t$}
\put(-0.3,8.5){$I(t)$}
\put(1.7,-1.3){$1$}
\put(8.7,-1.3){$\gl_d$}
\put(2,7){\circle*{.35}}
\put(9,0){\circle*{.35}}
\end{picture}
\end{center}
\vspace{0cm}
\caption{\small Qualitative plots of $t \mapsto I(t)$ for $d=2$ and $d \geq 3$.}
\label{fig-ratefunc}
\end{figure}

\begin{proof}
The proof comes in 5 Steps. Steps 1--2 use bridges and superadditivity, Steps 3--5 
use cutting arguments.

\medskip\noindent
{\bf 1.\ Existence, finiteness and monotonicity of $I$.} 
Recall \eqref{bridgedef}. Let $\cB_n$ be short for $\{S \in \cB_n\}$. Define
\begin{equation}
u(n) = P(Q_n \leq tn,\, \cB_n),\quad n\in\bbN.
\end{equation}
The sequence $(\log u(n))_{n\in\N}$ is superadditive. Therefore $\lim_{n\to\infty} [-\frac{1}{n}
\log u(n)] = \bar{I}(t) \in [0,\infty]$ exists. Clearly,
\begin{equation}
\label{PQnest1}
\limsup_{n\to\infty} \left[-\frac{1}{n} \log P(Q_n \leq tn)\right] \leq \bar{I}(t).
\end{equation}
The reverse inequality follows from a standard unfolding procedure applied to bridges that 
decreases $Q_n$. Indeed, using the bound introduced in Hammersley and Welsh~\cite{HW62}, 
we get
\begin{equation}
|\{Q_n \leq tn\}| \leq e^{\pi \sqrt{\frac{n}{3}}(1+o(1))} |\{Q_n \leq tn\}\cap \cB_n|,
\end{equation}
from which it follows that 
\begin{equation}
\label{PQnest2}
\liminf_{n\to\infty} \left[-\frac{1}{n} \log P(Q_n \leq tn)\right] \geq \bar{I}(t).
\end{equation}
Combining \eqref{PQnest1} and \eqref{PQnest2}, we get \eqref{Itdef} with $I=\bar{I}$. Finally, it is 
obvious that $t \mapsto I(t)$ is non-increasing on $[1,\infty)$. Since $\{Q_n =n\} = \{(S_i)_{i=0}^n 
\text{ is self-avoiding}\}$, we have $I(1) = \log \mu_c(\bbZd) < \infty$, with $\mu_c(\bbZd)$ the 
connective constant of $\bbZd$.

\medskip\noindent
{\bf 2.\ Convexity of $I$.}
Every $2n$-step walk $S_{[0,2n]}= (S_i)_{0\leq i\leq 2n}$ can be decomposed into two $n$-step
walks: $S_{[0,n]}=(S_i)_{0\leq i \leq n}$ and $\bar S_{[0,n]} = (S_{n+i} - S_n)_{0\leq i\leq n}$. Fix 
$a,b>0$. Restricting both parts to be a bridge, we get
\begin{equation}
\begin{aligned}
P(Q_{2n} \leq (a+b) n,\, \cB_{2n}) 
&\geq P\Big( Q_n \leq a n,\,\bar Q_n \leq b n,\,S \in \cB_n\,,\bar S \in \cB_n \Big) \\
&= P\big( Q_n \leq a n,\, S \in \cB_n \big) \, P\big( Q_n \leq b n,\, S \in  \cB_n \big),
\end{aligned}
\end{equation}
where $\bar Q_n = \sum_{1\leq i,j \leq n} \ind_{\{\bar S_i = \bar S_j\}}$. Taking 
the logarithm, diving by $2n$ and letting $n \to \infty$, we get
\begin{equation}
I\left(\tfrac12(a+b)\right) \leq \tfrac12 [I(a) + I(b)].
\end{equation} 

\medskip\noindent
{\bf 3. Two regimes of $I$ for $d\geq 3$.} 
Clearly, $I(t)=0$ for $t\geq \gl_d$. To prove that $I(t)>0$ for $1\leq t <\gl_d$, we 
cut $[0,n]$ into sub-intervals of length $1/\eta$, where $\eta>0$ is small and 
$\eta n$ is integer. Note that
\begin{equation}
\label{eq:cutQn}
Q_n \geq \sum_{1 \leq k \leq \eta n} Q^{(k)}, \qquad 
Q^{(k)} = \sum_{\frac{k-1}{\eta}+1 \leq i,j \leq \frac{k}{\eta}} \ind_{\{S_i = S_j\}}.
\end{equation}
Fix $\gep >0$ small. Then, by \eqref{eq:asympQ}, there exists an $\eta_\gep$ such 
that $E[Q^{(1)}] \geq \frac{1}{\eta}(\gl_d - \gep^2)$ for $0< \eta \leq \eta_\gep$. Moreover,
by the Markov property of simple random walk, the $Q^{(k)}$'s are independent. 
Therefore we may estimate, for $\gamma>0$,
\begin{equation}
\label{Chernov}
\begin{aligned}
&P\big(Q_n \leq (\gl_d - \gep)n\big) 
\leq P\left(-\gamma \sum_{1\leq k\leq \eta n} Q^{(k)} \geq - \gamma (\gl_d-\gep)n \right)\\ 
&\qquad \leq e^{\gamma (\gl_d - \gep)n} E\big[e^{-\gamma Q^{(1)}}\big]^{\eta n}  \leq 
e^{\gamma (\gl_d - \gep)n} \Big(1 - \gamma E[Q^{(1)}] 
+ \tfrac12\gamma^2 E[(Q^{(1)})^2]  \Big)^{\eta n}\\
&\qquad \leq e^{\gamma (\gl_d - \gep)n}\,e^{\big(-\gamma E[Q^{(1)}] 
+ \tfrac12\gamma^2 E[(Q^{(1)})^2]\big)\,\eta n}
\leq  e^{- n\gamma \big(\gep - \tfrac12\eta\gamma E[(Q^{(1)})^2]\big)}.
\end{aligned}
\end{equation}
Because $Q^{(1)} \leq 1/\eta^2$ (and hence $E[(Q^{(1)})^2] \leq 1/\eta^4$), it suffices to 
choose $\gamma$ small enough to get from \eqref{Itdef} that $I(\gl_d - \gep)>0$. Since
$\gep>0$ is arbitrary, this proves the claim.

\medskip\noindent
{\bf 4.\ Positivity and asymptotics of $I$ for $d=2$.}
To obtain a lower bound on the probability $P(Q_n \leq tn)$ we use a specific strategy, 
explained informally in Fig.~\ref{fig-strategy}. Let $\gep>0$ and 
\begin{equation}
\label{mchoice}
m = \lfloor e^{\frac{t}{(1+\gep)\gl_2}}\rfloor\geq 2.
\end{equation}

\begin{figure}[htbp]
\begin{center}
\setlength{\unitlength}{0.4cm}
\begin{picture}(30,8)(0,0)
\put(0,0){\includegraphics[width=12cm,height=3cm]{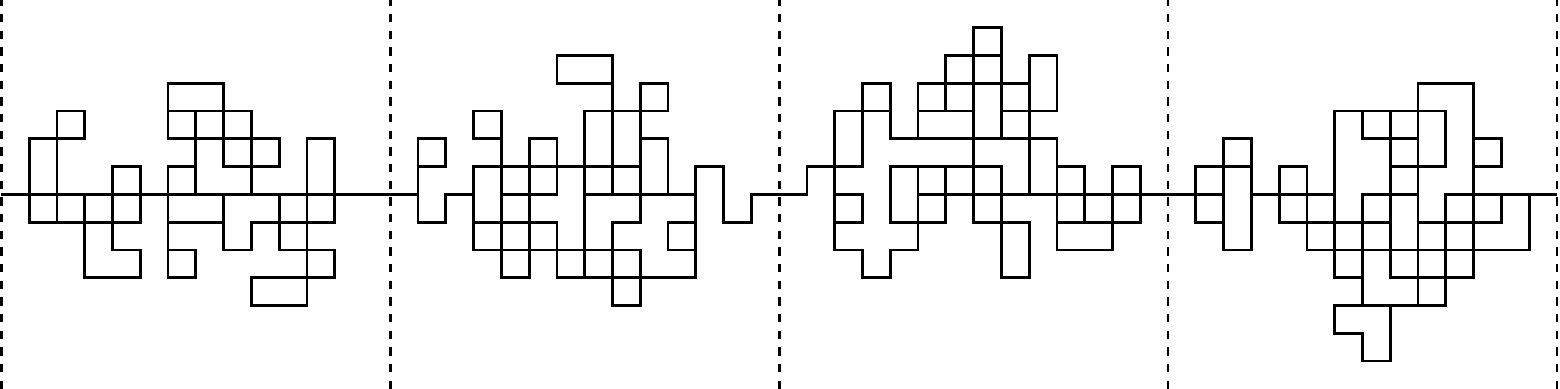} }
\put(1.2,0.3){\tiny $m_t\approx e^{t/\gl_2}$ steps}
\put(.5,-.7){\tiny $Q_m \lesssim \gl_2 m\log m$}
\end{picture}
\end{center}
\vspace{0cm}
\caption{\small \emph{Informal description of the specific strategy to obtain $Q_n\leq tn$}: 
Confine $(S_i)_{i=0}^n$ to $n/m$ consecutive strips, each containing $m \approx e^{t/\gl_2}$ 
steps. On each strip impose the walk to be a bridge. By \eqref{eq:asympQ}, each strip 
contributes $\lesssim \gl_2 m \log m$ to the self-intersection local time, and hence
$Q_n \lesssim  \frac{n}{m}(\gl_2 m\log m) \approx t n$. The cost per bridge is $\approx
1/m$. Consequently, the cost of the consecutive strip strategy is $(1/m)^{n/m} \approx 
\exp(-n m^{-1} \log m)$. Hence $I(t)\lesssim  m^{-1} \log m = c\,te^{-t/\gl_2}$.}
\label{fig-strategy}
\end{figure}

\noindent
For $n\in \N$, write $n = pm+q$, where $p = p(n) \in \N_0$  and $0 < q = q(n) \leq m$. 
For $k\in\N$, define the events
\begin{equation}
\begin{aligned} \label{eq:blockbridge}
U_k &= \Big\{ S^{(1)}_{(k-1)m} \leq S^{(1)}_{i} \leq S^{(1)}_{km - 1}\,\,
\forall\, (k-1)m < i < km,\, S^{(1)}_{km} =S^{(1)}_{km-1} + 1 \Big\},\\
V_k &= \{Q^{(k)} \leq (1+\gep)\gl_2 m \log m\}, 
\end{aligned}
\end{equation}
with $Q^{(k)}$ as in \eqref{eq:cutQn} with $1/\eta=m$, 
and
\begin{equation}
W = \left[\bigcap_{k=1}^p U_k \cap V_k \right] \bigcap 
\left[\bigcap_{j=1}^q \left\{S^{(1)}_{pm+j} = S^{(1)}_{pm}+j\right\}\right].
\end{equation}
Note that, on the event $W$,
\begin{equation}
Q_n = \sum_{k=1}^p Q^{(k)} \leq (1+\gep)\gl_2 \,p\, m \log m \leq tn.
\end{equation}
Hence
\begin{equation}
P(Q_n \leq tn) \geq P(Q_n \leq tn,\,W)
\geq \Big[ \frac{1}{4}\,P\big(Q_m \leq (1+ \gep)\gl_2 m \log m,\, S \in \cB_m\big) \Big]^p 
\Big(\frac{1}{4} \Big)^q.
\end{equation}
We therefore obtain
\begin{equation}
\frac{1}{n}\log P(Q_n \leq tn) \geq \frac{1 - \frac{q}{n}}{m}
\Big[\log P\big(Q_m \leq (1 + \gep) \gl_2 m \log m,\, S \in \cB_m\big) - \log 4  \Big] 
- \frac{q}{n} \log 4
\end{equation}
and, by taking the limit $n\to \infty$, we get
\begin{equation}
\label{liminfPQn}
\liminf_{n\to\infty} \frac{1}{n}\log P(Q_n \leq tn) \geq \frac{1}{m} 
\Big[\log P\big(Q_m \leq (1+\gep)\gl_2 m \log m,\, S \in \cB_m\big) - \log 4  \Big].
\end{equation}
In Appendix \ref{app:bridge1} we prove that
\begin{equation}
\label{liminfPQnalt}
P\big(Q_m \leq (1+\gep)\gl_2 m \log m,\, S \in \cB_m\big) \sim P(S \in \cB_m),
\qquad m \to \infty. 
\end{equation}
Therefore, by \eqref{eq:asymp.bridge}, the right-hand side of \eqref{liminfPQn} scales like 
$-\log m/m$ as $m\to\infty$. Combining \eqref{Itdef}, \eqref{mchoice} and
 \eqref{liminfPQn}--\eqref{liminfPQnalt}, we arrive at
\begin{equation}
\label{It2upper}
I(t) \leq  \frac{t}{(1+\gep)\gl_2}\,e^{-\frac{t}{(1+\gep)\gl_2}}[1+o(1)], \qquad t\to\infty.
\end{equation}
This proves that $\liminf_{t\to\infty} -\log I(t)/t \geq 1/(1+\gep)\gl_2$. Let $\gep \downarrow 0$ 
to get the lower half of \eqref{It2scal}.

\medskip\noindent
{\bf 5.} 
To obtain an upper bound on the probability $P(Q_n \leq tn)$ we use the same 
type of strategy. Let $\gep >0$, choose $m$ large enough so that $E[Q^{(1)}] 
\geq (1-\gep) \gl_2  m \log m$, and use that there exists a constant $c$ such 
that $E[Q_n^2] \leq c (n\log n)^2$. Cut $[0,n]$ into sub-intervals of length $m$, 
similarly as in \eqref{eq:cutQn} with $m$ instead of $1/\eta$ (assume that $n/m$ 
is integer). Estimate
\begin{align}
\label{Pestlink}
P(Q_n \leq tn) &\leq  P\Big(\sum_{1\leq i \leq n/m} Q^{(i)} \leq t n \Big) 
\leq e^{\gamma t n} E\big[e^{-\gamma Q^{(1)}} \big]^{n/m} \nonumber\\
&\leq  e^{\gamma tn} e^{\frac{n}{m}\big(- \gamma E[Q^{(1)}] 
+ \tfrac12 \gamma^2 E[(Q^{(1)})^{2}] \big)} 
\leq e^{\gamma tn} e^{\frac{n}{m}\big( -\gamma (1-\gep) \gl_2 m \log m 
+ c\tfrac12 \gamma^2 m^2 (\log m)^2\big)}.
\end{align}
Choose $m = \lfloor e^{\frac{1+\gep}{1-\gep}\frac{t}{\gl_2}}\rfloor$, which diverges 
as $t\to\infty$. Then \eqref{Pestlink} becomes
\begin{equation}
P(Q_n \leq tn) \leq e^{- n \gamma \big(-t\gep + c\tfrac12\gamma m (\log m)^2\big)}.
\end{equation}
Optimizing over $\gamma$, i.e., choosing $\gamma = t\gep/c\,m (\log m)^2$, 
we get
\begin{equation}
P(Q_n \leq tn) \leq \exp\Big( - c(\gep) e^{-\frac{1+\gep}{1-\gep}\frac{t}{\gl_2}} n\Big)
\end{equation}
for some constant $c(\gep)>0$, and so we arrive at
\begin{equation}
I(t)\geq c(\gep)\,e^{-\frac{1+\gep}{1-\gep}\frac{t}{\gl_2}}, \qquad t \to \infty.
\end{equation}
This proves that $\limsup_{t\to\infty} -\log I(t)/t \leq (1+\gep)/(1-\gep)\gl_2$.
Let $\gep \downarrow 0$ to get the upper half of \eqref{It2scal}, which completes 
the proof of Proposition~\ref{pr:largedevsuml2}.
\end{proof}

\begin{remark}\rm
\label{rem:adaptdownard}
We may adapt the argument in Step 4 to obtain a result that will be needed in
\eqref{eq:dowarddevConstrained} below, namely, a lower bound on the probability
\begin{equation}
v_n(t) = P \left(Q_n \leq tn,\, \max_{x\in\bbZ^2} \ell_n(x) 
\leq c_1e^{c_2t}\right)
\end{equation}
with $c_1>0$, $c_2=\big( 2\gl_2(1+\tfrac14\gep) \big)^{-1}$ and $\gep>0$ small. 
This lower bound reads
\begin{equation}
\label{lbrelevant}
\liminf_{n\to\infty} \frac{1}{n}\log v_n(t) \geq - \frac{t}{(1+\gep)\gl_2}\, 
e^{-\frac{t}{(1+\gep)\gl_2}}[1+o(1)], \qquad t\to\infty.
\end{equation}
Indeed, the strategy above is still valid, and \eqref{liminfPQn} becomes
\begin{equation}
\begin{aligned}
&\liminf_{n\to\infty} \frac{1}{n}\log v_n(t)\\
&\qquad \geq \frac{1}{m} 
\Big[\log P\Big(Q_m \leq (1+\gep)\gl_2 m \log m,\, \max_{x\in\bbZ^2} \ell_m(x)
\leq  c_1m^{c_3},\, S \in \cB_m \Big) - \log 4  \Big]
\end{aligned}
\end{equation}
with $m$ as in \eqref{mchoice} and $c_3=\tfrac12(1+\gep)/(1+\tfrac14\gep)$. Since the local 
times are typically of order $\log m$, the constraint on the maximum of the local times is 
harmless in the limit as $m\to\infty$ and can be removed. After that we obtain \eqref{lbrelevant}
following the argument in \eqref{liminfPQn}--\eqref{liminfPQnalt}. To check that the constraint 
can be removed, estimate
\begin{equation}
\begin{aligned}
&P\Big(\max_{x\in\bbZ^2} \ell_m(x) >  c_1m^{c_3}\Big) 
\leq m P\big(\ell_m(0) >  c_1m^{c_3}\big)\\
&\qquad \leq m \left(1-\frac{c_4}{\log m}\right)^{c_1 m^{c_3}} 
\leq m\,e^{-c_1c_4 m^{c_3} \log m},
\end{aligned}
\end{equation}
which is $o(1/m)$. \hfill \qed
\end{remark}

\subsection{Scaling of the free energy of weakly self-avoiding walk}
\label{ss:lem2}

In this section we prove Proposition~\ref{pr:wsaw}.

\begin{proof}
From Proposition~\ref{pr:largedevsuml2} and Varadhan's lemma we obtain 
\begin{equation}
\label{eq:Varadhan}
- f^{\wsaw} (u) = \sup_{t \in [1,\infty)} [-tu -I(t)].
\end{equation}

\medskip\noindent
{\bf Upper bound:}
For $d\geq 3$, choose $t=\gl_d$ and use that $I(\gl_d)=0$, to obtain $-f^{\wsaw} (u) 
\geq -\gl_du$ for all $u$, which is the upper half of \eqref{eq:fwsaw}. 

For $d=2$, by \eqref{It2scal}, for any $\gep>0$ we have $I(t)\leq e^{-(1-\gep) t/\gl_2}$ 
for $t$ large enough. Choose $t=(1-\gep)^{-1}\gl_2\log(1/u)$ to obtain $-f^{\wsaw} (u) \geq - 
(1-\gep)^{-1}\gl_2u\log(1/u)-u $, so that 
\begin{equation}
\limsup_{u\downarrow 0 } \frac{f^{\wsaw}(u)}{u\log(1/u) }
\leq (1-\gep)^{-1} \gl_2.
\end{equation}
Let $\gep\downarrow 0$ to get the upper half of \eqref{eq:fwsaw}.

\medskip\noindent
{\bf Lower bound:}
For $d\geq 3$, write
\begin{equation}
- f^{\wsaw}(u) = \sup_{1\leq t\leq \gl_d} [-tu-I(t)] 
= -\gl_d u + \sup_{1 \leq t \leq \gl_d} [(\gl_d - t)u-I(t)].
\end{equation}
Fix $\gep>0$ small. Then $I(\gl_d - \gep) >0$. By convexity, $I(t) \geq \frac{\gl_d - t}{\gep} 
I(\gl_d-\gep)$ for all $1 \leq t \leq \gl_d - \gep$. Therefore
\begin{equation}
- f^{\wsaw}(u) \leq -\gl_d u + \sup_{1 \leq t \leq \gl_d - \gep} 
\left[(\gl_d - t)u - \tfrac{\gl_d - t}{\gep} I(\gl_d - \gep)\right] 
\vee \sup_{\gl_d - \gep< t\leq \gl_d} [(\gl_d - t)u - I(t)].
\end{equation}
For $u \leq I(\gl_d - \gep)/\gep$ the first supremum is non-positive and the second supremum 
is at most $\gep u$. This implies that $f^{\wsaw}(u) \geq (\gl_d-\gep) u$ for $u$ small 
enough (namely, $u \leq I(\gl_d - \gep)/\gep$). Let $\gep \downarrow 0$ to get the lower
half of \eqref{eq:fwsaw}.

For $d=2$, by \eqref{It2scal}, for any $\gep>0$ we have $I(t)\geq e^{-(1+\gep) t/\gl_2}$ for 
$t$ large enough. We have
\begin{equation}
\begin{aligned}
-f^{\wsaw} (u) &\leq \sup_{1 \leq t \leq t_0 } [-tu - I(t)] \
\vee\, \sup_{t \geq t_0} [-tu -I(t)]\\
&\leq \sup_{1 \leq t \leq t_0 } [- I(t)] \vee\, \sup_{t\geq t_0} \left[-tu -  e^{-(1+\gep) t/\gl_2}\right]\\
&= - (1+\gep)^{-1} \gl_2 u \log(1/u) + O(u),
\end{aligned}
\end{equation}
where the first supremum is simply a constant and the last supremum is attained at 
$t = - (1+\gep)^{-1}\gl_2 \log((1+\gep)^{-1}\gl_2 u)$, which is larger than $t_0$ for 
$u$ small enough. Let $\gep\downarrow 0$ to get the lower half of \eqref{eq:fwsaw}.
\end{proof}

\section{Bounds on the annealed free energy}
\label{s:feexist}

In this section we prove Theorem~\ref{thm:fe}. It is obvious from \eqref{Zndbdef}--\eqref{felim} 
that $F(\gd,\gb) \leq 0$. The lower bound $F(\gd,\gb) \geq -f(\gd)$ is derived by forcing 
simple random walk to stay inside a ball of radius $\alpha_n=(n/\log n)^{1/(d+2)}$ centered 
at the origin. Indeed, let $\cE_n=\{S_i \in B(0,\alpha_n)\,\,\forall\,0 \leq i \leq n\}$. Then, by 
\eqref{Zndbstar},
\begin{equation}
\bbZ_n^{*,\gd,\gb} \geq E\bigg[ \ind_{\cE_n} 
\prod_{x\in\bbZd} g_{\gd,\gb}^*\big(\ell_n(x)\big) \bigg].
\end{equation}
As shown in Lemma~\ref{lem:estimG}(2) below, we have $g_{\delta,\beta}^*(\ell) \asymp 
1/\sqrt{\ell}$ as $\ell\to\infty$. Hence there exists a $c>0$ such that 
\begin{equation}
\bbZ_n^{*,\gd,\gb} \geq E\bigg[ \ind_{\cE_n} \exp\bigg(-c\sum_{x\in\bbZd} 
\log \ell_n(x)\bigg) \bigg].
\end{equation}
Since $\sum_{x\in\bbZd} \ell_n(x) = n$, Jensen's inequality gives
\begin{equation}
\bbZ_n^{*,\gd,\gb} \geq E\Big[ \ind_{\cE_n} \exp\Big(-c R_n \log \frac{n}{R_n}\Big)\Big] 
\end{equation}
with $R_n = |\{x\in\bbZd\colon\,\ell_n(x)>0\}|$ the range up to time $n$. On the event 
$\cE_n$, we have $R_n = O(\alpha_n^d) = o(n)$, $n\to\infty$. Hence there exists 
a $c'>0$ such that
\begin{equation}
\bbZ_n^{*,\gd,\gb} \geq P(\cE_n) \exp\Big(-c' \alpha_n^d \log n\Big).
\end{equation}
But $P(\cE_n) = \exp(-[1+o(1)]\mu_d n/\alpha_n^2)$ with $\mu_d$ the principal Dirichlet 
eigenvalue of the Laplacian on the ball in $\R^d$ of unit radius centered at the origin. 
Hence
\begin{equation}
F^*(\gd,\gb) = \limsup_{n\to\infty} \frac{1}{n} \log \bbZ_n^{*,\gd,\gb}  \geq 0,
\end{equation}
which proves the claim (recall \eqref{excessfe}).

\section{Critical curve}
\label{s:Gstar}

In Section~\ref{ss:proofcritgen} we prove Theorem~\ref{thm:criticalpoint}(i). In 
Section~\ref{ss:Gstarsmalldelta} we derive lower and upper bounds on 
$g^*_{\gd,\gb}$ for small $\gd,\gb$ (Lemma~\ref{lem:estimG} below). In 
Sections~\ref{ss:proofscalcritsmall-lb} and \ref{ss:proofscalcritsmall-ub} we 
combine these bounds with Proposition~\ref{pr:largedevsuml2} and a detailed 
study of the cost of ``rough local-time profiles'' of simple random walk, in order
to derive lower and upper bounds, respectively, on the critical curve for small 
charge bias (Lemma~\ref{lem:lbcritcurve} below; see also Lemma~\ref{lem:rough}). The latter 
bounds imply Theorem~\ref{thm:criticalpoint}(ii). In 
Section~\ref{ss:proofscalcritlarge} we prove Theorem~\ref{thm:criticalpoint}(iii), 
which carries over from \cite{CdHPP16}.

\subsection{General properties of the critical curve}
\label{ss:proofcritgen}

\begin{proof}
The proof is standard. Fix $\gd\in [0,\infty)$. Clearly, $\gb \to F^*(\gd,\gb)$ is non-increasing 
and convex on $(0,\infty)$, and hence is continuous on $(0,\infty)$. Moreover, from Jensen's 
inequality we get $F^*(\gd,0)=-f(\gd) \geq F^*(\gd,\gb) \geq -f(\gd) - \gb$, so $\gb \to F^*(\gd,\gb)$ 
is actually continuous on $[0,\infty)$.

By Theorem~\ref{thm:fe}, we know that $F^*(\gd,\gb) \geq 0$. Since $\gb \mapsto 
F^*(\gd,\gb)$ is non-increasing and continuous, there exists a $\gb_c(\gd)=\sup\{\gb\in 
(0,\infty)\colon\,F^*(\gd,\gb)>0\}$ such that $F^*(\gd,\gb)>0$ when $0<\gb<\gb_c(\gd)$ 
and $F^*(\gd,\gb)=0$ when $\gb \geq \gb_c(\gd)$. Since $(\gd,\gb) \mapsto F^*(\gd,
\gb)$ is convex on $\cQ$, the level set $\{(\gd,\gb)\in \cQ \colon\,F^*(\gd,\gb) \leq 0\}$ 
is convex, and it follows that $\gd \mapsto \gb_c(\gd)$ (which coincides with the 
boundary of this level set) is also convex. 

First, fix $\gd\in [0,\infty)$. We prove that $\gb_c(\gd)<\infty$ by showing that, for 
$\gb$ large enough, $g_{\gd,\gb}^*(\ell)\leq 1$ for all $\ell\in\N$, which implies that 
$F^*(\gd,\gb)=0$. Indeed, by choosing $\gep>0$ small enough and cutting the integral 
in \eqref{gstardef} according to whether $|\gO_\ell| \leq \gep$ or $|\gO_\ell|> \gep$, 
we get
\begin{equation}
\label{eq:boundgstar}
g_{\gd,\gb}^*(\ell) \leq e^{\frac{\gd^2}{4\gb}}\ 
\bbP(|\gO_\ell | \leq \gep) + e^{-\gb \gep^2 + \gd \gep}.
\end{equation}
By the Local Limit Theorem, we know that $\lim_{\ell\to\infty} \bbP(|\gO_\ell | \leq \gep)=0$, 
so that $\sup_{\ell \in \N} \bbP(|\gO_\ell | \leq \gep) <1$ provided $\gep$ is small enough. 
The claim follows by choosing $\gb$ large enough in \eqref{eq:boundgstar}. (This argument 
corrects a mistake in \cite[Section 3.1]{CdHPP16}.)

Next, fix $\gd\in (0,\infty)$. Then $F^*(\gd,0)=-f(\gd)>0$, and so $\gb_c(\gd)>0$ by 
continuity. Finally, since $F^*(0,\gb) = 0$ for $\gb \in (0,\infty)$, we get $\gb_c(0)=0$.

The convexity of $\gd \mapsto \gb_c(\gd)$ and the fact that $\gb_c(\gd)>0$ for 
$\gd \in (0,\infty)$ imply that $\gd \mapsto \gb_c(\gd)$ is strictly increasing. The 
continuity of $\gd \mapsto \gb_c(\gd)$ follows from convexity and finiteness.
\end{proof}

\subsection{Estimates on the single-site partition function}
\label{ss:Gstarsmalldelta}
 
In this section we derive estimates on $g_{\gd,\gb}^*$ for $\gd$ small. 
 
\begin{lemma}
\label{lem:estimG}
Let 
\begin{equation}
\label{gbgddef}
\gb(\gd) =\tfrac12\gd^2-\tfrac13m_3\gd^3 -\gep_{\gd}, \qquad \gep_{\gd}=o(\gd^3),
\quad \gd \downarrow 0.
\end{equation}
Then for all $\eta \in (0,1)$ there exist $\gd_0>0$ and $a>0$ such that the following hold:\\
{\rm (1)} If $0 < \gd \leq \gd_0$ and $\gd^2 \ell \leq a$, then
\begin{gather}
\label{eq:lowboundG1}
g_{\gd,\gb(\gd)}^* (\ell) \geq 1+ (\gep_{\gd}+k_1 \gd^4 ) \ell - \tfrac14(1+\eta)\gd^4 \ell^2,\\
g_{\gd,\gb(\gd)}^* (\ell) \leq 1+ (\gep_{\gd} +k_1 \gd^4) \ell - \tfrac14(1-\eta)\gd^4 \ell^2,
\label{eq:upboundG1}
\end{gather}
where 
\begin{equation}
\label{eq:k1def}
k_1 = \tfrac13 m_3^2 - \tfrac{1}{12} m_4 + \tfrac14.
\end{equation}
{\rm (2)} If $0 < \gd \leq \gd_0$ and $\gd^2 \ell \geq a$, then there exists a $c_0>0$ 
such that
\begin{equation}
\label{eq:boundG2}
1 \geq \min\left(1,\frac{c_0}{\sqrt{1+\gd^2 \ell}} \right) 
\geq g_{\gd,\gb(\gd)}^* (\ell) \geq \frac{1}{c_0\sqrt{1+\gd^2 \ell}}.
\end{equation}
\hfill \qed
\end{lemma}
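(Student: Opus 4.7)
The two parts correspond to the regimes $\gd^2\ell \le a$ and $\gd^2\ell \ge a$ and are handled by different techniques.

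For part (1), the plan is to Taylor expand
\[
g^*_{\gd,\gb}(\ell) \,=\, \bbE[e^{\gd\gO_\ell - \gb\gO_\ell^2}]
\,=\, \sum_{a,b\ge 0} \frac{\gd^a(-\gb)^b}{a!\,b!}\,\bbE[\gO_\ell^{a+2b}]
\]
and exploit the precise structure of $\gb(\gd)$. Direct computation yields $\bbE[\gO_\ell^2]=\ell$, $\bbE[\gO_\ell^3]=m_3\ell$ and $\bbE[\gO_\ell^4]=m_4\ell+3\ell(\ell-1)$. Substituting $\gb=\gb(\gd)=\tfrac12\gd^2-\tfrac13 m_3\gd^3-\gep_\gd$ and grouping by powers of $\gd$ (counting $\gb$ as weight two), the contributions of order $\gd^2\ell$ and $\gd^3\ell$ cancel---this is precisely the design of the first two coefficients of $\gb(\gd)$---and the $\gd^4$-coefficient collapses to $(\gep_\gd+k_1\gd^4)\ell-\tfrac14\gd^4\ell^2$ with $k_1$ as in \eqref{eq:k1def}. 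The remaining terms have weighted order $\ge 5$; using the Rosenthal-type bound $\bbE[|\gO_\ell|^k]\le C_k\ell^{k/2}$, they are controlled by $O((\gd^2\ell)^{5/2})$, which is $\le \eta\gd^4\ell^2$ under $\gd^2\ell\le a$ provided $a$ and $\gd_0$ are chosen small enough in terms of $\eta$. This yields both \eqref{eq:lowboundG1} and \eqref{eq:upboundG1}.

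For part (2), the plan is to complete the square in the Gibbs exponent:
\[
g^*_{\gd,\gb(\gd)}(\ell)
\,=\, \exp\!\left(\frac{\gd^2}{4\gb(\gd)}\right)\,
\bbE\!\left[\exp\!\left(-\gb(\gd)\,(\gO_\ell-y_*)^2\right)\right],
\qquad y_*:=\frac{\gd}{2\gb(\gd)}.
\]
Since $\gb(\gd)=\tfrac12\gd^2(1+O(\gd))$, the prefactor equals $e^{1/2+O(\gd)}=O(1)$ and $y_*=\gd^{-1}(1+O(\gd))$. It then suffices to bound the expectation above and below by a constant times $1/\sqrt{\gb(\gd)\ell}\asymp 1/\sqrt{\gd^2\ell}$. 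For the upper bound, partition $\bbR$ into windows $I_k=y_*+[k,k+1)/\sqrt{\gb(\gd)}$, $k\in\bbZ$; on $I_k$ the integrand is bounded by $\exp(-((|k|-1)\vee 0)^2)$, while a local limit theorem (Stone's theorem in the non-lattice case, the Gnedenko local CLT in the lattice case) gives $\bbP(\gO_\ell\in I_k)\le C/\sqrt{\gb(\gd)\ell}$ uniformly in $k$; summing the ensuing Gaussian series produces the desired rate, and taking $a$ large enough further forces this bound to be $\le 1$, which yields the $\min(1,\cdot)$. For the lower bound, retain only the window $\{|\gO_\ell-y_*|\le 1/\sqrt{\gb(\gd)}\}$, on which the integrand is $\ge e^{-1}$, and apply a matching local CLT lower bound on its probability: the key is that $y_*/\sqrt{\ell}\le 1/\sqrt{a}$ is bounded, so the target window sits within finitely many standard deviations of the mean of $\gO_\ell$.

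The hard part is expected to be part (2): the local CLT estimates must hold uniformly over both lattice-valued and non-lattice charge distributions, with constants independent of $\gd$ and $\ell$ throughout the regime $\gd^2\ell\ge a$. Part (1) is in comparison a careful Taylor expansion with bookkeeping; the only subtlety is carrying enough terms so that the fifth-order remainder can be absorbed into the $\eta$-slack via the smallness of $a$.
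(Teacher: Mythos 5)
Your plan for part (1) follows the same Taylor--expansion idea as the paper, but as stated it has a genuine gap. You propose to expand
$g^*_{\gd,\gb}(\ell)=\sum_{a,b\ge 0}\frac{\gd^a(-\gb)^b}{a!\,b!}\,\bbE[\gO_\ell^{a+2b}]$
and then rearrange by the weighted order $a+2b$. Exchanging expectation and the double sum, and rearranging terms, requires absolute convergence, i.e.\ $\bbE\big[e^{\gd|\gO_\ell|+\gb\gO_\ell^2}\big]<\infty$; but the standing hypothesis \eqref{eq:omegacond1} only guarantees finiteness of one-sided exponential moments of $\omega_1$, not of $e^{\gb\omega_1^2}$. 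For charge laws with tails like $e^{-|x|^{3/2}}$ (which satisfy \eqref{eq:omegacond1}) the series already diverges at $\ell=1$, so your rearrangement is not justified. The paper avoids this by writing $g^*=\bbE[e^X]$ with $X=\gd\gO_\ell-\gb\gO_\ell^2$, using the pointwise polynomial inequalities $e^x\ge\sum_{j=0}^{5}x^j/j!$ for all $x$ and $e^x\le \sum_{j=0}^{6}x^j/j!+C\,x^7\ind_{\{0\le x\le1\}}$, and exploiting $X\le \gd^2/4\gb\le1$; this needs only finitely many moments of $X$ (hence of $\gO_\ell$), which exist. Once you switch to such finite-degree bounds, your bookkeeping and the $O((\gd^2\ell)^{5/2})$ control of the remainder via $\bbE[|\gO_\ell|^k]\le C_k\ell^{k/2}$ go through exactly as you describe.

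For part (2) your complete-the-square plus local-CLT scheme is a genuinely different route from the paper's. The paper instead splits the regime into $a<\gd^2\ell<b$ (where $\ell\to\infty$ as $\gd\downarrow0$, so $\gO_\ell/\sqrt\ell$ is replaced by a standard normal $Z$ and one reads off the explicit function $h(t)=\bbE[e^{-t^2Z^2/2+tZ}]=(1+t^2)^{-1/2}e^{t^2/2(1+t^2)}$, $t=\gd\sqrt\ell$) and $\gd^2\ell\ge b$ (where it uses only the Berry--Esseen bound on $\sup_x|\bbP(Z_\ell\le x)-\gP(x)|$). The Berry--Esseen route is technically lighter: it applies uniformly to lattice and non-lattice charge laws, requires no density or local limit theorem, and handles the lower bound simply via $g^*\ge\bbP(X\ge0)=\bbP(\gO_\ell\in[0,\gd/\gb])$. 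Your scheme would also work, but it has to confront: (i) window widths $1/\sqrt{\gb(\gd)}\asymp1/\gd$ are not of fixed size, so Stone's theorem as usually stated does not apply directly (for the upper bound a concentration-function estimate such as Ess\'een's inequality, $\sup_x\bbP(\gO_\ell\in[x,x+1])\le C/\sqrt\ell$, is the cleaner tool); and (ii) the lower bound needs a local-CLT lower bound with constants uniform over all the allowed $(\gd,\ell)$, which is more delicate than the Berry--Esseen argument. You are right that (2) is the more delicate half, and your strategy is salvageable, but the paper's two-regime Berry--Esseen argument is both shorter and more robust to the generality of the charge distribution.
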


\begin{proof}
Below, all error terms refer to $\gd \downarrow 0$. Fix $\gb=\gb(\gd)$. Write 
$g_{\gd,\gb}^* (\ell) =\bbE[e^X]$ with $X=-\gb\, \gO_{\ell}^2 + \gd\, \gO_{\ell}$. 
The proof is based on asymptotics of moments of $X$ for small $\gd,\gb$.
Recall that $\bbE[\omega_1]=0$, to compute
\begin{equation}
\begin{aligned}
&\bbE[\gO_{\ell}] = 0,  \quad \bbE[\gO_{\ell}^2] = m_2 \ell, 
\quad \bbE[\gO_{\ell}^3] = m_3 \ell,\\
&\bbE[\gO_\ell^4] = 3 m_2^2 \ell (\ell-1) + m_4 \ell, \quad
\bbE[\gO_\ell^5] = 10 m_2 m_3 \ell (\ell-1) + m_5 \ell,\\ 
&\bbE[\gO_\ell^6] = 15 m_2^3 \ell (\ell-1)(\ell-2)  
+ (15m_2 m_4 +10m_3^2) \ell(\ell-1) + m_6 \ell.
\end{aligned}
\end{equation}
If $\gb\asymp \gd^2$, then (recall that $m_2=1$) 
\begin{equation}
\begin{aligned}
&\bbE[X] = -\gb \ell,\\
&\bbE[X^2] = [\gd^2 -2\gb\gd m_3 + \gb^2 k_2] \ell + 3 \gb^2 \ell^2,\\
&\bbE[X^3] = [\gd^3 m_3 - 3 \gb \gd^2 k_2 +o(\gd^4)] \ell 
+ [-9 \gb\gd^2+o(\gd^4)]\ell^2 -15 \gb^3 \ell^3,\\
&\bbE[X^4] =  [k_2\gd^4 +o(\gd^4)] \ell + [3\gd^4 +o(\gd^4)] \ell^2 
+ [90 \gb^2\gd^2 + o(\gd^6)] \ell^3 + [\tfrac{1}{24} \gb^4 + o(\gd^8)] \ell^4,\\
&\bbE[X^5] = o(\gd^4) \ell + o(\gd^4) \ell^2 + c\gd^6 [1+o(1)] \ell^3 
+ c'\gd^8 [1+o(1)] \ell^4 + c''\gd^{10}[1+o(1)] \ell^5,
\end{aligned}
\end{equation}
where $k_2 = m_4-3$, so that $\bbE[\gO_\ell^4] = 3 \ell^2 + k_2 \ell$.
Therefore
\begin{equation}
\begin{aligned}
\bbE[X]+&\tfrac12 \bbE[X^2]+\tfrac16 \bbE[X^3]+\tfrac{1}{24} \bbE[X^4] \\
&= \big[ -\gb m_2 +\tfrac{\gd^2}{2} m_2 - \gb \gd m_3 +\tfrac{\gb^2}{2} k_2 
+ \tfrac16 \gd^3 m_3 - \tfrac{1}{2} \gb\gd^2k_2 +\tfrac{1}{24} \gd^4 k_2 + o(\gd^4) \big] \ell\\
& \qquad + \big[ \tfrac{3}{2} m_2^2 \gb^2 - \tfrac{3}{2} m_2^2 \gb\gd^2 
+ \tfrac{1}{8} m_2^2 \gd^4 + o(\gd^4) \big] \ell^2 
+ O(\gd^6 \ell^3) +O(\gd^8 \ell^4). 
\end{aligned}
\end{equation}
Inserting $m_2=1$ and $\gb = \gb(\gd)$, we get
\begin{equation}
\begin{aligned}
1+\bbE[X]+&\tfrac12 \bbE[X^2]+\tfrac16 \bbE[X^3]+\tfrac{1}{24} \bbE[X^4]\\
&= 1+ \Big[ \gep_{\gd} +  \Big(\tfrac13 m_3^2 - \tfrac{1}{12} k_2 \Big) \gd^4 \Big] \ell 
- \tfrac14\gd^4[1+o(1)] \ell^2 + O(\gd^6 \ell^3) +O(\gd^8 \ell^4),
\end{aligned}
\end{equation}
where we use that $o(\gd^4)\ell = o(\gd^4) \ell^2$. We also get $\bbE[X^k] 
= \sum_{j=\lceil k/2\rceil k}^k O(\gd^{2j} \ell^j)$ for $k \geq 5$.

\medskip\noindent
(1) To obtain the lower bound in \eqref{eq:lowboundG1}, use that $e^x \geq 1+ \sum_{j=2}^5 
\tfrac{1}{j!} x^j$, $x \in \R$, to get
\begin{equation}
\begin{aligned}
g_{\gd,\gb}^*(\ell) &= \bbE[e^X]\\ 
&\geq 1+ (\gep_{\gd}+k_1 \gd^4) \ell 
- \tfrac{1}{4} \gd^4[1+o(1)] \ell^2 + O(\gd^6 \ell^3) +O(\gd^8 \ell^4) + O(\gd^{10} \ell^5),
\end{aligned}
\end{equation}
from which the claim follows for $\gd^2\ell$ small enough. To obtain the upper bound in 
\eqref{eq:upboundG1}, use that $e^x \leq 1+ \sum_{j=2}^6 \tfrac{1}{j!} x^j +\frac{1}{7!} 
x^7 \ind_{\{x\geq 0\}}$, $x \in \R$. Also use that $X=-\gb \gO_{\ell}^2 +\gd \gO_{\ell} \leq 
\gd^2/4\gb \leq 1$, because $\gb\geq \tfrac14\gd^2$ for $\gd$ small enough, which implies
that $\bbE[X^7 \ind_{\{X\geq 0\}}] \leq \bbE[X^6]$. Hence
\begin{equation}
\begin{aligned}
g_{\gd,\gb}^*(\ell) &= \bbE[e^X]\\
&\leq 1+ (\gep_{\gd}+k_1 \gd^4) \ell - \tfrac{1}{4} \gd^4[1+o(1)] \ell^2 
+ O(\gd^6 \ell^3) + O(\gd^8 \ell^4) + O(\gd^{10} \ell^5) + O(\gd^{12} \ell^6),
\end{aligned}
\end{equation}
from which the claim follows for $\gd^2\ell$ small enough.

\medskip\noindent
(2) We fix $b>0$ large, and treat the cases $a<\gd^2 \ell <b$ and $\gd^2\ell \geq b$ separately. 
Since in both cases $\ell\to\infty$ as $\gd\downarrow 0$, we have that $\gO_{\ell}/\sqrt{\ell}$ is
close in distribution to $Z = \cN(0,1)$.

\medskip\noindent
$\bullet$ If $a<\gd^2 \ell <b$, then, uniformly for $a<\gd^2 \ell <b$,
\begin{equation}
\label{eq:approxGgaussian}
g_{\gd,\gb}^*(\ell) = [1+o(1)]\, \bbE\big[e^{-(\gb \ell) Z^2 + \gd\sqrt{\ell} Z}\big] 
= [1+o(1)]\, \bbE\big[e^{-[1+O(\gd)]\,\frac12 (\gd^2 \ell) Z^2 + \gd\sqrt{\ell} Z}\big].
\end{equation}
The function
\begin{equation}
t \mapsto h(t)=\bbE[e^{-\frac12 t^2 Z^2 +t Z}] 
= \frac{1}{\sqrt{1+t^2}}\, e^{\frac12\, \frac{t^2}{1+t^2}}
\end{equation} 
is strictly decreasing with $h(0)=1$. Therefore, for $\gd$ small enough, we find that 
\begin{equation}
\frac{1}{2\sqrt{1+\gd^2 \ell}} \leq g_{\gd,\gb}^*(\ell)  \leq \frac{2}{\sqrt{1+\gd^2 \ell}} 
\end{equation}
(note that $e^{1/2}<2$). Using that $\gd^2 \ell \geq a$ and $h(a)<1$, we obtain
$g_{\gd,\gb}^*\leq 1$.

\medskip\noindent
$\bullet$ If $\gd^2 \ell \geq b$, then we argue as follows. Let $\Phi$ be the standard 
normal cumulative distribution function. Write $Z_{\ell} = \Omega_{\ell}/\sqrt{\ell}$, and 
estimate
\begin{equation}
g_{\gd,\gb}^*(\ell) \geq \bbP(X\geq 0) = \bbP (\gO_{\ell} \in[ 0, \gd/\gb])
= \bbP\left(Z_{\ell} \in \big[0,2/\gd \sqrt{\ell}\big] \right) 
\geq \frac{1}{4\sqrt{\gd^2 \ell}},
\end{equation}
where the last inequality follows from the Berry-Esseen inequality (Feller~\cite[Theorem XVI.5.1]{F71})
\begin{equation}
\label{eq:berryesseen}
\sup_{x\in\R} \left| \bbP\left(Z_{\ell}\leq x\right) - \Phi(x) \right| \leq A / \sqrt{\ell},
\end{equation}
in combination with the bound $|\Phi(0) - \Phi(2/\gd \sqrt{\ell})| \geq 1/3\gd\sqrt{\ell}$,
valid for $\gd^2 \ell \geq b$ with $b$ large enough, and $(1/3\gd \sqrt{\ell}) - (2A/\sqrt{\ell}) 
\geq 1/4\sqrt{\gd^2 \ell}$, valid for $\gd$ small enough. 

To get an upper bound on $g_{\gd,\gb}^*(\ell)$, abbreviate $v=\gd\sqrt{\ell}$ and 
$X = -\frac12 v^2 Z_{\ell}^2 + v Z_{\ell}$, and estimate
\begin{eqnarray}
g_{\gd,\gb}^*(\ell) &\leq& 
\sum_{k=2}^{\log v } e^{-k} P\big(- k \geq X \geq -(k+1) \big)  
+ e^{-\log v}  P(X \leq -\log v) \nonumber\\
&\leq& \sum_{k=2}^{\log v} e^{-k} P\big(v Z_{\ell} \in [1-\sqrt{1+2k},1+\sqrt{1+2k}]\big) 
+ \frac1v \nonumber\\
&\leq& \sum_{k=2}^{\log v} e^{-k} \sqrt{k} \, \frac3v  
+ \frac1v  = C \, \frac1v = \frac{C}{\sqrt{\gd^2 \ell}}, 
\end{eqnarray}
where in the last inequality we again use the Berry-Esseen inequality in \eqref{eq:berryesseen}, 
this time with $|x|, |y|\leq\frac{2}{v} \sqrt{k}$: if $v= \gd\sqrt{\ell} \geq b$ with $b$ large enough, 
then $|\Phi(x) - \Phi(y)|\leq \tfrac12 |x-y| \leq \frac2v \sqrt{k}$, while if $\gd$ is small enough, 
then $2A/\sqrt{\ell} \leq \frac1v \leq \frac1v \sqrt{k}$.
\end{proof}

\subsection{Lower bound on the critical curve for small charge bias}
\label{ss:proofscalcritsmall-lb}

In this section we prove the lower bound in Theorem~\ref{thm:criticalpoint}(ii). Substitute 
\eqref{eq:lowboundG1} into \eqref{Zndbstar} to get
\begin{equation}
\label{eq:lowboundZ}
\bbZ_n^{*,\gd,\gb(\gd)} \geq  e^{(\gep_{\gd} +k_1 \gd^4) n}\, 
E\bigg[ \exp\bigg\{-\tfrac{1}{4}(1+\eta)\gd^4 \sum_{x\in\bbZd} \ell_n(x)^2 \bigg\}
\ind_{\big\{\max_{x \in \bbZd} \ell_n(x)\leq a\, \gd^{-2}\big\}}\bigg].
\end{equation}
Fix $\eta \in (0,1)$ and pick $u= \tfrac{1}{4}(1+\eta)\gd^4$. Fix $\gep>0$ small, 
choose $\gep_\gd$ in \eqref{gbgddef} such that
\begin{equation}
\label{epschoice1}
\gep_{\gd}+k_1 \gd^4 = (1+\gep) f^\wsaw(u),
\end{equation}
and use \eqref{eq:lowboundZ} to estimate (recall \eqref{Qndef})
\begin{equation}
\label{lbestest}
\bbZ_n^{*,\gd,\gb(\gd)} \geq  e^{(1+\gep)f^\wsaw(u) n} E \left[ e^{-u Q_n} \ind_{\cE_n(u)} \right]
\end{equation}
with
\begin{equation}
\label{eventdef}
\cE_n(u) = \left\{\max_{x\in\bbZd} \ell_n(x) \leq c /\sqrt{u}\right\}
\end{equation}
and $c=a\tfrac12\sqrt{1-\eta}$. Below we prove the following lemma. 

\begin{lemma}
\label{lem:lbcritcurve}
For every $c>0$, $\gep>0$ and $0<u \leq u_0=u_0(c,\gep)$,
\begin{equation}
\label{claimsmallu}
\liminf_{n\to\infty} \frac1n \log E \left[ e^{-u Q_n} \ind_{\cE_n(u)} \right] 
\geq -(1+\tfrac12\gep) f^\wsaw(u).
\end{equation}
\hfill \qed
\end{lemma}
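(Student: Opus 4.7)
The plan is to combine Varadhan's identity $-f^\wsaw(u)=\sup_{t\ge 1}[-tu-I(t)]$ (which follows from Proposition~\ref{pr:largedevsuml2} and was already used in \eqref{eq:Varadhan}) with a constructive lower bound on a constrained probability. I would pick a near-optimizer $t^*=t^*(u)$ so that $-ut^*-I(t^*)\ge -(1+\tfrac{\gep}{4})f^\wsaw(u)$ for $u$ small, and then use the trivial bound
\[
E\bigl[e^{-uQ_n}\ind_{\cE_n(u)}\bigr]\ \ge\ e^{-ut^*n}\,P\bigl(Q_n\le t^*n,\,\cE_n(u)\bigr)
\]
to reduce the lemma to an estimate of the form $\liminf_n\tfrac1n\log P\bigl(Q_n\le t^*n,\,\cE_n(u)\bigr)\ge -I(t^*)(1+o(1))-o(f^\wsaw(u))$, where $o(\cdot)$ refers to $u\downarrow 0$.

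For $d=2$ I would take $t^*=(1+\gep')\gl_2\log(1/u)$ with $\gep'=\gep'(\gep)$ small, which is essentially the maximizer of $-tu$ minus the upper envelope for $I$ supplied by \eqref{It2upper}; one checks that this already gives $ut^*+I(t^*)\le (1+\tfrac{\gep}{2})f^\wsaw(u)$. The constrained probability then comes straight from Remark~\ref{rem:adaptdownard}, whose internal parameter I would choose small enough (depending on $\gep'$) that the constant $c_2$ satisfies $c_2(1+\gep')\gl_2<\tfrac12$; this forces $c_1 e^{c_2 t^*}\le c_1 u^{-c_2(1+\gep')\gl_2}\le c/\sqrt{u}$ for $u$ small, so the event in \eqref{lbrelevant} is contained in $\{Q_n\le t^*n\}\cap\cE_n(u)$ and \eqref{lbrelevant} delivers the required estimate. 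For $d\ge 3$ I would take $t^*=\gl_d$ (so $I(t^*)=0$) and construct walks satisfying simultaneously $Q_n\le\gl_d n$ and $\max_x\ell_n(x)\le c/\sqrt{u}$ at entropic cost $o(u)$, by a block-bridge strategy analogous to Step~4 of the proof of Proposition~\ref{pr:largedevsuml2}: partition $[0,n]$ into $n/m$ consecutive blocks of length $m=m(u)$ and force each one to be a bridge as in \eqref{eq:blockbridge}. Because bridges occupy disjoint slabs in the first coordinate, $\max_x\ell_n(x)=\max_k\max_x\ell^{(k)}(x)$; classical Green's-function estimates for simple random walk in $d\ge 3$ give $\max_x\ell^{(k)}(x)\le C_d\log m$ and $Q^{(k)}\le\gl_d m$ with probability bounded below uniformly in $m$, while $P(\cB_m)\asymp 1/m$ by \eqref{eq:asymp.bridge}. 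Choosing $m(u)\to\infty$ with $C_d\log m(u)\le c/\sqrt{u}$, e.g.\ $m(u)=\lfloor e^{c/(2C_d\sqrt{u})}\rfloor$, realises the local-time bound while keeping the per-block entropic cost $(\log m)/m=o(u)$.

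The hard part is the \emph{joint} control of $Q_n$ and of $\max_x\ell_n(x)$: the variational identity for $f^\wsaw$ is optimised by configurations with rather large local times, so naively the indicator $\ind_{\cE_n(u)}$ would threaten to cost an additional exponential factor. Preventing this is precisely the role of Remark~\ref{rem:adaptdownard} for $d=2$ (which was prepared with exactly this application in mind) and of the disjoint-slab structure of the block-bridge construction for $d\ge 3$. The remaining work is purely bookkeeping: calibrating $\gep'$ against the internal parameter of the remark (for $d=2$), and calibrating the growth rate of $m(u)$ against $f^\wsaw(u)$ (for $d\ge 3$), so that all error terms merge into the single $(1+\gep/2)$ factor permitted by the statement.
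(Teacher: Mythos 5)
Your $d=2$ argument is essentially the paper's: pick $t^*=(1+\gep')\lambda_2\log(1/u)$, bound $E[e^{-uQ_n}\ind_{\cE_n(u)}]\ge e^{-ut^*n}P(Q_n\le t^*n,\,\cE_n(u))$, and invoke Remark~\ref{rem:adaptdownard} after calibrating $\gep'$ against the remark's internal parameter so that $c_1e^{c_2t^*}\le c/\sqrt{u}$. That is exactly what the paper does with $t_u=(1+\tfrac14\gep)\gl_2\log(1/u)$.

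For $d\ge 3$ your route genuinely diverges from the paper's, and it contains a gap. You want a probability lower bound
$P\big(Q_m\le \gl_d m,\ \max_x\ell^{(k)}_m(x)\le C_d\log m\ \big|\ \cB_m\big)$
bounded below uniformly in $m$. Unconditionally, $P(Q_m\le\gl_dm)\to\tfrac12$ is plausible from the CLT for $Q_m$, but you are conditioning on $\cB_m$, and the bridge event is \emph{not} independent of the self-intersection event: a bridge is squeezed in the first coordinate, which tends to increase $Q_m$. The only control the paper has conditional on $\cB_m$ is the first-moment estimate $E[Q_m\mid\cB_m]\le(1+o(1))\gl_dm$ of \eqref{eq:EQmBm}; this does \emph{not} yield a positive lower bound on $P(Q_m\le\gl_dm\mid\cB_m)$ (it is consistent with $Q_m$ concentrating slightly above $\gl_dm$ under $\cB_m$). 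By contrast, the paper sidesteps this entirely: after extracting the bridge-block structure it applies Jensen inside each block, so that all it needs is precisely the conditional expectation bound \eqref{eq:EQmBm} together with $P(\cE_m'\mid\cB_m)\to 1$. Your construction can be repaired with a small change — take $t^*=(1+\gep'')\gl_d$ with $\gep''<\gep/2$, for which $I(t^*)=0$ still, and then Markov's inequality plus \eqref{eq:EQmBm} gives $P(Q_m\le t^*m\mid\cB_m)\ge\gep''/(1+\gep'')+o(1)>0$ — but as written the claim ``$Q^{(k)}\le\gl_dm$ with probability bounded below'' is not justified under the bridge conditioning. One further cosmetic remark: your choice $m(u)=\lfloor e^{c/(2C_d\sqrt{u})}\rfloor$ is exponentially large and makes the per-block entropic cost $(\log m)/m$ astronomically small, which is fine; the paper's milder choice $m(u)=\lceil\log^2(1/u)/u\rceil$ already gives $(\log m)/m=o(u)$ and is matched to the scale at which $\cE_m(u)$ becomes the harmless constraint $\max_x\ell_m(x)\le\sqrt m/\log m$.
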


\noindent
Lemma~\ref{lem:lbcritcurve} in combination with \eqref{lbestest} implies that, for $\gd$ 
small enough,
\begin{equation}
F^*(\gd,\gb(\gd)) = \limsup_{n\to\infty} \frac{1}{n} \log \bbZ_n^{*,\gd,\gb(\gd)}
\geq \tfrac12\gep f^\wsaw(u) > 0  
\end{equation}
and hence $\gb_c(\gd) > \gb(\gd)$. But, by \eqref{gbgddef} and Proposition~\ref{pr:wsaw},
\begin{equation}
\gb(\gd) = \tfrac12\gb^2-\tfrac13 m_3\gd^3 - \gep_\gd,
\qquad
\gep_\gd = -k_1\gd^4 + (1+\gep) \left\{\begin{array}{ll}
\lambda_2 u\log(1/u), &d=2,\\[0.2cm]
\lambda_d u, &d \geq 3.
\end{array}
\right.
\end{equation}
Inserting $u= \tfrac{1}{4}(1+\eta)\gd^4$ into the last formula, we find that
\begin{equation}
\gep_\gd = [1+o_\gd(1)]\,\gd^4 \left\{\begin{array}{ll}
\tfrac14(1+\gep)(1+\eta)\lambda_2 \log(1/\gd), &d=2,\\[0.2cm]
\tfrac14(1+\gep)(1+\eta)\lambda_d-k_1, &d \geq 3. 
\end{array}
\right.
\end{equation}
Let $\eta,\gep \downarrow 0$ and recall \eqref{eq:k1def} to get the lower bound in 
\eqref{betacexp}. In the remainder of this section we prove Lemma~\ref{lem:lbcritcurve}. 

\begin{proof}
Without $\ind_{\cE_n(u)}$, the $\liminf$ is a $\lim$ and equals $-f^\wsaw(u)$. We must 
therefore show that the indicator does not change the free energy significantly. 

\medskip\noindent
$\bullet$ $d \geq 3$. The proof comes in 4 Steps. 

\medskip\noindent
{\bf 1.}
Recall \eqref{bridgedef}. We use the same idea as in the proof of Proposition~\ref{pr:largedevsuml2} 
(recall \eqref{eq:blockbridge}--\eqref{liminfPQn}), to write
\begin{equation}
E\left[ e^{-u Q_n} \ind_{\cE_n(u)}\right] \geq E \left[e^{-u Q_m} \ind_{\{\cE_m(u), 
S\in\cB_m\}} \right]^{n/m}, \qquad m \in N, \,n \in m\N.
\end{equation}
Choose
\begin{equation}
\label{eq:choicemu}
m = m(u) = \left\lceil \frac{\log^2(1/u)}{u} \right\rceil,
\end{equation}
so that $u \sim \frac{\log^2 m}{m}$ as $u \downarrow 0$, and $\cE_m(u) \supset \cE_m' 
= \{\sup_{x\in\Z^d} \ell_m(x) \leq  \frac{\sqrt{m}}{\log m}\}$ for $u$ small enough. We therefore 
get
\begin{equation}
E\left[e^{-u Q_n} \ind_{\cE_n(u)}\right] \geq E \left[e^{-u Q_m} \ind_{\{\cE_m', \cB_m\}}\right]^{n/m} 
=  \Big(P(\cE_m',\cB_m)\,E[e^{-u Q_m} \mid \cE_m', \cB_m]\Big)^{n/m}.
\end{equation}
Combining this inequality with Jensen's inequality, we obtain
\begin{equation}
\label{eq:lowerbound_uQn_Enu}
\liminf_{n\to\infty}\frac{1}{n} \log E\left[e^{-u Q_n} \ind_{\cE_n(u)}\right]
\geq \frac{1}{m} \log P(\cE_m',\cB_m) - \frac{u}{m} E[Q_m \mid \cE_m', \cB_m]. 
\end{equation}

\medskip\noindent
{\bf 2.}
Let us assume for the moment that
\begin{equation}
\label{eq:PEmBm}
\lim_{m\to\infty} P(\cE_m' \mid \cB_m) = 1
\end{equation}
and
\begin{equation}
\label{eq:EQmBm}
E[Q_m \mid \cB_m] \leq \gl_d m\,[1+o(1)], \qquad m\to\infty.
\end{equation}
Combining \eqref{eq:asymp.bridge} and \eqref{eq:lowerbound_uQn_Enu}--\eqref{eq:EQmBm}, 
we get
\begin{equation}
\liminf_{n\to\infty} \frac{1}{n} \log  \left[ e^{-u Q_n} \ind_{\cE_n(u)} \right] 
\geq - C\,\frac{\log m}{m} - [1+o(1)]\,\lambda_d u.
\end{equation}
From \eqref{eq:choicemu}, we have $\frac{\log m}{m} \sim \frac{u}{\log(1/u)}=o(u)$, 
$u \downarrow 0$. Therefore
\begin{equation}
\liminf_{n\to\infty} \frac{1}{n} \log E\left[ e^{-u Q_n} \ind_{\cE_n(u)}\right] 
\geq - [1+o(1)]\,\lambda_d u.
\end{equation}
Since $f^\wsaw(u) \sim \lambda_d u$, $u \downarrow 0$, by Proposition~\ref{pr:wsaw}, 
the claim in \eqref{claimsmallu} follows. 

\medskip\noindent
{\bf 3.} The claim in \eqref{eq:PEmBm} 
holds because
\begin{equation}
\begin{aligned}
P(\cE_m'^c  \mid \cB_m ) &\leq \frac{P\big(\exists\, x \in \Z^d\colon\, \ell_m(x) 
\geq \frac{\sqrt{m}}{\log m}\big)} {P(\cB_m)}\\
&\leq Cm^2 P\Big(\ell_\infty(0) \geq \frac{\sqrt{m}}{\log m} \Big)
\leq Cm^2 \exp\left(-C\,\frac{\sqrt{m}}{\log m}\right),
\end{aligned}
\end{equation}
where $\ell_\infty(0) = \lim_{m\to\infty} \ell_m(0)$, in the second inequality we use 
\eqref{eq:asymp.bridge} plus the fact that the range of simple random walk a time 
$m$ is at most $m$, and in the third inequality we use that simple random walk is 
transient. 

\medskip\noindent
{\bf 4.}
The claim in \eqref{eq:EQmBm} is proven in Appendix \ref{app:bridge2}.

\medskip\noindent
$\bullet$ $d=2$. Let $t_u=(1+\tfrac14\gep) \gl_2 \log(1/u)$, and estimate
\begin{equation}
E \left[ e^{-u Q_n} \ind_{\cE_n(u)} \right] \geq e^{-ut_u n} 
P\left(Q_n \leq t_u n,\,\cE_n(u)\right).
\end{equation}
As shown in Remark~\ref{rem:adaptdownard}, for $u$ small enough (i.e., for $t_u$ large enough)
\begin{equation}
\label{eq:dowarddevConstrained}
\lim_{n\to\infty} \frac1n \log P\big(Q_n \leq t_u n,\,\cE_n(u)\big) 
\geq - \exp\big( - (1+\tfrac14\gep)^{-1} t_u /\gl_2 \big) =-u.
\end{equation}
Hence
\begin{equation}
\liminf_{n\to\infty} \frac1n \log E \left[ e^{-u Q_n} \ind_{\cE_n(u)} \right] 
\geq -(1+\tfrac14\gep) \gl_2 u \log(1/u) -u \geq -(1+\tfrac12 \gep) f^\wsaw(u),
\end{equation}
where the last inequality is valid for $u$ small enough by Lemma \ref{pr:wsaw}. So, again,
the claim in \eqref{claimsmallu} holds.
\end{proof}

\subsection{Upper bound on the critical curve for small charge bias}
\label{ss:proofscalcritsmall-ub}

In this section we prove the upper bound in Theorem~\ref{thm:criticalpoint}(ii). Substitute 
\eqref{eq:upboundG1} into \eqref{Zndbstar} to get
\begin{equation}
\label{eq:upperboundZ}
\bbZ_n^{*,\gd,\gb(\gd)} \leq  
E\bigg[ \exp\bigg(\sum_{x\in\bbZd} \bigg\{ -\tfrac{1}{4}(1-\eta)\gd^4 
\ell_n(x)^2 + (\gep_{\gd} +k_1 \gd^4)  \ell_n(x)\bigg\}
\ind_{\{\ell_n(x)\leq a\, \gd^{-2} \}}  \bigg) \bigg].
\end{equation}
Fix $\eta \in (0,1)$ and choose $\gep_\gd$ in \eqref{gbgddef} such that
\begin{equation}
\label{epschoice2}
\gep_\gd + k_1 \gd^4 = \tfrac{1}{4}(1-\eta)\gd^4.
\end{equation} 
Using that $\ell(1-\ell) \le 0$ for all $\ell\in\N_0$, we readily get that $\bbZ_n^{*,\gd,\gb(\gd)} 
\le 0$. The upper bound for \eqref{betacexp} follows by noting that $\eta$ may be chosen 
arbitrarily small.

\subsection{Towards the conjectured scaling of the critical curve for small charge bias}
\label{s:toconj_sharpbetac}

In this section we state a technical property (Conjecture~\ref{conj:ubcritcurve} below)
that would imply the upper bound in Theorem~\ref{thm:criticalpoint}(ii) stated in 
Conjecture~\ref{conj:sharp_beta_c}. This property, in turn, would follow from a large 
deviation property of the trimmed range of simple random walk that we discuss in 
Appendix~\ref{app:range}.

Let us start from~\eqref{eq:upperboundZ}. Fix $\eta \in (0,1)$  and pick $u=\frac14 
(1-\eta)\gd^4$. Fix $\gep> 0$ small, choose $\gep_\gd$ in \eqref{gbgddef} such that
\begin{equation}
\label{epschoice2alt}
\gep_\gd + k_1 \gd^4 = (1-\gep) f^\wsaw(u),
\end{equation} 
and use \eqref{eq:upperboundZ} to estimate (recall \eqref{Qndef})
\begin{equation}
\label{defZbarest}
\bbZ_n^{*,\gd,\gb} \leq \bar Z_{n,u}^{\gep}
\end{equation}
with
\begin{equation}
\label{defZbar}
\bar Z_{n,u}^{\gep} = E\bigg[ \exp\bigg(\sum_{x\in\bbZd} 
\bigg\{ -u \ell_n(x)^2 + (1-\gep)f^\wsaw(u)  \ell_n(x)\bigg\}
\ind_{\{\ell_n(x)\leq 1/\sqrt{u} \}}  \bigg) \bigg].
\end{equation}
The following conjecture yields the sharp version of the upper bound missing in 
Theorem~\ref{thm:criticalpoint}(ii) via an argument similar to the one given below 
Lemma~\ref{lem:lbcritcurve}. 

\begin{conjecture}
\label{conj:ubcritcurve}
For every $\gep>0$ and $0<u\leq u_0(\gep)$,
\begin{equation}
\label{targetpropZ} 
\limsup_{n\to\infty} \frac1n \log \bar Z_{n,u}^{\gep} = 0.
\end{equation}
\hfill \qed
\end{conjecture}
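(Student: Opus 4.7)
The plan is to upper-bound $\bar{Z}_{n,u}^\varepsilon$ by relating it to the weakly self-avoiding walk partition function $Z_n^{\wsaw}(u)$ and carefully controlling the discrepancy caused by the trimming indicators. Setting $c = (1-\gep)f^{\wsaw}(u)$ and using $\sum_x \ell_n(x) = n$ together with $Q_n = \sum_x \ell_n(x)^2$, I would first rewrite
\[
\bar{Z}_{n,u}^\varepsilon = e^{cn}\, E\bigg[\exp\bigg(-uQ_n + uR_n - cS_n\bigg)\bigg],
\]
where $R_n = \sum_{x:\ell_n(x) > 1/\sqrt{u}} \ell_n(x)^2$ and $S_n = \sum_{x:\ell_n(x) > 1/\sqrt{u}} \ell_n(x)$ collect the contributions from the ``big'' sites killed by the indicator. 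The factor $e^{cn} E[e^{-uQ_n}]$ alone has the exponential rate $c - f^{\wsaw}(u) = -\gep f^{\wsaw}(u) < 0$, so the task reduces to showing that the correction $uR_n - cS_n$ does not overturn this rate.

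Next I would split the expectation according to the size of the maximum local time, introducing the good event $G_n = \{\max_x \ell_n(x) \leq L\}$ with $L$ chosen slightly above $1/\sqrt{u}$. When $L \leq 1/\sqrt{u}$ the sums $R_n, S_n$ vanish identically on $G_n$, and the restricted expectation is bounded by $e^{cn} Z_n^{\wsaw}(u) = e^{-[1+o(1)]\gep f^{\wsaw}(u) n}$, which already gives the desired estimate on $G_n$. Taking $L$ a little above $1/\sqrt{u}$ loses at most the trivial term $uR_n \leq uL\,S_n \leq uL\,n$, which stays subexponential provided $L\sqrt{u} \to 0$; this gives room to play off the good event against the bad one.

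The critical step is then to bound the contribution of the complementary event $G_n^c$, i.e.\ to show
\[
E\bigl[\exp(-uQ_n + uR_n - cS_n)\,\mathbf{1}_{G_n^c}\bigr] = e^{o(n)}.
\]
This is exactly where the estimate announced in Appendix~\ref{app:range} enters. Heuristically, under the Gibbs measure induced by $e^{-uQ_n}$ (the weakly self-avoiding walk) the local times should concentrate at scale $1/\sqrt{u}$, so that atypically large local times must carry an exponential cost in $n$. Concretely, I would apply H\"older's inequality to separate $e^{-uQ_n}$ from $e^{uR_n - cS_n}\mathbf{1}_{G_n^c}$, then combine the conjectured tail bound on the number of sites with local time exceeding $1/\sqrt{u}$ (equivalently, on upward deviations of the range trimmed at level $1/\sqrt{u}$) with the crude inequality $uR_n \leq u\,(\max_x \ell_n(x))\,S_n$ to absorb the error, using Proposition~\ref{pr:wsaw} to match the resulting rate with $f^{\wsaw}(u)$.

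The hard part is precisely this last step: it requires a genuinely new upward large-deviation statement for the trimmed range of simple random walk, a quantitative bound that does not follow from the Donsker--Varadhan framework because it involves a hard constraint on individual local times. The rough estimate in Appendix~\ref{app:range} is tailored to supply exactly this missing ingredient; establishing it would presumably demand either a cutting/coarse-graining argument compatible with the trimming threshold (in the spirit of the proof of Proposition~\ref{pr:largedevsuml2}) or a refinement of the spectral and heat-kernel methods used by Chen~\cite{C10} for self-intersection local times.
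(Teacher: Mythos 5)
Your opening reduction is correct: with $c = (1-\gep)f^{\wsaw}(u)$ and the observation $\sum_x \ell_n(x) = n$, one indeed has $\bar Z_{n,u}^{\gep} = E[\exp(c\gamma_n^- - uQ_n^-)]$, where $Q_n^-$ and $\gamma_n^-$ collect the small-local-time sites; this is the paper's starting point \eqref{defZnugep}--\eqref{defgn}, just written with the complementary (large-local-time) sums $R_n, S_n$. You also correctly recognize that a new upward large-deviation estimate for a trimmed range is the missing ingredient. However, the middle of your argument has genuine gaps.

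First, the split into a good event $G_n = \{\max_x \ell_n(x) \leq 1/\sqrt u\}$ and its complement does not capture the relevant strategies. On $G_n$ the trimming is vacuous and your bound $e^{cn} Z_n^{\wsaw}(u) = e^{-\gep f^{\wsaw}(u)n + o(n)}$ is fine, but $G_n^c$ is the \emph{typical} event (for $d\geq3$, $\max_x\ell_n(x)$ exceeds any fixed level with probability bounded away from $0$; for $d=2$ it grows like $\log n$). On $G_n^c$ the integrand $e^{c\gamma_n^- - uQ_n^-}$ is not uniformly small; for a walk that spends only a vanishing fraction of time at small-local-time sites it is close to $1$, while for a walk that spends a positive fraction $\theta$ there it is of order $e^{c\theta n}$. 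You therefore still need to slice by the value of $\gamma_n^-/n$, which is exactly what the paper does in \eqref{Znugepdec} and Lemma~\ref{lem:rough}. Splitting by $\max_x\ell_n(x)$ alone does not give you access to these intermediate strategies.

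Second, the H\"older step as stated would fail. You propose to separate $e^{-uQ_n}$ from $e^{uR_n - cS_n}\ind_{G_n^c}$, but $uR_n - cS_n = u\sum_{\ell_n(x)>1/\sqrt u}\ell_n(x)^2 - cS_n \geq (\sqrt u - c)S_n \geq 0$ for small $u$, and it can be as large as $un^2$ (if the walk never leaves the origin). Hence $E[e^{q(uR_n - cS_n)}]$ is not $e^{O(n)}$ for any $q>1$, and the H\"older factor cannot be controlled. The large-local-time contribution must stay coupled to the factor $e^{-uQ_n}$; decoupling via H\"older loses too much.

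Third, the trimmed-range input is misidentified. You describe it as ``a tail bound on the number of sites with local time exceeding $1/\sqrt u$,'' i.e.\ on $|R_n^+|$, but Conjecture~\ref{conj.LDrange} concerns the upward large deviations of $|R_{n,A}^-|$, the number of sites with \emph{small} local time. These are not interchangeable. The reason the paper needs a bound on $|R_n^-|$ is the Jensen inequality $Q_n^- \geq (\gamma_n^-)^2/|R_n^-|$ (used right after \eqref{Zest2}): on a slice where $\gamma_n^- \approx \theta n$, the only way $uQ_n^-$ can stay below $(1-\tfrac\gep2)f^{\wsaw}(u)\theta n$ is for $|R_n^-|$ to be larger than $\approx \theta n/\lambda_d$, which is an upward deviation of the trimmed range that Conjecture~\ref{conj.LDrange} prices exponentially. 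Without this Jensen step and without the correct formulation of the range estimate, the argument does not close.

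So: the right framework (trim, recognize the collapse of the problem onto a range estimate) is present, but the actual mechanism — slicing by $\gamma_n^-$, splitting on whether $uQ_n^-$ is large or small, and converting smallness of $Q_n^-$ into an upward deviation of $|R_n^-|$ via Jensen — is missing, and the replacements you propose (splitting by $\max\ell$, H\"older, a tail bound on $|R_n^+|$) would not substitute for it.
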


\subsection{Scaling of the critical curve for large charge bias}
\label{ss:proofscalcritlarge}

Theorem~\ref{thm:criticalpoint}(iii) is the same as for $d=1$ in \cite{CdHPP16}, and 
the proof carries over verbatim.

\section{Scaling of the annealed free energy}
\label{s:scalingfreeenergy}


\subsection{Scaling bounds on the annealed free energy for small inverse temperature}

In this section we prove Theorem~\ref{thm:hightemp}.

\begin{proof}
The proof is based on Proposition~\ref{pr:wsaw} and proceeds via lower and upper bounds.
The upper bound uses a uniform upper bound for $g_{\gd,\gb}$ defined in \eqref{gdbdef} 
for small $\gb$ (Lemma~\ref{lem:gdb} below).

\bigskip\noindent
{\bf Lower bound:}
Jensen's inequality applied to \eqref{eq:def.annpartfunc}--\eqref{eq:def.hamiltonian} 
gives
\begin{equation}
\begin{aligned}
\Z_{n}^{\gd,\gb} 
&= \bbE^{\gd} \bigg[E\bigg[\exp\Big( -\gb \sum_{1\leq i,j \leq n} \go_i \go_j  
\ind_{\{ S_i=S_j\}}\Big) \bigg] \bigg]\\ 
&\geq E \bigg[ \exp\Big( -\gb \sum_{1\leq i,j \leq n} \bbE^{\gd}[\go_i \go_j ] 
\ind_{\{ S_i=S_j\}}\Big)\bigg]\\
&= e^{-n\gb v(\gd)} E\left[ \exp\Big( -\gb m(\gd)^2 \sum_{1\leq i,j \leq n} 
\ind_{\{ S_i=S_j\}}\Big)\right]= e^{-n\gb v(\gd)} E\left[e^{-\gb m(\gd)^2 Q_n}\right],
\end{aligned}
\end{equation}
where we recall that $m(\gd) = \bbE^\gd[\go_1]$ and $v(\gd) = \bbVar^\gd[\go_1]$. 
Hence
\begin{equation}
F(\gb,\gd) \geq - f^{\wsaw} \big(\gb m(\gd)^2 \big) -\gb v(\gd).
\end{equation}
Use Proposition~\ref{pr:wsaw} to get the lower bound in \eqref{eq:Fscal}.

\bigskip\noindent
{\bf Upper bound:}
Recall \eqref{Zndbdef}--\eqref{gdbdef}. We need the following lemma.
 
\begin{lemma}
\label{lem:gdb}
For every $\eta>0$ there exist $a=a(\eta)>0$ and $\gb_0=\gb_0(\eta)>0$ such that 
the following hold for all $\gb\leq \gb_0$.\\
{\rm (1)} If $\gb \ell^2\leq a $, then
\begin{equation}
\label{gendest1}
g_{\gd,\gb}(\ell) \leq \exp\Big(-\big[\gb v(\gd)\ell+(1-\eta)\gb m(\gd)^2\ell^2\big]\Big)
\qquad \forall\,\gd>0.
\end{equation}
{\rm (2)} There exists a constant $c_{\gd}>0$ (depending only on $\gd$) such that 
if $\gb\ell^2>a$, then
\begin{equation}
\label{gendest2}
g_{\gd,\gb}(\ell) \leq \exp \Big(-c_\gd \min \{\gb\ell^2,\ell\}\Big) \qquad \forall\,\gd>0. 
\end{equation}
\hfill \qed
\end{lemma}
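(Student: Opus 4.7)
The plan is to handle the two parts separately, since they correspond to genuinely different regimes of the product $\beta\ell^2$ and require different probabilistic inputs. Part (1) is perturbative in the small-$\beta\ell^2$ regime and is proved by moment expansion, whereas Part (2) is a concentration statement for $\Omega_\ell$ under $\bbP^\gd$ driven by Cram\'er-type large deviations.

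For Part (1), I would start from $e^{-y} \leq 1 - y + \tfrac12 y^2$ for $y \geq 0$ applied pointwise to $y = \gb \gO_\ell^2$, giving
\begin{equation}
g_{\gd,\gb}(\ell) \;\leq\; 1 - \gb\,\bbE^\gd[\gO_\ell^2] + \tfrac12 \gb^2\,\bbE^\gd[\gO_\ell^4].
\end{equation}
A direct computation yields $\bbE^\gd[\gO_\ell^2] = v(\gd)\ell + m(\gd)^2\ell^2$ and
$\bbE^\gd[\gO_\ell^4] = m(\gd)^4\ell^4 + 6\,m(\gd)^2 v(\gd)\,\ell^3 + O(\ell^2)$, where the implicit constants involve the centered moments of $\omega_1$ under $\bbP^\gd$. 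Then I would observe that, under the constraint $\gb\ell^2 \leq a$ and $\gb \leq \gb_0$, the correction $\tfrac12\gb^2\bbE^\gd[\gO_\ell^4]$ is dominated (relative to the main linear/quadratic terms $\gb v(\gd)\ell + \gb m(\gd)^2\ell^2$) by any prescribed $\eta$-fraction once $a$ and $\gb_0$ are small enough. Finally I would convert the resulting polynomial upper bound into exponential form via $1 - z + O(z^2) \leq e^{-(1-\eta)z}$, which is valid when $z$ is small. The delicate bookkeeping here concerns the interplay between the $\ell$ and $m(\gd)^2\ell^2$ scales; one must separately dominate the $\gb^2 m(\gd)^4\ell^4$ and $\gb^2 m(\gd)^2 v(\gd)\ell^3$ pieces, both of which are $o(\gb m(\gd)^2\ell^2)$ precisely because $\gb\ell^2$ is forced to be small.

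For Part (2), I would use that under $\bbP^\gd$ the sum $\gO_\ell$ concentrates around $m(\gd)\ell$. Splitting
\begin{equation}
g_{\gd,\gb}(\ell) \;\leq\; e^{-\gb m(\gd)^2 \ell^2/4}\,\bbP^\gd\!\big(|\gO_\ell| \geq \tfrac12 m(\gd)\ell\big) + \bbP^\gd\!\big(|\gO_\ell| < \tfrac12 m(\gd)\ell\big),
\end{equation}
the second probability is bounded by $e^{-I(\gd)\ell}$ for some rate $I(\gd)>0$ via Cram\'er's theorem applied to the i.i.d.\ sum $(\omega_i)_{i\leq\ell}$ under $\bbP^\gd$ (the event $\{|\gO_\ell/\ell| \leq m(\gd)/2\}$ lies in the non-trivial large-deviation regime since it excludes the mean $m(\gd)$, and $M(\cdot)$ is finite everywhere by \eqref{eq:omegacond1}). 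Combining yields
\begin{equation}
g_{\gd,\gb}(\ell) \;\leq\; e^{-\gb m(\gd)^2 \ell^2/4} + e^{-I(\gd)\ell}.
\end{equation}
I would then split into $\gb\ell \leq 1$ (so $\min\{\gb\ell^2,\ell\} = \gb\ell^2$ and the first exponential dominates) versus $\gb\ell > 1$ (so $\min = \ell$, both terms being bounded by $e^{-c_\gd \ell}$), and in each case use $\gb\ell^2 > a$ to absorb multiplicative constants of order $2$ into a slightly smaller $c_\gd>0$.

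The main obstacle is the uniformity in $\gd$ claimed in Part (1): the constants $a,\gb_0$ are supposed to depend only on $\eta$, yet the Taylor remainder $\gb^2 m(\gd)^4 \ell^4 = (\gb m(\gd)^2 \ell^2)^2$ threatens to overwhelm the linear term $\gb m(\gd)^2 \ell^2$ at the second order. The key saving is that one only needs to compare the remainder against the \emph{sum} $\gb v(\gd)\ell + \gb m(\gd)^2\ell^2$ with the $(1-\eta)$ slack sitting on the quadratic piece, so that the relevant smallness condition reads $\gb m(\gd)^2\ell^2 \leq \eta$, which is implied by $\gb\ell^2 \leq a$ once one observes that $m(\gd)^2 \leq C$ on any compact $\gd$-range and the relative contribution of the $\ell^2$-term in the target bound grows with $m(\gd)$. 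This is the step that must be handled carefully in the writeup.
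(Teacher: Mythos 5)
Your proposal follows the same two-regime strategy as the paper: Part (1) Taylor-expands $e^{-\gb\gO_\ell^2}$ to second order and bounds $\bbE^\gd[\gO_\ell^4]$, while Part (2) splits on the concentration event $\{|\gO_\ell|\gtrsim m(\gd)\ell\}$ and invokes Cram\'er's theorem for the complement — this is exactly what the paper does.

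One clarification on the obstacle you raise at the end. You are right that the Taylor remainder $\tfrac12\gb^2\bbE^\gd[\gO_\ell^4]\sim\tfrac12(\gb m(\gd)^2\ell^2)^2$ cannot be absorbed into $\eta\,\gb m(\gd)^2\ell^2$ under only $\gb\ell^2\leq a$ unless $a\leq 2\eta/m(\gd)^2$, which forces $a$ to depend on $\gd$ since $m(\gd)$ is unbounded as $\gd\to\infty$. But appealing to "$m(\gd)^2\leq C$ on any compact $\gd$-range" is not a fix — it quietly replaces "$\forall\,\gd>0$" with "$\gd$ bounded", and there is no way to restore genuine $\gd$-uniformity here. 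What actually resolves it is that the paper's own proof also uses a $\gd$-dependent constant (the $c$ in $\bbE^\gd[\gO_\ell^4]\leq c\ell^4$ cannot be uniform in $\gd$), so the constants $a,\gb_0$ in the lemma should be read as depending on $\gd$ as well. Since the lemma is only invoked at a fixed $\gd$ inside the proof of Theorem~\ref{thm:hightemp} (where the asymptotics are $\gb\downarrow 0$ with $\gd$ frozen), this imprecision in the quantifiers is harmless for the application, and your proof is on the same footing as the paper's.
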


\begin{proof}
For the case $ \gb^2 \ell\leq a$, we use that $e^{-t} \leq 1-t +t^2$, $t\geq 0$, 
to estimate
\begin{equation}
\begin{aligned}
g_{\gd,\gb}(\ell) &\leq 1- \gb\bbE^\gd[\gO_\ell^2] + \gb^2 \bbE^\gd[\gO_\ell^4]  
\leq 1-\gb \big( m(\gd)^2 \ell^2 + v(\gd) \ell\big) + c \gb^2 \ell^4\\
&\leq 1-\gb \big( m(\gd)^2 \ell^2 + v(\gd) \ell\big) + \eta^2 \gb \ell^2  
\leq \exp\Big(- \big[\gb v(\gd) \ell + (1-\eta)\gb m(\gd)^2  \ell^2\big] \Big),
\end{aligned}
\end{equation}
where we use that $\gb \ell^2 \leq a$, with $a$ chosen small enough so that $ca \leq \eta^2$.

For the case $\gb\ell^2 > a$, we estimate
\begin{equation}
\label{gestell}
g_{\gd,\gb}(\ell) \leq e^{-\gb \tfrac12 m(\gd)^2 \ell^2 } 
+ \bbP^\gd\big( \gO_\ell^2 \leq \tfrac12 m(\gd)^2 \ell^2\big) .
\end{equation}
For the last term we can use the large deviation principle for $\gO_\ell$: since $\ell 
> \sqrt{a/\gb} \gg 1$, there exists a rate function $J$, with $J(t)>0$ for $0<t<m(\gd)$, 
such that $\bbP^{\gd}(\gO_\ell \leq t \ell) \leq e^{-J(t)\ell}$. Hence \eqref{gestell} gives 
\begin{equation}
g_{\gd,\gb}(\ell) \leq e^{-\gb \tfrac14 m(\gd)^2\ell^2} + e^{-J\big(\tfrac12 m(\gd)\big)\ell}.
\end{equation}
We next use that either $\tfrac14 m(\gd)^2\gb \ell^2 \leq 1 \ll J\big(\tfrac12 m(\gd)\big)\ell$ 
or both $\tfrac14 m(\gd)^2\gb \ell^2$ and $J\big(\tfrac12 m(\gd)\big)\ell$ are $\geq 1$, to 
get that there is a constant $c>0$ such that
\begin{equation}
g_{\gd,\gb}(\ell) \leq  \max\Big\{e^{- c m(\gd)^2\gb\ell^2},
\,e^{- c J\big(\tfrac12 m(\gd)\big)\ell}\Big\},
\end{equation}
which proves the claim with $c_\gd=\max\{c m(\gd)^2 , cJ(\tfrac12 m(\gd))\}$.
\end{proof}

With the help of Lemma~\ref{lem:gdb} we can now prove the upper bound. Inserting 
\eqref{gendest1}--\eqref{gendest2} into \eqref{Zndbdef}, we get the upper 
bound
\begin{equation}
\label{ubhelp}
\begin{aligned}
\bbZ_{n}^{\gd,\gb} &\leq E\bigg[\exp\bigg(-\sum_{x\in\bbZ^d} 
\Big\{\Big[ \gb v(\gd) \ell_n(x) + (1-\eta)\gb m(\gd)^2\ell_n(x)^2\Big] 
\ind_{\{\ell_n(x)\leq a\gb^{-1/2}\}}\\ 
&\qquad\qquad\qquad\qquad\qquad 
+ \left[c_\gd \min \big\{\gb \ell_n(x)^2,\ell_n(x)\big\}\right] 
\ind_{\{\ell_n(x)>a \gb^{-1/2}\}}\Big\}\bigg) \bigg].
\end{aligned}
\end{equation}
Let $u = (1-\eta)\gb m(\gd)^2$. Then the condition $\ell_n(x)\leq a\gb^{-1/2}$ translates into 
$\ell_n(x)\leq c_{\gd,\eta} /\sqrt{u}$, and for any $\gep>0$ the upper bound in \eqref{ubhelp} 
gives
\begin{equation}
\label{finest1}
\begin{aligned}
\bbZ_{n}^{\gd,\gb} &\leq e^{-\gb v(\gd) n - un}\\ 
&\qquad\times
E \bigg[\exp\bigg(\sum_{x\in\bbZ^d} \Big\{-u \ell_n(x)^2 + u\ell_n(x)\Big\} 
\ind_{\{\ell_n(x)\leq c/\sqrt{u}\}} \bigg)\\
&\qquad\qquad\qquad\times \exp\bigg(\sum_{x\in\bbZ^d} h_{\gd,\gb}(\ell_n(x))
\ind_{\{\ell_n(x)> c/\sqrt{u}\}}\bigg) \bigg]
\end{aligned}
\end{equation}
with
\begin{equation}
h_{\gd,\gb}(\ell) = -c_{\gd}\min \big\{\gb \ell^2,\ell\big\} + \gb v(\gd) \ell + (1-\gep)f^{\wsaw}(u) \ell.
\end{equation}
Since $\ell(1-\ell) \le 0$ for all $\ell\in\N_0$, we get
\begin{equation}
\bbZ_{n}^{\gd,\gb} \leq e^{-\gb v(\gd) n - un}
E \bigg[\exp\bigg(\sum_{x\in\bbZ^d} h_{\gd,\gb}(\ell_n(x))
\ind_{\{\ell_n(x)> c/\sqrt{u}\}}\bigg) \bigg]
\end{equation}
However, $h_{\gd,\gb}(\ell) \leq 0$ when $\gb$ is small enough and $\ell > a \gb^{-1/2}$ (or 
$\ell > c/\sqrt{u}$). Indeed, using that $f^{\wsaw}(u) = o (\gb^{1/2})$ as $\gb\downarrow 0$ 
by Proposition~\ref{pr:wsaw}, we get, as $\beta \downarrow 0$,
\begin{equation}
\label{finest2}
h_{\gd,\gb}(\ell) \leq 
\begin{cases}
[-c_{\gd} + \gb v(\gd)+f^{\wsaw}(u)]\ell = -[1+o(1)] c_{\gd} \ell, 
&\ell \geq 1/\gb,\\
[-c_{\gd} a \gb^{1/2} + \gb v(\gd)+f^{\wsaw}(u)]\ell = -[1+o(1)] c_{\gd} a^2, 
&a\gb^{-1/2} \leq \ell < 1/\gb.
\end{cases}
\end{equation}
Finally, we get $\bbZ_{n}^{\gd,\gb} \leq e^{-\gb v(\gd) n - un}$, which gives the upper bound.
\end{proof}

\subsection{Towards the conjectured scaling of the free energy for small inverse temperature}

In this section we explain how to settle Conjecture~\ref{conj:asym.smallbeta} with the help of 
Conjecture~\ref{conj:ubcritcurve}. Instead of \eqref{finest1}, we write 
\begin{equation}
\label{finest1bis}
\begin{aligned}
\bbZ_{n}^{\gd,\gb} &\leq e^{-\gb v(\gd) n - (1-\gep) f^{\wsaw}(u) n}\\ 
&\qquad\times
E \bigg[\exp\bigg(\sum_{x\in\bbZ^d} \Big\{-u \ell_n(x)^2 +(1-\gep) f^{\wsaw}(u)\ell_n(x)\Big\} 
\ind_{\{\ell_n(x)\leq c/\sqrt{u}\}} \bigg)\\
&\qquad\qquad\qquad\times \exp\bigg(\sum_{x\in\bbZ^d} h_{\gd,\gb}(\ell_n(x))
\ind_{\{\ell_n(x)> c/\sqrt{u}\}}\bigg) \bigg]
\end{aligned}
\end{equation}
Combining \eqref{finest1bis} and \eqref{finest2}, and recalling \eqref{defZbarest}--\eqref{defZbar}, 
we get
\begin{equation}
\bbZ_{n}^{\gd,\gb} \leq e^{-\gb v(\gd)n - (1-\gep) f^{\wsaw}(u) n} \bar Z_{n,u}^{\gep}.
\end{equation}
Because of \eqref{targetpropZ}, we find that $\limsup_{n\to\infty} \frac{1}{n} \log \bar Z_{n,u}^{\gep} 
=0$ for any $\gep>0$, provided $u$ is small enough (i.e., provided $\gb$ is small enough).
Since $u=(1-\eta) \gb m(\gd)^2$, we conclude that, for any fixed $\eta,\gep>0$,
\begin{equation}
F(\gd,\gb) = \limsup_{n\to\infty} \frac{1}{n} 
\log \Z_{n}^{\gd,\gb} \leq -\gb v(\gd) - (1-\gep) f^{\wsaw}((1-\eta) \gb m(\gd)^2 ).
\end{equation}
Let $\gep,\eta \downarrow 0$ to get the upper bound in \eqref{eq:Fscal}.

\section{Super-additivity for large inverse temperature}
\label{s:conv_free_energy}

In this section we prove Theorem~\ref{thm:conv_free_energy}. Looking back at \eqref{Zndbstar}, 
we first note that item (1) combined with
\be
\ell_{n+m}(x) = \ell_n(x) + \ell_{(n,n+m]}(x),\qquad \ell_{(n,n+m]}(x) = \sum_{n<k\le n+m} \ind_{\{S_k = x\}},
\ee
and
\be
\bE\Big( \prod_x g^*_{\gd,\gb}(\ell_{(n,n+m]}(x))\ \Big| S_0, \ldots, S_n\ \Big) = \bbZ_m^{*,\gd,\gb}
\ee
implies that the annealed partition function is super-multiplicative, which yields items (2) and (3). 

We next prove item (1). The proof consists of a refinement of the proof of 
Theorem~\ref{thm:criticalpoint}(iii). Recall that 
\begin{equation}
g^*_{\gd,\gb}(\ell) = \bbE (e^{-\gb \gO_\ell^2 + \gd\gO_\ell}). 
\end{equation}
In the following we will denote by $f_\ell$ the density of $\gO_\ell$, and use that
\begin{lemma}
\label{lem:estimates_fl}
There exist $\gep_0 >0$ and two positive constants $c_0$ and $c_1$ such that for $\ell \ge 1$,
\begin{equation}
c_0\ \ell^{-1/2} \le \inf_{0\le x\le \gep_0} f_\ell(x) \le \|f_\ell\|_\infty \le c_1\ \ell^{-1/2}.
\end{equation}
\end{lemma}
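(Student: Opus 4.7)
The plan is to derive both bounds from Gnedenko's local central limit theorem for densities: under our hypotheses ($\omega_1$ centered, unit variance, bounded density $f_1$),
\[
\sup_{y\in\R}\left|\sqrt\ell\,f_\ell(y)-\phi(y/\sqrt\ell)\right|\;\xrightarrow[\ell\to\infty]{}\;0,
\]
where $\phi$ is the standard normal density. Given this, both inequalities become essentially immediate for $\ell\ge\ell_1$ (some fixed threshold): the upper bound follows from $\phi\le\|\phi\|_\infty$, and the lower bound from $\phi(y/\sqrt\ell)\ge\phi(\varepsilon_0)$ uniformly for $y\in[0,\varepsilon_0]$ and $\ell\ge 1$, by choosing $\ell_1$ large enough that the local-CLT error is at most $\phi(\varepsilon_0)/2$.

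I would prove the local CLT itself by the standard Fourier route. The density hypothesis forces $\omega_1$ to be non-lattice, so $|\hat f_1(t)|<1$ for $t\ne 0$ and, by continuity plus Riemann--Lebesgue, $M_\delta:=\sup_{|t|\ge\delta}|\hat f_1(t)|<1$ for every $\delta>0$. Since $f_1\in L^1\cap L^\infty\subset L^2$, Plancherel gives $\hat f_1\in L^2$, so $\hat f_1^\ell\in L^1$ for $\ell\ge 2$ and Fourier inversion applies. With the substitution $t=s/\sqrt\ell$ one writes
\[
\sqrt\ell\,f_\ell(y)=\frac{1}{2\pi}\int_{\R}e^{-isy/\sqrt\ell}\,\hat f_1(s/\sqrt\ell)^\ell\,ds,
\]
and splits at $|s|=\delta\sqrt\ell$. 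The inner piece converges to $\phi(y/\sqrt\ell)$ via the Taylor expansion $\hat f_1(u)=1-\tfrac12 u^2+o(u^2)$ and dominated convergence; the outer piece is controlled by $|\hat f_1(s/\sqrt\ell)|^\ell\le M_\delta^{\ell-2}\,|\hat f_1(s/\sqrt\ell)|^2$, which is $L^1$-integrable and decays exponentially in $\ell$. Uniformity in $y$ is clear because $y$ enters only through a phase factor.

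The main obstacle is the remaining finite range $1\le\ell<\ell_1$, a set of finitely many cases. The upper bound there is immediate from Young's inequality: $\|f_\ell\|_\infty\le\|f_1\|_\infty\|f_{\ell-1}\|_1=\|f_1\|_\infty$, so $\|f_\ell\|_\infty\le\|f_1\|_\infty\sqrt{\ell_1}\cdot\ell^{-1/2}$ on this finite range. For the lower bound, observe that for $\ell\ge 2$, $f_\ell=f_1\ast f_{\ell-1}$ is continuous (convolution of two $L^2$ functions) and strictly positive at $0$: indeed $f_2(0)=\int f_1(z)f_1(-z)\,dz$ could vanish only if the supports of $\omega_1$ and $-\omega_1$ were essentially disjoint, contradicting $\bbE[\omega_1]=0$ together with $\bbE[\omega_1^2]=1$. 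Continuity then yields a common $c>0$ and an interval $[0,\varepsilon_0]$ on which $f_\ell\ge c$ for $\ell=2,\ldots,\ell_1-1$; the $\ell=1$ case is absorbed by the modelling-natural hypothesis that $f_1$ is bounded below on $[0,\varepsilon_0]$, which one secures by shrinking $\varepsilon_0$ further if necessary. The delicate point throughout is choosing the thresholds $\varepsilon_0$, $\ell_1$, and the constants $c_0,c_1$ in a compatible way so that the large-$\ell$ and small-$\ell$ estimates glue together with the same $\varepsilon_0$.
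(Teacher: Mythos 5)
Your route is substantially different from the paper's: the paper disposes of the lemma in one line by citing Petrov's local limit theorem for densities, whereas you reconstruct that theorem via the Fourier argument and, correctly, observe that a purely asymptotic statement only yields the two inequalities for $\ell$ beyond some threshold $\ell_1$. Flagging the finite range $1\le\ell<\ell_1$ as the real issue is a good catch, and your treatment of the upper bound there ($\|f_\ell\|_\infty\le\|f_1\|_\infty\|f_{\ell-1}\|_1=\|f_1\|_\infty$, then absorb $\sqrt{\ell_1}$ into $c_1$) is fine, as is the Fourier derivation of the uniform local CLT under the paper's hypotheses (bounded density $\Rightarrow f_1\in L^2\Rightarrow \hat f_1^\ell\in L^1$ for $\ell\ge 2$, split at $|s|=\delta\sqrt\ell$, etc.).

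The lower bound on the finite range is where the argument breaks. Your claim that $f_2(0)=\int f_1(z)f_1(-z)\,\dd z$ can vanish ``only if the supports of $\omega_1$ and $-\omega_1$ were essentially disjoint, contradicting $\bbE[\omega_1]=0$ together with $\bbE[\omega_1^2]=1$'' is not correct: centering and unit variance do \emph{not} force those supports to overlap. For example, let $\omega_1$ be a mixture of the uniform law on $(1,2)$ with weight $5/8$ and the uniform law on $(-3,-2)$ with weight $3/8$; this has mean zero, and after dividing by the standard deviation the support is a union of two intervals, one contained in $(0,\infty)$ and one in $(-\infty,0)$, whose reflections are disjoint from the original. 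Thus $f_2(0)=0$, and in fact $f_1$ and $f_2$ vanish identically on a right-neighbourhood of $0$, so the displayed lower bound in the lemma fails outright for $\ell=1,2$ in this example. Your fix for $\ell=1$ quietly adds the hypothesis that $f_1$ is bounded below near $0$, which is not assumed in the paper. What this really reveals is that the lemma, as the paper states it (``for $\ell\ge 1$''), is slightly too strong without an extra hypothesis on $f_1$ near $0$, and the paper's one-line citation to Petrov proves it only for $\ell$ large enough. A clean repair is to state and prove the lemma for $\ell\ge\ell_0$ (some fixed $\ell_0$), which is what the local CLT actually gives and which suffices for the use in Lemma~\ref{lem:estimates_gstar} since the finitely many remaining values $g^*_{\gd,\gb}(1),\ldots,g^*_{\gd,\gb}(\ell_0-1)$ can be absorbed into constants; trying to push the lower bound all the way to $\ell=1$ by positivity-at-zero arguments cannot work in general, as the example shows.
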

We will also use the following estimates on the function $g^*_{\gd,\gb}$:
\begin{lemma}
\label{lem:estimates_gstar}
Suppose that $\gb(\gd)$ is such that $\gd  \ll \gb(\gd)) \ll \gd^2$ as $\gd\to\infty$. Then, there 
exists a constant $c>1$ such that for $\gd$ large enough, $\ell\in\N$, $\eta\in (0,1)$
\begin{equation}
(1/c) \eta \frac{\gd}{\gb(\gd)}\ e^{(1-\eta)\gd^2/4\gb(\gd)}\ \ell^{-1/2} 
\le g^*_{\gd,\gb(\gd)}(\ell) \le c\ e^{\gd^2/4\gb(\gd)}\frac{\gd}{\gb(\gd)}\ \ell^{-1/2}.
\end{equation}
\end{lemma}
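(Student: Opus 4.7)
The plan is to complete the square in the exponent defining $g^*_{\gd,\gb}(\ell)$ and then apply the pointwise bounds on $f_\ell$ from Lemma~\ref{lem:estimates_fl} under the resulting rescaled Gaussian kernel. Writing $-\gb x^2+\gd x = -\gb(x-\mu)^2+\gd^2/(4\gb)$ with $\mu := \gd/(2\gb)$ and substituting $z=(x-\mu)\sqrt{\gb}$, one obtains
\begin{equation}
g^*_{\gd,\gb}(\ell) \;=\; \frac{e^{\gd^2/(4\gb)}}{\sqrt{\gb}}\int_\R f_\ell\bigl(\mu+z/\sqrt{\gb}\bigr)\,e^{-z^2}\,dz.
\end{equation}
Under the hypothesis $\gd\ll\gb(\gd)\ll\gd^2$, the only three scaling facts I will use are that, as $\gd\to\infty$, $\mu\to 0$, $\sqrt{\gb}\to\infty$, and $\nu := \mu\sqrt{\gb} = \gd/(2\sqrt{\gb})\to\infty$.

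For the upper bound I insert $\|f_\ell\|_\infty\le c_1/\sqrt{\ell}$ and evaluate the Gaussian integral to obtain $g^*_{\gd,\gb}(\ell)\le c_1\sqrt{\pi}\,\ell^{-1/2}\,e^{\gd^2/(4\gb)}/\sqrt{\gb}$, then use $\sqrt{\gb}/\gd = 1/(2\nu)\to 0$ to absorb $1/\sqrt{\gb}$ into $\gd/\gb$ up to a constant. For the lower bound I restrict the $z$-integral to the window $\{z : \mu+z/\sqrt{\gb}\in[0,\gep_0]\} = [-\nu,(\gep_0-\mu)\sqrt{\gb}]$, on which $f_\ell\ge c_0/\sqrt{\ell}$ holds by Lemma~\ref{lem:estimates_fl}. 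Both endpoints diverge, so the Gaussian integral over this window exceeds $\sqrt{\pi}/2$ for $\gd$ large, giving $g^*_{\gd,\gb}(\ell)\ge (c_0\sqrt{\pi}/2)\,\ell^{-1/2}\,e^{\gd^2/(4\gb)}/\sqrt{\gb}$.

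It then remains to rewrite this intrinsic lower bound in the target $\eta$-dependent form. The identity
\begin{equation}
\eta\,\frac{\gd}{\gb}\,e^{-\eta\gd^2/(4\gb)} \;=\; \frac{2\eta\nu\,e^{-\eta\nu^2}}{\sqrt{\gb}}
\end{equation}
together with the elementary calculus inequality $\sup_{\nu\ge 0}2\eta\nu\,e^{-\eta\nu^2} = \sqrt{2\eta/e}\le\sqrt{2/e}$, valid uniformly in $\eta\in(0,1)$, show that the target expression is dominated by $\sqrt{2/e}/(c\sqrt{\gb})$, so any choice of $c\ge 2\sqrt{2/(e\pi)}/c_0$ makes the claim hold simultaneously for every $\eta\in(0,1)$. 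The whole argument is essentially a single Laplace-type calculation and presents no real obstacle; the one point to monitor carefully is that the regime $\gd\ll\gb\ll\gd^2$ forces the mode $\mu$ to lie well inside $[0,\gep_0]$ and sends the rescaled endpoint $\nu$ to infinity, so that Lemma~\ref{lem:estimates_fl} can be applied on a window that captures essentially all of the Gaussian mass.
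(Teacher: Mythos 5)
Your argument is correct. It differs from the paper's execution in a way worth noting: the paper bounds $g^*_{\gd,\gb}(\ell)=\int e^{\gd s - \gb s^2}f_\ell(s)\,\dd s$ by splitting the domain into $(0,\gd/\gb)$, two $\gep$-collars, and the far tails, then estimating each piece by the interval length times the sup of the integrand; for the lower bound it restricts to a sub-interval of $(0,\gd/\gb)$ of length $\asymp\eta\gd/\gb$, which is what makes the $\eta$-dependence appear directly. You instead complete the square, rescale by $\sqrt{\gb}$, and handle the whole Gaussian integral at once; this yields the sharper, $\eta$-free two-sided estimate $g^*_{\gd,\gb}(\ell)\asymp e^{\gd^2/(4\gb)}\,\ell^{-1/2}/\sqrt{\gb}$ and you then recover the stated $\eta$-dependent lower bound from the elementary fact $\sup_{\nu\ge 0}2\eta\nu e^{-\eta\nu^2}=\sqrt{2\eta/e}\le\sqrt{2/e}$. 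The essential inputs are identical (Lemma~\ref{lem:estimates_fl} and the Laplace-type concentration around the mode $\mu=\gd/(2\gb)$, governed by $\mu\to 0$, $\sqrt\gb\to\infty$, $\nu=\gd/(2\sqrt\gb)\to\infty$ under $\gd\ll\gb\ll\gd^2$), so I would call this a cleaner execution of the same strategy rather than a different proof; what your version buys is that the normalizing factor naturally comes out as $1/\sqrt{\gb}$ (matched to $\gd/\gb$ via $\sqrt{\gb}/\gd=1/(2\nu)\to 0$) instead of the cruder length-times-sup bound, and the lower bound is obtained uniformly in $\eta$ at no extra cost. One small point worth saying explicitly for completeness: the final constant $c$ must absorb both the lower-bound constraint $c\ge 2\sqrt{2/(e\pi)}/c_0$ and the upper-bound one ($c\ge c_1\sqrt\pi\,\sqrt{\gb}/\gd$, which is eventually $\le 1$), together with $c>1$; this is immediate but should be stated.
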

Using the previous lemma we get, for some constant $c>0$, $\eta\in(0,1)$ and all $m,n\in\N$,
\begin{equation}
\label{eq:superadd}
\begin{aligned}
&\log g^*_{\gd,\gb}(m+n) - \log g^*_{\gd,\gb}(m) - \log g^*_{\gd,\gb}(n)\\
&\ge \tfrac12 \inf_{u,v\ge 1} \{\log u + \log v - \log(u+v)\} - c + \log \eta 
+ \big[ \log(\gb/\gd) - (1+\eta)\tfrac{\gd^2}{4\gb}  \big].
\end{aligned}
\end{equation}
Picking for $\gb$ the value $\gb(\gd) = (1+\sqrt{\eta})\frac{\gd^2}{4\log \gd}$ with $\eta\in (0,1)$, 
the right-hand side of \eqref{eq:superadd} becomes positive for $\gd$ large enough, which 
proves item (1). Note that this value of $\gb(\gd)$ satisfies the assumption of
 Lemma~\ref{lem:estimates_gstar} and is equivalent to $(1+\sqrt\eta)\gb_c(\gd)$, in view 
 of Theorem~\ref{thm:criticalpoint}(iii). Since $\eta$ can be made arbitrarily small, this 
 completes the proof of the theorem.

\begin{proof}[Proof of Lemma~\ref{lem:estimates_fl}]
This follows from the local limit theorem for densities (see Petrov~\cite[Theorem 7, Chapter 
VII]{P75}), where we need that the density of $\omega_1$ is bounded.
\end{proof}

\begin{proof}[Proof of Lemma~\ref{lem:estimates_gstar}] 
In the following we pick $\gb(\gd)$ as in the statement of the lemma, but we write $\gb$ 
for simplicity. We start with the decomposition
\be
g^*_{\gd,\gb}(\ell) = \int_\R e^{\gd s (1 - \gb s/\gd)} f_\ell(s),\dd s = I_1 + I_2 + I_3,
\ee
where
\be
I_1 = \int_{\{0<s< \gd/\gb\}}, \quad
I_2 = \int_{\{-\gep<s< 0\} \cup \{\gd/\gb < s < \gd/\gb + \gep\}}, \quad
I_3 = \int_{\{s < -\gep \} \cup \{s > \gd/\gb + \gep\}},
\ee
and $\gep>0$ will be determined later. For the lower bound, we may write
\be
I_1 \ge \eta (\gd/\gb) e^{ \frac{\gd^2}{4\gb} (1-\eta)  } \inf_{\tfrac{\gd}{2\gb}< s < (1+\eta) \tfrac{\gd}{2\gb}} f_\ell(s)
\ee
and use Lemma~\ref{lem:estimates_fl}, since $\gd/\gb < \gep_0 /2$ for $\gd$ large enough.
For the upper bound, we easily get
\be
I_1 \le e^{\gd^2/4\gb} \frac{\gd}{\gb} \|f_\ell\|_\infty,\qquad I_2 \le 2\gep \|f_\ell\|_\infty.
\ee
As to the third term, we have
\be
I_3 \le \int_{s<-\gep} e^{\gd s} f_\ell(s)\,\dd s + \int_{s > \gd/\gb + \gep} 
e^{-\gb\gep s} f_\ell(s)\,\dd s \le \Big(\frac{1}{\gd} + \frac{1}{\gb\gep} \Big) \|f_\ell\|_\infty.
\ee
By picking $\gep = \gd/\gb$, we obtain
\be
g^*_{\gd,\gb}(\ell) \le e^{\gd^2/4\gb} \frac{\gd}{\gb}\, \|f_\ell\|_\infty\, (3 + 2\gb/\gd^2).
\ee
We can now complete the proof with the help of Lemma~\ref{lem:estimates_fl}, since the last expression 
in parenthesis is less than $4$ for $\gd$ large enough.
\end{proof}


\begin{appendix}

\section{Bridge estimates}
\label{app:bridge}

In this appendix we collect the estimates about simple random walk conditioned to 
be a bridge that were claimed in \eqref{eq:asymp.bridge}, \eqref{liminfPQnalt} and 
\eqref{eq:EQmBm}.

\subsection{Bridge probability}
\label{app:bridge0}

First we prove \eqref{eq:asymp.bridge}. Note that it suffices to give the proof for $d=1$. 
Indeed, by a standard large deviation estimate, the number of steps taken by the random
walk in direction $1$ after it has taken $n$ steps in total equals $\tfrac1d n[1+o(1)]$, with 
an exponentially small probability of deviation. Hence, if the claim is true for $d=1$, then 
it is also true for $d \geq 2$ with $C$ replaced by $dC$.

To prove the claim for $d=1$ we write
\begin{equation}
\label{br1}
\begin{aligned}
P(\cB_{2n}) &= \sum_{x=1}^\infty P(\cB_{2n}, S_{2n} = x)\\
&= \sum_{x=2}^\infty \sum_{y=1}^x 
P\left(S_n = y,\,\max_{0 < k < n} S_k < x, \min_{0<k<n} S_k >0\right)\\
&\qquad\qquad\qquad 
\times P\left(S_n = x-y,\,\max_{0 < k < n} S_k < x, \min_{0<k<n} S_k > 0\right),
\end{aligned}
\end{equation}  
where the product after the second equality arises after we use the Markov property at time 
$n$ and reverse time in the second half of the random walk. Let $(\alpha_n)_{n\in\N}$ and 
$(\beta_n)_{n\in\N}$ be sequences in $(0,\infty)$ that tend to $\alpha$ and $\beta$, respectively, 
with $0 \leq \beta \leq \alpha$. Then it follows from Caravenna and Chaumont~\cite[Theorem 2.4]{CC13} 
that
\begin{equation}
\label{br4}
\lim_{n\to\infty} P\left(\max_{0<k<n} S_k < \alpha_n\sqrt{n} ~\Big|~ S_n = \beta_n\sqrt{n}, 
\min_{0<k<n} S_k > 0\right) = \psi(\alpha,\beta)
\end{equation}
with
\begin{equation}
\label{br5}
\psi(\alpha,\beta) = P^*\left( \max_{0 \leq t \leq 1} X_t^\beta \leq \alpha\right).
\end{equation} 
Here, $(X_t^\beta)_{0 \leq t \leq 1}$ is the Brownian bridge between $0$ and $\beta$ conditioned 
to stay positive, and $P^*$ denotes its law. Moreover, by the ballot theorem (Feller~\cite{F71}), we have
\begin{equation}
\label{br6}
P\left(S_n = \beta_n\sqrt{n}, \min_{0<k<n} S_k > 0\right)
= \frac{\beta_n\sqrt{n}}{n}\,P\big(S_n = \beta_n\sqrt{n}\,\big),
\end{equation}
so that
\begin{equation}
\label{br7}
\lim_{n\to\infty} n\,P\left(S_n = \beta_n\sqrt{n}, \min_{0<k<n} S_k > 0\right) 
= \beta n(\beta)
\end{equation}
with $n(z) = \frac{1}{\sqrt{2\pi}} \exp[-\tfrac12z^2]$, $z\in\R$, the standard normal density.
Rewriting \eqref{br1} as
\begin{equation}
\label{br8}
\begin{aligned}
n\,P(\cB_{2n}) = \sum_{x=2}^\infty \sum_{y=1}^{x-1}
&\quad \frac{1}{\sqrt{n}}\,\,n\,P\left(S_n = y, \min_{0<k<n} S_k > 0\right)\\
&\qquad\qquad\qquad \times
P\left(\max_{0<k<n} S_k < x ~\Big|~ S_n = y,\min_{0<k<n} S_k > 0\right)\\
&\times \frac{1}{\sqrt{n}}\,\,n\,P\left(S_n = x-y, \min_{0<k<n} S_k > 0\right)\\
&\qquad\qquad\qquad \times
P\left(\max_{0<k<n} S_k < x ~\Big|~ S_n = x-y,\min_{0<k<n} S_k > 0\right),
\end{aligned}
\end{equation}
changing variables $x=\alpha_n\sqrt{n}$ and $y = \beta_n\sqrt{n}$, and taking the limit $n\to\infty$, 
we get with the help of \eqref{br4}, \eqref{br6} and \eqref{br7} that
\begin{equation}
\lim_{n\to\infty} n\,P(\cB_{2n}) = C'
\end{equation}
with
\begin{equation}
C' = \int_0^\infty d\alpha \int_0^\alpha d\beta\,\,
\big[\beta n(\beta)\,\psi(\alpha,\beta)\big]\,
\big[(\alpha-\beta) n(\alpha-\beta)\,\psi(\alpha,\alpha-\beta)\big].
\end{equation}
The limit and the integral can be interchanged with the help of dominated convergence (drop the
two conditional probabilities in \eqref{br8} and write the resulting bound as the square of $\sqrt{n} 
\,P(\min_{0<k<n} S_k > 0)$, which tends to $1/\sqrt{2\pi}$ as $n\to\infty$). The same argument 
works for $P(\cB_{2n+1})$ after cutting at time $n$, which leads to two random walks of length 
$n$ and $n+1$, but yields the same asymptotics. 

Thus, we have proved \eqref{eq:asymp.bridge} for arbitrary $d \geq 1$ with $C = 2dC'$. It is possible 
to derive a closed form expression for $\psi(\alpha,\beta)$ because $(X_t^\beta)_{0 \leq t \leq 1}$ 
is a $\beta$-dependent Doob-transform of Brownian motion. However, the value of $C'$ is of no concern 
to us. Note that  
\begin{equation}
0 < C' < \int_0^\infty d\alpha \int_0^\alpha d\beta\,\,
\big[\beta n(\beta)\big]\,\big[(\alpha-\beta) n(\alpha-\beta)\big]
= \left(\int_0^\infty d\gamma\,\gamma n(\gamma) \right)^2 = \frac{1}{2\pi}.
\end{equation}   
 
\subsection{Self-intersection local time for bridges in dimension two}
\label{app:bridge1}

We next prove \eqref{liminfPQnalt}. The idea is that the main contribution comes from the 
restriction $S_{[0,m]}\in\cB_m$. Fix $\gep>0$ small, let $t_m= \gep^2 m$, and consider the 
three time intervals $I_1=(1,t_m]$, $I_2=(t_m,m-t_m]$, $I_3=(m-t_m,m]$. Define $Q^{k,l} 
= \sum_{i\in I_k , j\in I_l} \ind_{\{S_i=S_j\}}$, $k,l\in\{1,2,3\}$ (so that $Q_m=\sum_{k,l
\in\{1,2,3\}} Q^{k,l}$), and define the events
\begin{equation}
\begin{aligned}
\cD_{k,l} &= \{ Q^{k,l} \leq \tfrac{\gep}{100} m \log m \}, \quad (k,l)\neq (2,2),\\
\cD_{2,2} &= \{Q^{2,2} \leq (1+\gep/2) \gl_2 m\log m\}.
\end{aligned}
\end{equation}
Then, provided $\gep$ is small enough, we have
\begin{equation}
\begin{aligned}
&P\big(Q_m  \leq (1+\gep)\gl_2 m \log m,\, \cB_m\big)\\
&\geq P\big( \cD_{k,l} \  k,l\in\{1,2,3\} ,\, \cB_m \big)
\geq  P(\cB_m)\,\bigg[1 - \sum_{k,l \in\{1,2,3\}} P(\cD_{k,l} ^c \mid \cB_m)\bigg],
\label{eq:Qin3}
\end{aligned}
\end{equation}
where we use the union bound, and the notation $\cB_m$ is short for $S_{[0,m]}\in\cB_m$.
We claim that, for $m$ large enough,
\begin{equation}
\label{smallprobab}
P(\cD_{k,l} ^c \mid \cB_m) \leq 100 \gep, \qquad k,l \in\{1,2,3\},
\end{equation}
which in turns proves \eqref{liminfPQnalt} because $\gep$ is arbitrary.

The proof of \eqref{smallprobab} goes as follows. First consider $(k,l)\neq (2,2)$. The 
Markov inequality gives
\begin{equation}
P(\cD_{k,l} ^c \mid \cB_m) \leq \frac{100}{\gep m\log m} E[Q^{k,l} \mid \cB_m],
\end {equation}
and so we need to estimate the last term. By symmetry, we may deal with the case 
$k=1$ only. Write 
\begin{equation}
E[Q^{1,l} \mid \cB_m] \leq t_m + 2 \sum_{i\in I_1} \sum_{j=i+1}^{m} P( S_i=S_j \mid \cB_m).
\end{equation}
Using the Markov property at times $i$ and $j$ and setting $r=j-i$, we get
\begin{equation}
\label{eq:conditionedbridge}
\begin{aligned}
&P(S_i=S_j, \cB_m)  = \sum_{x\in\bbZ^d} \sum_{y > x_1} P\left(S_i=S_j=x,\,
S_m^{(1)}=y,\, 0<S_k^{(1)} < y\,\, \forall\,0 < k < m\right)\\
&\leq \sum_{x\in\Z^d} \sum_{y > x_1} P\left(S_i=S_j=x,\,S_m^{(1)}=y,
0<S_k^{(1)} < y\,\, \forall\,0 < k < i\,\,\forall\, j < k < m\right)\\
&= \sum_{x\in\Z^d} \sum_{y > x_1} 
P\left(S_i=x,\,0<S_k^{(1)} < y\,\,\forall\,0 < k < i \right) P(S_{j-i} =0)\\
&\qquad \qquad \times 
P\left(S^{(1)}_{m-j}=y,\,0<S_k^{(1)} < y\,\, \forall\, 0 < k < m-j \mid S_0=x \right) \\
&= P(S_r =0) \sum_{x\in\Z^d} \sum_{y > x_1} P\left(S_i=x, \,S^{(1)}_{i+m-j}=y,\,
0<S_k^{(1)} < y\,\, \forall\, 0 < k < i+ m-j \right)\\
&= P(S_r =0)\, P(\cB_{m-r}).
\end{aligned}
\end{equation}
Hence, using the local limit theorem to get that there is a constant $c>0$ such that 
$P(S_r =0) \leq \frac{c}{r+1}$, and also \eqref{eq:asymp.bridge} to obtain the bound 
$P(\cB_{m-r}) \big/ P(\cB_m) \leq c \frac{m}{m-r}$, we get that
\begin{equation}
E[Q^{1,l} \mid \cB_m] \leq t_m + 2 c^2\sum_{i\in I_1} 
\sum_{r =1}^{m-i} \frac{1}{r+1} \frac{m}{m-r} \leq c' t_m \log m.
\end{equation}
Therefore, thanks to the definition of $t_m$, we get that
\begin{equation}
P(\cD_{k,l} ^c \mid \cB_m) \leq 100 \gep  \qquad \text{ for } (k,l)\neq (2,2).
\end{equation}

It remains to deal with the case $k=l=2$. We use \eqref{eq:asymp.bridge} to get that there 
is a constant $c>0$ such that
\begin{equation}
\begin{aligned}
P(\cD_{2,2} ^c \mid \cB_m) 
&\leq c \, m P\big( Q^{2,2}> (1+\gep/2) m\log m,\cB_m \big)\\
&\leq c\, m P\big( Q^{2,2}> (1+\gep/2) m\log m,\, S_i^{(1)}>0 \,\, 
\forall\, i\in I_1 , S_i^{(1)} < S_m^{(1)} \,\, \forall\, i \in I_3 \big)\\
&\leq  c\, m P \big( S_i^{(1)} > 0\,\, \forall\, 0 < i \leq \gep^2 m \big)^2 
P\big( Q_{(1-2\gep^2)m}> (1+\gep/2) m\log m \big)\\
&\leq  \frac{c'}{\gep^2} P\big( Q_{(1-2\gep^2)m}> (1+\gep/2) m\log m \big),
\end{aligned}
\end{equation}
where we use the independence of the three events in the second inequality, and the estimate 
$P(S_i^{(1)} > 0\,\,\forall\, 0< i \leq t\big) \leq c/\sqrt{t}$ in the third inequality. Finally, we simply use 
that $P(Q_{(1-2\gep^2)m}> (1+\gep/2) m\log m) \to 0$ as $m\to\infty$ (by a standard second 
moment estimate), so that \eqref{smallprobab} holds for large enough $m$.

\subsection{Self-intersection local time for bridges in dimensions three and higher}
\label{app:bridge2}

We finally prove \eqref{eq:EQmBm}. Recall from \eqref{gld} that $\gl_d = 2G_d-1 = 1 + 
2 \sum_{n\in\bbN} P(S_n=0)$.  We may write
\begin{equation}
\label{eq:Qmcond}
E[Q_m \mid \cB_m] = \sum_{1\leq i,j\leq m} P(S_j = S_i \mid \cB_m) 
\leq m + 2 \sum_{1\leq i < j \leq m} P(S_j - S_i = 0 \mid \cB_m)
\end{equation}
and use \eqref{eq:conditionedbridge}. By Remark~\ref{rem:bridge}, for every $\gep>0$ and 
$A<\infty$ there exists an $m_0=m_0(\gep,A)<\infty$ such that, for all $m \geq m_0$,
\begin{equation}
\label{eq:PcondSiSj}
P(S_j - S_i = 0 \mid \cB_m) \leq P(S_r=0)\,\frac{P(\cB_{m-r}) }{P(\cB_{m})}
\leq
\begin{cases}
(1+\gep) P(S_r = 0), &\mbox{ if }  1\leq r \leq A,\\
C\,P(S_r = 0), &\mbox{ if }  A < r \leq m/2,\\
C\,\frac{m^{1-d/2}}{1+ m-r}  &\mbox{ if }  m/2 < r \leq m,
\end{cases}
\end{equation}
where in the third line we use the standard local limit theorem to estimate $P(S_r =0) 
\leq Cm^{-d/2}$ for all $r \geq m/2$. Using \eqref{eq:PcondSiSj} we get, for any 
$1 \leq i \leq m$,
\begin{equation}
\label{Pijest}
\begin{aligned}
&\sum_{i<j \leq m} P(S_j-S_i=0 \mid \cB_m)\\
&\leq (1+\gep) \sum_{1 \leq r \leq A} P(S_r=0) 
+ C \sum_{A<r \leq m/2} P(S_r=0) 
+ Cm^{1-d/2} \sum_{m/2<r \leq m} \frac{1}{1+m-r} \\
&\leq (1+2\gep) \sum_{r \in \N} P(S_r=0) + Cm^{1-d/2} \log m \leq (1+3\gep) 
\sum_{r \in \N} P(S_r=0),
\end{aligned}
\end{equation}
where we use that $d\geq 3$, take $A$ large enough so that $C\sum_{r > A} P(S_r=0) 
\leq \gep \sum_{r \in\N} P(S_r=0)$, and take $m$ large enough. Substitute \eqref{Pijest}
into \eqref{eq:Qmcond} and sum over $1 \leq i \leq m$, to get 
\begin{equation}
E[Q_m \mid \cB_m] \leq (1+3\gep)\,m\, \Big(1+ 2\sum_{r \in \N} P(S_r=0)\Big) 
= (1+3\gep)\,\gl_d m,
\end{equation}
which concludes the proof.

\section{A conjecture for weakly self-avoiding walk}
\label{app:wsaw}

In this appendix we complement Proposition~\ref{pr:wsaw} by stating a conjecture
for the higher order terms in the asymptotic expansion of $f^{\wsaw}(u)$ for $d\ge 3$.

\begin{conjecture}
\label{conjfwsaw}
There are constants $a_d>0$ such that
\be
\lambda_d u  - f^{\wsaw}(u) \sim 
\begin{cases}
a_3 \, u^{3/2} ,& \quad d=3, \\
 a_4 \, u^2 \log(1/u), & \quad d=4, \\
a_d \, u^2, & \quad d\ge 5,
\end{cases}
\qquad \text{as } u\downarrow 0.
\ee
\end{conjecture}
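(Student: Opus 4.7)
My plan is to use the variational representation of the free energy obtained via Varadhan's lemma in \eqref{eq:Varadhan}: since $I(t)\equiv 0$ on $[\lambda_d,\infty)$ for $d\ge 3$ by \eqref{Iasymp}, the substitution $t=\lambda_d-\epsilon$ recasts the problem as
\begin{equation*}
\lambda_d u - f^{\wsaw}(u) = \sup_{\epsilon\in[0,\lambda_d-1]}\big[\epsilon u - I(\lambda_d-\epsilon)\big].
\end{equation*}
The conjecture is thereby reduced to sharp downward moderate-deviation asymptotics of $Q_n$ around its mean, which I would expect to take the form
\begin{equation*}
I(\lambda_d-\epsilon) \sim
\begin{cases}
c_3\,\epsilon^3, & d=3,\\
c_4\,\epsilon^2/\log(1/\epsilon), & d=4,\\
\epsilon^2/(2C_d), & d\ge 5,
\end{cases}\qquad\text{as }\epsilon\downarrow 0.
\end{equation*}
An elementary Laplace computation on $\epsilon u - I(\lambda_d-\epsilon)$ then yields the three conjectured scalings and identifies the constants $a_3=\tfrac{2}{3\sqrt{3}}\,c_3^{-1/2}$, $a_4=1/(4c_4)$, and $a_d=C_d/2$ for $d\ge 5$, the optima occurring at $\epsilon\sim u^{1/2}$, $\epsilon\sim u\log(1/u)$ and $\epsilon\sim C_d u$ respectively.

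For $d\ge 5$, the required moderate-deviation estimate should follow from a Cram\'er-type change of measure applied to the block decomposition \eqref{eq:cutQn}, combining the Gaussian CLT for $Q_n$ with variance $C_d n$ (Chen~\cite[Ch.~5]{C10}) with the optimized Chernov bound \eqref{Chernov}: that bound already supplies the lower bound $\epsilon^2/(2C_d)$ on $I$, while a matching upper bound on $P(Q_n\le(\lambda_d-\epsilon)n)$ can be produced by tilting the i.i.d.\ block contributions $Q^{(k)}$ so as to recenter them near $(\lambda_d-\epsilon)/\eta$. For $d=4$ the same scheme should go through, but with the block length $1/\eta$ itself allowed to grow with $\epsilon$, so as to absorb the logarithmic growth of $E[(Q^{(1)})^2]$ characteristic of the critical dimension; optimizing jointly over $\eta$ and the Chernov parameter should produce the anomalous rate $c_4\epsilon^2/\log(1/\epsilon)$.

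The main obstacle lies in $d=3$. The conjectured cubic behavior $I(\lambda_3-\epsilon)\sim c_3\epsilon^3$ is \emph{not} predicted by the variance: $\var(Q_n)\sim C_3 n\log n$ would naively suggest a quadratic rate at scale $1/\log n$, so a genuinely non-perturbative strategy must dominate. Heuristically, the walk should swell on some $\epsilon$-dependent scale $R(\epsilon)\to\infty$ in order to reduce $Q_n$ by $\epsilon n$, with the entropic cost scaling cubically in $\epsilon$; identifying the correct optimal profile and matching upper and lower bounds is the hard step. Possible tools include a Donsker--Varadhan / Wiener-sausage variational principle, the determinantal local-time representation of Brydges--van der Hofstad--K\"onig~\cite{BvdHK07}, or a refined multi-scale cut-and-tilt argument along the lines of Section~\ref{s:WSAW}. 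Pinning down the cubic exponent in $d=3$ is where I expect the bulk of the new input to be required, and it is also the principal risk to the overall approach.
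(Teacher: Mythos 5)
This statement is a conjecture, so the paper offers only a heuristic (Appendix~\ref{app:wsaw}) rather than a proof; your proposal should therefore be compared to that heuristic. Your reduction via Varadhan's lemma~\eqref{eq:Varadhan} to asymptotics of $I(\lambda_d-\epsilon)$ as $\epsilon\downarrow 0$ is precisely the ``related conjecture'' the paper records directly after Conjecture~\ref{conjfwsaw}, and your Legendre-transform bookkeeping between the two formulations (optimal $\epsilon\sim u^{1/2}$, $u\log(1/u)$, $cu$; constants $a_3,a_4,a_d$) is correct. What is genuinely different is how you propose to back up the rate-function asymptotics. The paper's heuristic does not go through $I$ at all: it starts from the block representation~\eqref{eq:subadditive} (which, incidentally, should read $\inf_m$ rather than $\sup_m$, since $\log Z_n^{\wsaw}(u)$ is \emph{sub}additive; evaluating at a finite $m\asymp 1/u$ thus really produces an upper bound on $\lambda_d u - f^{\wsaw}(u)$), expands $\log E[e^{-uQ_m}]$ in cumulants, and keys everything off the finite-size mean correction \eqref{EQn}, $\lambda_d n-E[Q_n]\sim\tilde c_3\sqrt n,\ \tilde c_4\log n,\ \tilde c_d$, together with non-trivial cancellations in higher cumulants (for $d=3$ the third cumulant of $Q_m$ is argued to be $O(m^{5/2})$ rather than the naive $O(m^3)$).

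Your Cram\'er-tilting strategy for $d\ge 5$ and $d=4$ is plausible and arguably a cleaner rigorization than a formal cumulant expansion, but two cautions are in order. First, for $d\ge 5$ the paper's expansion has \emph{every} cumulant contributing at order $u^2$ once $m\asymp 1/u$, so your identification $a_d=C_d/2$ (which keeps only the variance) is unlikely to be the exact constant. Second, and more substantively, your diagnosis of $d=3$ does not match the paper's. You assert that the cubic rate for $I$ must come from a ``genuinely non-perturbative'' swollen-profile strategy because $\var(Q_n)\sim C_3 n\log n$ cannot produce it. You are right that the variance alone cannot; but the paper's heuristic obtains the $u^{3/2}$ scaling \emph{perturbatively}: at the optimal scale $m\asymp 1/u$ the two balancing terms are the first-moment bias $u\tilde c_3 m^{-1/2}$ (coming from $\lambda_3 m-E[Q_m]\sim\tilde c_3\sqrt m$) and the cubic-cumulant term $c'_3 u^3 m^{3/2}$, with the variance contribution $\tfrac12 C_3 u^2\log m$ subleading. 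So the paper attributes the cubic exponent to the $\sqrt n$ drift of $E[Q_n]$, not to an anomalous tail or an optimal swelling profile. Any attempt to make the $d=3$ case rigorous should confront this distinction, because it changes the target quantity from an optimal large-deviation profile to a bias-tracking moderate-deviation estimate for $Q_n$ at scale $n\asymp 1/u$.
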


\noindent
Via \eqref{eq:Varadhan} this translates into a related conjecture for the rate function 
$I$ in Proposition~\ref{pr:largedevsuml2}: we conjecture that there are constants $\tilde a_d >0$ such that
\begin{equation}
I(\lambda_d -s) \sim
\begin{cases}
\tilde a_3 \, s^{3}, & \quad d=3, \\
\tilde a_4 \, s^2/\log(1/s), & \quad d=4, \\
\tilde a_d \, s^2, & \quad d\ge 5,
\end{cases}
\qquad s\downarrow 0.
\end{equation}

\smallskip
Let us develop some heuristic arguments to support Conjecture~\ref{conjfwsaw}.
First of all, note that in dimension $d\ge 3$, there are constants $\tilde c_d$ such that
\begin{equation}
\label{EQn}
\lambda_d n - E[Q_n] \sim
\begin{cases}
\tilde c_{3} n^{1/2}, &d=3,\\
 \tilde c_4 \log n, &d=4,\\
\tilde c_d,  &d\ge 5,
\end{cases}
\qquad n\to\infty.
\end{equation}
Indeed, we may write
\be
Q_n = n+ 2\sum_{ i=1}^{ n-1} \sum_{j=i+1}^{n} \ind_{\{S_j=S_j\}} 
= n + 2 \sum_{ i=1}^{n-1} \Big(\sum_{k=1}^{\infty} \ind_{\{S_{i+k} =S_i\}} \Big)  
- 2\sum_{i=1}^{n-1} \sum_{j> n} \ind_{\{S_j = S_i\}},
\ee
so that, by taking the expectation, we get
\be
E[Q_n] = n+ 2 (n-1) G_d -  2 \sum_{i=1}^{n-1} \sum_{j> n} P(S_{j-i}=0).
\ee
The first term equals $\lambda_d n -2G_d$. The second term can be easily estimated: 
we have $P(S_{2k} =0) \sim (2/\pi)^{d/2} k^{-d/2}$ as $k \to\infty$, so that $\sum_{j> n} 
P(S_{j-i}=0) \sim  \frac{2^d}{\pi^{d/2}(d-2)} (n-i)^{1-d/2}$ as $n - i \to \infty$. Hence
\be
\sum_{i=1}^{n-1} \sum_{j> n} P(S_{j-i}=0) \sim
\begin{cases}
\frac{16}{\pi^{3/2}} n^{1/2}, &d=3, \\
\frac{8}{\pi^2} \log n, &d=4,\\
E^{\otimes 2}[L_{\infty}(S,\tilde S)], &d\ge 5,
\end{cases}
\qquad n\to\infty,
\ee
where $L_{\infty}(S,\tilde S)$ is the total intersection local time of two independent random 
walks (which is finite for $d\ge 5$).

\smallskip
The above observation \eqref{EQn} is relevant when we try to guess the behavior of $f^\wsaw(u)$ as $u\downarrow 0$.
Indeed, by the subadditivity of $\log Z_n^{\wsaw}(u)$, we may write
\begin{equation}
\label{eq:subadditive}
\lambda_d u  - f^\wsaw(u) = \sup_{m} \Big\{ \lambda_d u + \frac1m \log E[e^{- u Q_m}] \Big\}.
\end{equation}
Assuming that we can expand $\frac1m \log E[e^{- u Q_m}]$ as $u\downarrow 0$ (we will also 
take $m \asymp 1/u$), we get
\begin{equation}
\label{eq:cumuexp}
\begin{aligned}
&\log E[e^{-u Q_m}]  = \log \Big( 1- u E[Q_m] +\tfrac12 u^2 E[Q_m^2] 
- \tfrac16 u^3 E[Q_m^3]  + \ldots\Big) \\
& = -u E[Q_m]  +\tfrac12 u^2 \big(E[Q_m^2] -E[Q_m]^2  \big) 
- u^3 \big( \tfrac16 E[Q_m^3]  - \tfrac12 E[Q_m] E[Q_m^2] + \tfrac13 E[Q_m]^3  \big) + \ldots
\end{aligned}
\end{equation}

For $d=3$ we may use \eqref{EQn} and \eqref{eq:asympQalt} to get
\be
\begin{aligned}
\frac{1}{m} \log E[e^{-u Q_m}] +u \lambda_d 
& =  [1+o(1)]\, \tilde c_3 u m^{-1/2} + \tfrac12 u^2 \log m + c'_3 u^3  m^{3/2} +  c_4 u^4 m^{5/2} + \ldots\\
& = u \Big( \tilde c_3 m^{-1/2} + C u \log m +  c'_3 u^2 m^{3/2}  + c''_3 u^3 m^{5/2} + \ldots \Big)  \, .
\end{aligned}
\ee
Note that in \eqref{eq:cumuexp}, in the term of order $u^3$, the leading order is $m^3$ but the 
different terms cancel each other out: the next order is  $m^{5/2}$ because of \eqref{EQn} and 
\cite[Eq.(6.4.3)]{C10} (a similar reasoning holds for the terms of order $u^k$ with $k>3$). When 
trying to optimise over $m$, we realise that we need to take $u^2 m^{3/2} \asymp m^{-1/2}$ (and 
the term $u \log m$ will turn out to be negligible): taking $m = c u^{-1}$  (where the constant 
$c$ is chosen so as to optimise the parenthesis above), we get that $\frac{1}{m} \log E[e^{-u Q_m}] 
+u \lambda_d  \sim a_3 u^{3/2}$, which when substituted into \eqref{eq:subadditive} gives the 
conjectured behaviour.

For $d=4$, we similarly have
\be
\begin{aligned}
\frac{1}{m} \log E[e^{-u Q_m}] +u \lambda_d = u \Big( \tilde c_4 \frac{\log m}{m} 
+ C u  + c'_4 u^2 m \log m + c''_4u^3 m^{2} \log m +\ldots \Big).
\end{aligned}
\ee
To optimize over $m$, we choose $u^2 m \log m \asymp \log m /m$ (and the term $Cu$ will be 
negligible), so that taking $m = c u^{-1}$ we have $\frac{1}{m} \log E[e^{-u Q_m}] +u \lambda_d  
\sim a_4 u^2 \log 1/u $.

For $d\ge5$, we have 
\be
\begin{aligned}
\frac{1}{m} \log E[e^{-u Q_m}] +u \lambda_d = u \Big( \frac{\tilde c_d}{m}+ C u  
+ c'_d u^2 m + c''_d u^3 m^2 +\ldots \Big).
\end{aligned}
\ee
We choose $u^2 m \asymp 1/m$, so that taking $m = c u^{-1} $ (all the terms contribute) we 
have $\frac{1}{m} \log E[e^{-u Q_m}] +u \lambda_d  \sim a_d u^2$.

\section{Large deviations for the trimmed range of simple random walk}
\label{app:range}

In Section~\ref{s:toconj_sharpbetac} we explained how we would prove 
Conjecture~\ref{conj:sharp_beta_c} via Conjecture~\ref{conj:ubcritcurve}. 
In this appendix we explain how the latter follows from an estimate on 
the upper large deviations for the trimmed range, which we state as 
Conjecture~\ref{conj.LDrange} below.

\subsection{Conjecture on the upper large deviations}

It was shown by Hamama and Hesten~\cite{HK01} that the range $R_n$ of simple random 
walk satisfies an upward large deviation principle for $d \geq 2$. Namely, they showed that 
the limit
\begin{equation}
\label{Jsdef}
J(s) = \lim_{n\to\infty} \left[-\frac{1}{n} \log P(R_n \geq sn)\right], \qquad s \in [0,1],
\end{equation}   
exists, with $s \mapsto J(s)$ finite, non-negative, non-decreasing and convex on $[0,1]$, 
and (see Fig.~\ref{fig-ratefuncalt})
\begin{equation}
d=2\colon \quad J(s) > 0, \quad s>0, 
\qquad d \geq 3\colon \quad J(s) \left\{\begin{array}{ll}
= 0, &s \leq 1/\gl_d,\\
>0, &1/\lambda_d < s \leq 1.
\end{array}
\right.
\end{equation}
This is the analogue of Proposition~\ref{pr:largedevsuml2}.

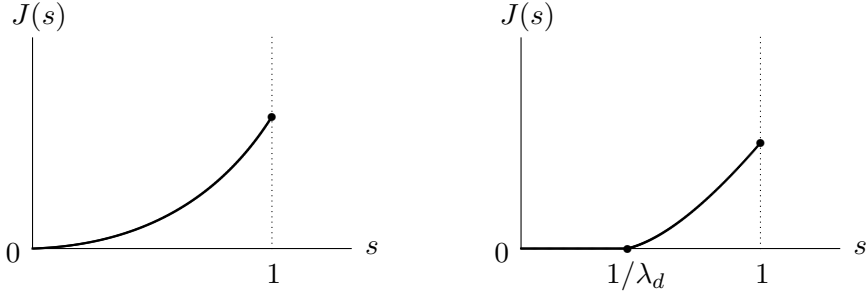
\begin{figure}[htbp]
\begin{center}
\setlength{\unitlength}{0.35cm}
\begin{picture}(12,12)(4,-2)
\put(0,0){\line(12,0){12}}
\put(0,0){\line(0,8){8}}
{\thicklines
\qbezier(0,0)(6,0.2)(9,5)
}
\qbezier[40](9,0)(9,4)(9,8)
\put(-1,-.5){$0$}
\put(12.5,-0.2){$s$}
\put(-0.8,8.5){$J(s)$}
\put(8.8,-1.3){$1$}
\put(9,5){\circle*{.35}}
\end{picture}
\begin{picture}(12,12)(-2,-2)
\put(0,0){\line(12,0){12}}
\put(0,0){\line(0,8){8}}
{\thicklines
\qbezier(0,0)(2,0)(4,0)
\qbezier(4,0)(6,0.5)(9,4)
}
\qbezier[40](9,0)(9,4)(9,8)
\put(-1,-.5){$0$}
\put(12.5,-0.2){$s$}
\put(-0.8,8.5){$J(s)$}
\put(8.8,-1.3){$1$}
\put(3.2,-1.3){$1/\gl_d$}
\put(4,0){\circle*{.35}}
\put(9,4){\circle*{.35}}
\end{picture}
\end{center}
\vspace{0cm}
\caption{\small Qualitative plots of $s \mapsto J(s)$ for $d=2$ and $d \geq 3$.}
\label{fig-ratefuncalt}
\end{figure}

Since $Q_n \leq n^2/R_n$, it follows that $J(s) \geq I(1/s)$, $s \in (0,1]$, with $I$ 
the rate function in \eqref{Itdef}. For $d=2$, $J$ inherits from $I$ the asymptotics 
found in \eqref{It2scal}, namely,
\begin{equation}
\label{Js2scal}
d = 2\colon\quad \lim_{s \downarrow 0} \big[-s \log J(s)\big] = \frac{1}{\gl_2}.
\end{equation}
Indeed, the upper bound is immediate from the corresponding upper bound on 
$-\tfrac1s \log I(1/s)$ in \eqref{It2scal}. The lower bound follows from an easy 
adaptation of the argument used in Section~\ref{ss:lem1} to prove the upper 
bound on $I(t)$. See, in particular, Step 4 in the proof of Proposition~\ref{pr:largedevsuml2}. 

\smallskip
The following conjecture deals with the upward large deviations of the range \emph{trimmed when
the local times exceed a certain threshold}. Our estimates on the rate function are 
not as good as \eqref{Jsdef}--\eqref{Js2scal}, but sufficient for our purpose. 

\begin{conjecture}
\label{conj.LDrange}
For $n\in\N$ and $A \in \N$, let
\begin{equation}
R_{n,A}^- = \{x\in\bbZ^d\colon 1\leq \ell_n(x) \leq A \}, 
\quad \gamma^-_{n,A} = \sum_{x\in R_{n,A}^-} \ell_n(x).
\end{equation}
For every $A \in \N$ and $s \in [0,1]$ there exists $J(A,s)$ such that,
\begin{equation}
P\Big( |R_{n,A}^-| \geq s \theta n,\,
\gamma_{n,A} \leq \theta n\Big) \leq e^{-J(A,s)\,\theta n},
\qquad \theta>0,\, n \geq n_0(A,s,\theta),
\end{equation}
with 
\begin{equation}
d = 2\colon \quad J(A,s)>0, \quad s>0, \qquad 
d \geq 3\colon \quad J(A,s) \left\{\begin{array}{ll}
= 0, &0 \leq s \leq 1/\gl_d(A),\\
> 0, &1/\gl_d(A) < s \leq 1,
\end{array}
\right.
\end{equation}
where
\begin{equation}
\label{eq:Ascal}
\begin{aligned}
&d=2\colon\quad \lim_{s \downarrow 0} -{s \log J(A(s),s)} 
= \frac{1}{\gl_2}, \quad A(s) \gg s^{-10},\\
&d \geq 3\colon\quad \lambda_d(A) < \lambda_d, \quad \lim_{A\to\infty} \gl_d(A) = \gl_d.
\end{aligned} 
\end{equation}
\end{conjecture}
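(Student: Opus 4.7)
The plan is to adapt the proof of Proposition~\ref{pr:largedevsuml2}, replacing the self-intersection statistic $Q_n$ by the pair $(|R^-_{n,A}|,\gamma^-_{n,A})$ and combining a bridge supermultiplicativity argument for the existence of the rate with an exponential tilt for quantitative decay. First I would define
\[
u_{A,s,\theta}(n)=P\big(|R^-_{n,A}|\ge s\theta n,\ \gamma^-_{n,A}\le\theta n,\ S\in\cB_n\big).
\]
Inside a bridge, the two halves occupy sites with disjoint first-coordinate values, so the local times are supported on disjoint sets; hence both $|R^-|$ and $\gamma^-$ are additive under concatenation, yielding $u_{A,s,\theta}(n_1+n_2)\ge u_{A,s,\theta}(n_1)u_{A,s,\theta}(n_2)$ and, by Fekete's lemma, the existence of $J(A,s,\theta):=-\lim_n n^{-1}\log u_{A,s,\theta}(n)$. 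A Hammersley--Welsh unfolding in the spirit of Step~1 of the proof of Proposition~\ref{pr:largedevsuml2} removes the bridge constraint at sub-exponential cost; verifying that $J(A,s,\theta)/\theta$ is asymptotically independent of $\theta$ (and hence that the rate takes the form $J(A,s)\theta$ announced in the conjecture) is part of the task and requires an additional scaling argument.

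For positivity and the identification of the threshold $\gl_d(A)$ in dimensions $d\ge 3$, I would apply an exponential tilt: for $\alpha,\beta\ge 0$,
\[
P\big(|R^-_{n,A}|\ge s\theta n,\ \gamma^-_{n,A}\le\theta n\big)\le e^{(-\alpha s+\beta)\theta n}\,E\bigg[\prod_{x\in\bbZd}\phi_{\alpha,\beta,A}(\ell_n(x))\bigg],
\]
where $\phi_{\alpha,\beta,A}(\ell)=\exp\big((\alpha-\beta\ell)\ind_{\{1\le\ell\le A\}}\big)$. The right-hand side is a partition function of weakly-self-avoiding-walk type, amenable to the block-Chernoff and bridge methods of Section~\ref{s:WSAW}: cutting $[0,n]$ into blocks of length $1/\eta$ and exploiting the Markov property yields a product of i.i.d.\ factors whose log-expectation per unit time can be controlled. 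Optimizing in $(\alpha,\beta)$ yields $J(A,s)>0$ for $s>1/\gl_d(A)$, where the threshold $\gl_d(A):=\lim_{m\to\infty}\bbE[\gamma^-_{m,A}]/\bbE[|R^-_{m,A}|]$ is the typical conditional local time on trimmed sites; as $A\to\infty$, monotone convergence gives $\gl_d(A)\uparrow\gl_d$.

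For the sharp $d=2$ asymptotic $-s\log J(A(s),s)\to 1/\gl_2$ as $s\downarrow 0$, I would mimic Steps~4 and~5 of the proof of Proposition~\ref{pr:largedevsuml2}. The upper bound on $J$ uses the consecutive-strip strategy: confine the walk to $n/m$ bridges in strips of length $m=\lceil e^{1/((1+\gep)s\gl_2)}\rceil$, so that in each strip the range is $\sim m/\log m$ and the maximum local time is $O(\log m)\ll A(s)$ (which is ensured by $A(s)\gg s^{-10}$ since $\log m\sim 1/(s\gl_2)$); the trimming is thus inactive, each strip contributes $|R^{(k),A,-}|\sim m/\log m$ and $\gamma^{(k),A,-}\sim m$, and assembling the strips at cost $\sim 1/m$ each produces a probability $\asymp e^{-(n/m)\log m}$, matching the claimed scaling. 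The matching lower bound on $J$ comes from the Chernoff bound above applied with the same block length $m$ and a tilt tuned so that the linear statistic $|R^{(1),A,-}|-s\,\gamma^{(1),A,-}$ has a slightly negative drift, parallel to Step~5 of the proof of Proposition~\ref{pr:largedevsuml2}.

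The hardest step will be extracting the sharp constant $1/\gl_2$ in dimension two, which demands uniform-in-$m$ joint moment-generating estimates for $(|R^{(k),A,-}|,\gamma^{(k),A,-})$ on planar bridges of length $m$, together with a quantitative tail bound on the probability that the maximum local time in such a bridge exceeds $A(s)$ (analogous to the local-time cap used informally in Remark~\ref{rem:adaptdownard}, but with sharper constants). These two-dimensional local-time estimates are delicate and are the main technical reason the statement is currently left as a conjecture.
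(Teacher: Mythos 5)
The statement you set out to prove is stated in the paper as Conjecture~\ref{conj.LDrange}: the authors explicitly leave it \emph{open} (they call it ``the key to Conjectures~\ref{conj:sharp_beta_c} and \ref{conj:asym.smallbeta}''), so there is no paper proof to compare against. Your sketch reproduces the easy direction correctly: the consecutive-strip bridge construction with $m\asymp e^{1/(s\gl_2)}$ does give the lower bound on the probability (upper bound on $J$), and the fact that the typical per-strip maximum local time is $O(\log m)\ll A(s)$ is exactly why the trimming can be disabled there; the paper itself says this direction is an easy adaptation of Step 4 of Proposition~\ref{pr:largedevsuml2}.

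The genuine gap is in the \emph{upper bound on the probability} (lower bound on $J$), which is the actual content of the conjecture. Your exponential tilt
\[
P(\cdot)\le e^{(-\alpha s+\beta)\theta n}\,E\Big[\prod_{x}\phi_{\alpha,\beta,A}(\ell_n(x))\Big],
\qquad
\phi_{\alpha,\beta,A}(\ell)=e^{(\alpha-\beta\ell)\ind_{\{1\le\ell\le A\}}},
\]
is correct, but the claim that the resulting product is ``amenable to the block-Chernoff and bridge methods of Section~\ref{s:WSAW}'' does not hold. Those methods rely on a sub/super-multiplicative structure across blocks: for $Q_n$ one has $Q_n\ge\sum_k Q^{(k)}$, so $e^{-uQ_n}\le\prod_k e^{-uQ^{(k)}}$ with independent factors. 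Here the block decomposition of local times $\ell_n(x)=\sum_k\ell^{(k)}(x)$ gives no such domination, because $\phi_{\alpha,\beta,A}$ is \emph{not} monotone in $\ell$ and \emph{not} sub-multiplicative along $\ell\mapsto\ell_1+\ell_2$. Concretely, take $\alpha=0$, $\beta>0$, $A=1$, $\ell_1=\ell_2=1$: then $\phi(\ell_1+\ell_2)=\phi(2)=1$ while $\phi(\ell_1)\phi(\ell_2)=e^{-2\beta}<1$, so $\phi(\ell_1+\ell_2)\not\le\phi(\ell_1)\phi(\ell_2)$. This is precisely the ``rough local-time profile'' obstruction the authors describe just above Lemma~\ref{lem:rough}: a site whose total local time jumps above $A$ escapes all penalisation, and controlling the cost of producing many such sites is exactly what is unresolved. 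Without a new idea at this step your argument does not close.

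A secondary issue: the Hammersley--Welsh unfolding used in Step~1 of Proposition~\ref{pr:largedevsuml2} works because $Q_n$ can only decrease under unfolding, which is compatible with the event $\{Q_n\le tn\}$. The pair $(|R^-_{n,A}|,\gamma^-_{n,A})$ has no such monotonicity: unfolding can push a local time from above $A$ to below $A$ or vice versa, so neither $\{|R^-|\ge s\theta n\}$ nor $\{\gamma^-\le\theta n\}$ is preserved. Your statement that the bridge restriction can be removed ``at sub-exponential cost'' therefore needs a different argument. Your proposed definition $\gl_d(A)=\lim_m\bbE[\gamma^-_{m,A}]/\bbE[|R^-_{m,A}|]$ is plausible heuristically but would also need to be shown to match the threshold produced by whatever argument replaces the failed tilt. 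In short, you correctly identify the shape of the proof, but the step you flag as hardest (sharp $d=2$ constant) is not the real obstruction; the failure of block-subadditivity for the trimmed statistic is.
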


\subsection{Towards a proof of the conjecture} 

We now propose a proof of Conjecture~\ref{conj:ubcritcurve} based on 
Conjecture~\ref{conj.LDrange}.

Recall \eqref{defZbar} and the statement of Conjecture~\ref{conj:ubcritcurve}. The 
idea is that if all the local times are small, then we get in the exponential $-f^\wsaw(u) 
+ (1-\gep) f^\wsaw(u) <0$, while if all the local times are large, then we get $0$ 
because of the indicator. We have to show that a mixture of small and large local times
contributes something in between, i.e., ``rough local-time profiles'' are costly. To that
end, decompose the range of simple random walk into two parts, corresponding to 
small and large local times:
\begin{equation}
\label{trim}
R_n^- = R_n^-(u) = \{x\in\bbZd\colon\,\ell_n(x) \leq 1/\sqrt{u}\},
\qquad R_n^+ = R_n^+(u) = \bbZ^d\backslash R_n^-(u).
\end{equation}
Using this splitting, we may write
\begin{equation}
\label{defZnugep}
\bar Z_{n,u}^{\gep} = E\Big[ e^{\sum_{x\in R_n^-} 
[-u\ell_n(x)^2 + (1-\gep)f^\wsaw(u)\ell_n(x)] } \Big].
\end{equation}
Let
\begin{equation}
\label{defgn}
\gamma_n^- = \sum_{x\in R_n^-} \ell_n(x) = \sum_{i=1}^n \ind_{\{S_i \in R_n^-\}}
\end{equation}
be the time spent in $R_n^-$. Decompose $\bar Z_{n,u}^{\gep}$ according to the value 
taken by $\gamma_n^-$:
\begin{equation}
\label{Znugepdec}
\bar Z_{n,u}^{\gep} = \sum_{k=0}^{1/\eta} 
\bar Z_{n,u}^{\gep}\Big( \frac{\gamma_n^-}{n} \in [k\eta, (k+1)\eta) \Big), 
\qquad \eta>0,\,1/\eta \in \N.
\end{equation}
We know that
\begin{equation}
\label{Znugepkzero}
\bar Z_{n,u}^{\gep} \Big( \frac{\gamma_n^-}{n} \in [0,\gd) \Big) \leq 
e^{(1-\gep)f^\wsaw(u)\,\gd n}.
\end{equation}
Suppose for now that we have the following lemma (we explain below how it follows 
from Conjecture~\ref{conj.LDrange}):

\begin{lemma}
\label{lem:rough}
For every $\gep>0$, $\eta< \gep^3, k\geq \gep^{-2}$ and $0<u \leq u_0(\gep)$, 
\begin{equation}
\label{Znutarget}
\bar Z_{n,u}^{\gep}\Big( \frac{\gamma_n^-}{n} \in [k\eta, (k+1)\eta) \Big) 
\leq e^{-\frac12 \gep(1-2\gep) f^\wsaw(u)\, k \eta\, n},
\qquad k \in \N.
\end{equation}
\hfill \qed
\end{lemma}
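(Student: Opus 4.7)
The plan is to split $\bar Z_{n,u}^{\gep}(\gamma_n^-/n\in[k\eta,(k+1)\eta))$ into two contributions, according to whether the trimmed range $|R_n^-|$ is small or large compared with $\gamma_n^-$, using the threshold
\[
s_* \;=\; s_*(u,\gep) \;=\; \frac{u}{(1-\tfrac12\gep)\, f^\wsaw(u)}.
\]
On the ``concentrated'' event $\{|R_n^-|\le s_*\gamma_n^-\}$, the key tool is Cauchy--Schwarz: since $\sum_{x\in R_n^-}\ell_n(x)=\gamma_n^-$, we have $\sum_{x\in R_n^-}\ell_n(x)^2\ge (\gamma_n^-)^2/|R_n^-|\ge \gamma_n^-/s_*$, and hence the integrand in \eqref{defZnugep} is at most $\exp\bigl([-u/s_*+(1-\gep)f^\wsaw(u)]\gamma_n^-\bigr)=\exp(-\tfrac12\gep\, f^\wsaw(u)\gamma_n^-)$ by the choice of $s_*$, which is dominated by $\exp(-\tfrac12\gep\, f^\wsaw(u)\,k\eta n)$ and is comfortably below the target bound \eqref{Znutarget}.

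On the complementary ``spread-out'' event $\{|R_n^-|>s_*\gamma_n^-\}$, bound the integrand trivially by $\exp((1-\gep)f^\wsaw(u)(k+1)\eta n)$ and apply Conjecture~\ref{conj.LDrange} with $A=1/\sqrt u$, $\theta=(k+1)\eta$ and $s=s_*k/(k+1)$ to dominate the probability of this event by $\exp(-J(A,s_*k/(k+1))(k+1)\eta n)$. Using $k\ge 1/\gep^2$, hence $k/(k+1)\ge 1-\gep^2$, and collecting the exponents, the task reduces to verifying
\[
J\bigl(A,\,(1-\gep^2)s_*\bigr)\;\ge\;(1-\tfrac12\gep)\,f^\wsaw(u)\qquad \text{for } u\le u_0(\gep),
\]
which leaves a positive slack of order $\gep^2 f^\wsaw(u)$ in the exponent, exactly what is needed to produce the factor $(1-2\gep)$ in \eqref{Znutarget}.

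The final step is to check this lower bound on $J$ using the asymptotics in \eqref{eq:Ascal}. When $d\ge 3$, Proposition~\ref{pr:wsaw} gives $s_*\to 1/[(1-\tfrac12\gep)\gl_d]>1/\gl_d$, while $\gl_d(A)\to\gl_d$ as $A=1/\sqrt u\to\infty$; thus Conjecture~\ref{conj.LDrange} provides a positive constant lower bound on $J(A,(1-\gep^2)s_*)$ that trivially dominates $f^\wsaw(u)=O(u)$. When $d=2$ both sides vanish: $s_*\sim 1/[(1-\tfrac12\gep)\gl_2\log(1/u)]$ and $f^\wsaw(u)\sim \gl_2 u\log(1/u)$; the choice $A=1/\sqrt u$ satisfies $A\gg s_*^{-10}$ (the right-hand side being polylogarithmic in $1/u$), so the scaling $-s\log J(A,s)\to 1/\gl_2$ from \eqref{eq:Ascal} yields $J(A,s_*)\ge u^{(1-\tfrac12\gep)+o(1)}$, which beats $(1-\tfrac12\gep)\gl_2 u\log(1/u)$ for $u$ small.

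I expect the main obstacle to be the two-dimensional matching, because both $f^\wsaw(u)$ and the rate function $J(A,s_*)$ are governed by logarithmic factors in $1/u$ and the constants must line up with a strictly positive gap that is quantified by $\gep$; this is precisely the content of the sharp-constant assertion in \eqref{eq:Ascal} and the reason the mild growth condition $A(s)\gg s^{-10}$ is indispensable. By contrast, the verification for $d\ge 3$ is soft, as $J(A,s_*)$ stays bounded below by a constant while $f^\wsaw(u)$ vanishes linearly in $u$.
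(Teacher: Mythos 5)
Your proof is correct and follows essentially the same approach as the paper: you split the expectation by whether $|R_n^-|$ is small or large relative to $\gamma_n^-$, whereas the paper first extracts $e^{(1-\gep)f^{\wsaw}(u)\theta n}$ and splits by whether $uQ_n^-$ is above or below $(1-\tfrac12\gep)f^{\wsaw}(u)\theta n$, but these are dual views of the same Cauchy--Schwarz/Jensen inequality $Q_n^-\ge(\gamma_n^-)^2/|R_n^-|$ with the same threshold $s_*=u/[(1-\tfrac12\gep)f^{\wsaw}(u)]$. Both routes then invoke Conjecture~\ref{conj.LDrange} in the same way and verify $J\gg f^{\wsaw}(u)$ from the same dimension-dependent asymptotics, the $d=2$ case being sharp for the reasons you identify.
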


\noindent
Combining \eqref{Znugepdec}--\eqref{Znutarget}, we find that, splitting the sum \eqref{Znugepdec} 
at $k=1/\gep^2$,
\begin{equation}
\limsup_{n\to\infty} \frac{1}{n} \log \bar Z_{n,u}^{\gep}
\leq (1-\gep) f^\wsaw(u)\, \gep^{-2} \eta.
\end{equation}
Since $u_0(\gep)$ does not depend on $\eta$, the right-hand side tends to zero as $\eta\downarrow 0$, 
and so we get the claim in \eqref{targetpropZ}, i.e., Conjecture~\ref{conj:ubcritcurve}.

It remains to give the proof of Lemma~\ref{lem:rough} based on Conjecture~\ref{conj.LDrange} above. 

\begin{proof} 
Recall \eqref{defZnugep}--\eqref{defgn}. Estimate, abbreviating $\theta = (k+1)\eta$,
\begin{equation}
\label{Zest1}
\begin{aligned}
\bar Z_{n,u}^{\gep}\Big( \frac{\gamma_n^-}{n} \in [k\eta, (k+1)\eta) \Big) 
&= E\left[e^{\sum_{x \in R_n^-} \big[-u\ell_n(x)^2 + (1-\gep) f^\wsaw(u)\ell_n(x)\big]}
\ind_{\{\gamma_n^-\in [\theta-\eta,\theta)n\}}\right]\\
&= e^{(1-\gep) f^\wsaw(u) \theta n}\,
E\left[ e^{-uQ_n^-} \ind_{\{ \gamma_n^- \in [\theta-\eta,\theta) n \}} \right],
\end{aligned}
\end{equation}
where $Q_n^- = \sum_{x\in R_n^-} \ell_n(x)^2$. Estimate
\begin{equation}
\label{Zest2}
\begin{aligned}
E&\left[ e^{-uQ_n^-} \ind_{\{ \gamma_n^- \in [\theta-\eta,\theta) n \}} \right]\\
&= E\left[ e^{-uQ_n^-} 
\ind_{\{uQ_n^- > (1-\tfrac12\gep) f^\wsaw(u)\, \theta n\}} 
\ind_{\{ \gamma_n^- \in [\theta-\eta,\theta) n \}} \right]\\
&\qquad + E\left[ e^{-uQ_n^-}
\ind_{\{uQ_n^- \leq (1-\tfrac12\gep) f^\wsaw(u)\, \theta n\}} 
 \ind_{\{ \gamma_n^- \in [\theta-\eta,\theta) n \}} \right]\\
&\leq e^{-(1-\tfrac12\gep) f^\wsaw(u)\, \theta n}
+ P\Big(uQ_n^- \leq (1-\tfrac12\gep) f^\wsaw(u)\,\theta n,\, 
\gamma_n^- \in [\theta-\eta,\theta) n\Big).
\end{aligned}
\end{equation}
The first term in the right-hand side of \eqref{Zest2} contributes a term $e^{-\tfrac12\gep
f^\wsaw(u)\,\theta n}$ to the right-hand side of \eqref{Zest1}, which fits the estimate we 
are after. By Jensen's inequality, $Q_n^- \geq (\gamma_n^-)^2 / |R_n^-|$. Hence 
the probability in the right-hand side of \eqref{Zest2} is bounded from above by
\begin{equation}
\label{Zest3}
\begin{split}
P\left(|R_n^-| \geq \left[\frac{u}{(1-\tfrac12\gep)f^\wsaw(u)}\right]\, 
\frac{(\gamma_n^-)^2}{\theta n},\,\gamma_n^- \in [\theta-\eta,\theta) n\right) \\
\leq 
P\left(|R_n^-| \geq \left[\frac{(1-\gep^2)^2}{1-\tfrac12\gep} \frac{u}{f^\wsaw(u)}\right]
 \theta n,\,\gamma_n^- \leq \theta n\right),
\end{split}
\end{equation}
where we use that $\frac{(\gamma_n^-)^2}{\theta n} \geq (1-\tfrac\eta\theta)^2\,\theta n$ and 
choose $k\geq \gep^{-2}$ to make that $\eta/\theta = 1/(k+1) \leq \gep^2$.
 
\medskip\noindent
$\bullet$ $d \geq 3$. 
Choose $u$ small enough so that $f^\wsaw(u) \leq (1+\tfrac\gep5) \gl_d u$. 
Then, provided we fixed $\gep>0$ small enough, we have
\begin{equation}
\label{Zest4}
\eqref{Zest3} \leq P\left(|R_n^-| \geq \left[\frac{1+\tfrac14 \gep}{\gl_d}\right]
 \theta n,\,\gamma_n^- \leq \theta n\right).
\end{equation}
By Conjecture~\ref{conj.LDrange}, the latter probability is bounded from above by 
$e^{-c(\gep,u)\,\theta n}$ for some $c(\gep,u)>0$, provided that
\be
\frac{1}{\gl_d(1/\sqrt{u})} < \frac{1+\tfrac14 \gep}{\gl_d},
\ee
which holds when $1/\sqrt{u}$ exceeds a certain threshold $A=A(\gep)$. Hence, by
\eqref{eq:Ascal}, there is a $u_0=u_0(\gep)$ such that $f^{\wsaw} (u) \leq c(\gep,u)$ 
for all $0<u\leq u_0$, and we get that \eqref{Zest4} is smaller than $e^{- f^{\wsaw}(u) \theta n}$. 
This settles the claim in \eqref{Znutarget} because $k\eta \leq (k+1)\eta = \theta$.

\medskip\noindent
$\bullet$ $d=2$. 
Choose $u$ small enough so that $f^\wsaw (u) \leq (1+\tfrac\gep5) \gl_2 u \log (1/u)$. Then, 
provided we fixed $\gep$ small enough, we have
\begin{equation}
\label{Zest5}
\eqref{Zest3} \leq P\left(|R_n^-| \geq \left[\frac{1+\tfrac14 \gep}{\gl_2 \log(1/u)}\right]
 \theta n,\,\gamma_n^- \leq \theta n\right).
\end{equation}
By Conjecture~\ref{conj.LDrange}, the latter probability is bounded from above by $e^{- c(\gep, u)
\theta n}$ with $\log c(\gep,u)$ $\geq  -(1+\tfrac\gep5) \frac{\log(1/u)}{1+\gep/4}$ for $u$ sufficiently 
small. In particular, $c(\gep,u) \geq u^{1-\gep/20} \gg f^{\wsaw}(u) $ as $u\downarrow 0$. 
Consequently, there is an $u_0=u_0(\gep)$ such that \eqref{Zest3} is smaller than $e^{-f^{\wsaw}(u)
\theta n}$ for $0<u\leq u_0$. This again settles the claim in \eqref{Znutarget}.
\end{proof}


\end{appendix}



\end{document}